\let\oldfootnote\footnote
\renewcommand{\footnote}[1]{%
    \begingroup%
    \linespread{1}
    \oldfootnote{#1}%
    \endgroup%
}
\newtheorem{thm}{\protect\theoremname}
  \theoremstyle{plain}
\newtheorem{prop}[thm]{\protect\propositionname}
\newtheorem{defn}[thm]{\protect\definitionname}
\newtheorem{cor}[thm]{\protect\corollaryname}
\newtheorem{conj}[thm]{\protect\conjecturename}
\newtheorem{lemma}[thm]{\protect\lemmaname}
\DeclareMathOperator\arctanh{arctanh}
  \providecommand{\definitionname}{Definition}
  \providecommand{\propositionname}{Proposition}
  \providecommand{\theoremname}{Theorem}
  \providecommand{\corollaryname}{Corollary}
  \providecommand{\conjecturename}{Conjecture}
  \providecommand{\lemmaname}{Lemma}
\newtheorem{definition}{Definition}
\DeclareMathOperator{\sech}{sech}
\def\ie{{\em i.e.\/}}  
\def\eg{{\em e.g.\/}}
\def\beq{\begin{equation}}
\def\eeq{\end{equation}}
\def\be{\begin{equation}}
\def\ee{\end{equation}}
\def\bea{\begin{eqnarray}}
\def\eea{\end{eqnarray}}
\renewcommand{\theequation}{\arabic{section}.\arabic{equation}}
\newcommand{\eq}[1]{Eq.~\ref{#1}}
\newcommand{\eqs}[2]{Eqs.~\ref{#1} and \ref{#2}}
\newcommand{\fig}[1]{Fig.~\ref{#1}}
\newcommand{\thref}[1]{Thm.~\ref{#1}}
\def\a{\alpha}
\begin{document}
\count\footins = 1000 

\title{Eternal Universes}
\author{Damien A. Easson\footnote{easson@asu.edu}}
\affiliation{
Department of Physics \& Beyond Center for Fundamental Concepts in Science,  
Arizona State University, Tempe, AZ 85287-1504, USA}
\author{Joseph E. Lesnefsky\footnote{joseph.lesnefsky@pm.me}}
\affiliation{ 
Beyond Center for Fundamental Concepts in Science,  
Arizona State University, Tempe, AZ 85287-1504, USA}

\date{\today}

\begin{abstract}
We consider the possibility of a past and future eternal universe, constructing geodesically complete inflating, loitering, and bouncing spacetimes. We identify the constraints energy conditions in General Relativity place on the building of eternal cosmological models. Inflationary and bouncing behavior are shown to be essential ingredients in all significant examples. Non-trivial complete spacetimes are shown to violate the null energy condition (NEC) for at least some amount of time, as well as the average null energy condition unless curvature is included.
Ignoring the intractable subtleties introduced by quantum considerations, such as rare tunneling events and Boltzmann brains, 
we demonstrate that these universes need not have had a beginning or an end. 
\end{abstract}
\maketitle

\tableofcontents
\section{Introduction}\label{intro}
Contemplating the vastness of the universe is an innate aspect of human curiosity. We are not only drawn to ponder the physical expanse of space and whether the universe stretches infinitely but also the expanse of the dimension of time. Could the universe have endured for an eternity into the past and what will be its ultimate fate? In this paper we explore these questions from a conservative viewpoint using simple arguments from General Relativity (GR) and a field theoretic treatment of matter and energy. Our goal is to be as rigorous as possible using our known, tested laws of physics. We operate within a framework of metrical geometry and apply the results to classical Einstein gravity. Within this context we provide examples of geodesically complete past and future eternal cosmological solutions. 

Whether or not the universe had a beginning is a question that naturally arises within our most successful theory of the early universe, the theory of inflation~\cite{Guth:1980zm,Linde:1981mu,Albrecht:1982wi}. Soon after the theory's inception it became clear that once inflation started, it would never stop. While early-time inflation has ceased in our observable universe, allowing galaxies and structures to form, it continues indefinitely in other parts of the universe, and hence, inflation is eternal into the future. Specifically, quantum fluctuations in the field responsible for the accelerated expansion can cause the field to remain in its high-energy state, sustaining inflation in those regions indefinitely~\cite{Vilenkin:1983xq,Linde:1986fd,Guth:2007ng}. In a Letter by Borde, Guth and Vilenkin (BGV) \cite{Borde:2001nh}, it was subsequently argued that inflation cannot likewise be eternal into the past; although, we have countered this belief \cite{Easson:2024uxe,Lesnefsky:2022fen} and present further evidence that inflation can, in fact, be eternal into the past in the current work. Acknowledging the possibility of a past eternal universe opens entirely new avenues of research.  We show there are multiple ways that the universe can sustain itself within classical General Relativity for an eternity, albeit at the expense of violation of certain energy conditions.

Three broad categories of an eternal universe are suggested. Our examples are admittedly toy models; however, given the wide variety we expect our findings to be quite general and applicable to many solutions of interest. 
In particular, we will construct eternal inflationary, loitering, and bouncing cosmologies. In each case we will provide a rigorous proof of full geodesic completeness. Existing proposed no-go theorems are circumvented or obviated as discussed in \cite{Easson:2024uxe}. 
We show that all (non-trivial) models that are eternal geodesically complete necessarily involve a phase of accelerated expansion (or inflationary phase) for at least some amount of time and generically include a bouncing phase in accordance with our reported definitions.

The models we present violate the traditional energy conditions, a property we conjecture is necessary in GR for geodesic completeness.~\footnote{A contained discussion of energy conditions is offered in Appendix~\ref{appa}.} In some instances violations of the null energy condition (NEC) can be limited to an arbitrarily short amount of time $\Delta t$, and confined to the primordial past;~\footnote{In its most provocative form, the Heisenberg uncertainty principle is written $\Delta E \Delta t \ge \hbar/2$. 
If the time interval is made very small the uncertainty principle may result in a very large energy fluctuation which can back-react on the spacetime.}
Although, the models presented here ultimately violate the average null energy condition (ANEC) in a flat universe.

In several known circumstances, energy condition violations can be relatively benign. For example, a simple time-independent cosmological constant $\Lambda$ violates the SEC if $\Lambda>0$ and the WEC and DEC for $\Lambda<0$. 
The NEC is the only energy condition which remains valid in the presence of $\Lambda$. Non-minimally coupled scalar fields are capable of violating all of the traditional energy conditions \cite{Barcelo:2002bv,Chatterjee:2012zh}. Indeed, all classical energy conditions are manifestly violated by quantum effects both experimentally and theoretically \cite{Epstein:1965zza}, (e.g.,~the Casimir Effect \cite{Casimir:1948dh}). Even the NEC is violated ubiquitously in particle physics, for example, during the Hawking radiation process local energy density can become negative from the perspective of certain observers and interpreted as violation of NEC. In theories outside of pure GR, which can mix gravitational and matter degrees of freedom in a non-trivial way through non-minimal and higher-derivative terms or couplings, it becomes difficult to adequately define analogues of energy conditions in the first place (see \eg, \cite{Chatterjee:2012zh}).

As is well known, violation of energy conditions are necessary in order to avoid singularities. Within classical GR non-trivial flat cosmological solutions which avoid singularities \it must \rm violate energy conditions to circumvent the singularity theorems~\cite{Penrose:1964wq,Hawking:1966sx,Hawking:1966jv,Hawking:1967ju,Hawking:1970zqf,Hawking:1973uf}. On occasion, the NEC can be violated in a stable way~\cite{Dubovsky:2005xd, Nicolis:2009qm, Kobayashi:2010cm,Deffayet:2010qz,Sawicki:2012pz,Rubakov:2014jja,Cai:2016thi,Creminelli:2016zwa,Cai:2017tku,Easson:2018qgr,Alexandre:2021imu,Alexandre:2023iig}, sometimes allowing for well-behaved cosmological bouncing solutions~\cite{Easson:2011zy,Cai:2012va, Easson:2016klq,Ijjas:2016tpn, Ijjas:2016vtq,Cai:2017dyi,Alexandre:2023pkk}.


Currently known consistent quantum field theories in flat spacetime obey the \emph{averaged} null energy condition (ANEC), for a complete achronal null geodesic (those that do not contain any points connected by a timelike curve)~\cite{Wald:1991xn,Graham:2007va, Hartman:2016lgu}. Effectively, the ANEC condition quantifies the degree of violation of the null energy condition. The condition is expected to rule out closed timelike curves, time machines and wormholes connecting different asymptotically flat regions \cite{Morris:1988tu,Friedman:1993ty}. Matter violating the ANEC can be used to violate the second law of thermodynamics \cite{Wall:2009wi}.

The ANEC may be stated 
\begin{equation}\label{anec}
    \int_\gamma T_{\mu\nu} k^\mu k^\nu d\lambda \ge 0 \,,
\end{equation}
where the integral is over a null geodesic $\gamma$ with tangent vector $k^\mu = dx^\mu/d\lambda$, and $\lambda$ is an affine parameter with respect to which the tangent vector to the geodesic is defined. In GR, it is frequently convenient to replace $T_{\mu \nu}$ in \eq{anec} by the Ricci tensor $R_{\mu\nu}$, simply using the Einstein field equations and the definition of a null vector. The ANEC has been shown to hold even in the case of arbitrary Casimir systems as long as the geodesic does not intersect or asymptotically approach the plates \cite{Graham:2005cq,Fewster:2006uf}. If the null geodesic is not assumed to be achronal, the ANEC may be violated by a quantum scalar field in a spacetime compactified in one spatial dimension or in the spacetime around a Schwarzschild black hole \cite{Visser:1996iv}. 

Given the many nuances involved with respect to energy condition violation and stability, it can be difficult to pass definitive judgment on the validity of solutions which experience such violations. In this work we refrain from making such criticisms as more rigorous analysis of specific situations is needed; however, it is clear that at least some violations do not lead to pathologies and ultimately singularity formation is arguably a worst-case malady plaguing a given physical solution. Yet many physically relevant and useful solutions in GR contain such singularities, even at the level of the vacuum solutions, for example, the singular Schwarzschild and Kerr black hole solutions.

After assembling a diverse grouping of possible eternal universe models, our analysis demonstrates that there are no known obstructions indicating the universe must have had a ``first moment". In the context of the inflationary universe paradigm, this translates into the statement that inflation need not have had a beginning.

A few of the Theorems and Corollaries we discuss were presented without proof in our earlier work \cite{Lesnefsky:2022fen,Easson:2024uxe}. The proofs are revealed presently.

An outline of this paper is as follows. In section \ref{geodcomp}, we prove a theorem which definitively provides the criteria for geodesic completeness of generalized Friedmann-Robertson-Walker (GFRW) spacetimes. In section \ref{inflate}, we discuss the definition of inflation and bouncing cosmologies and provide an example of an eternally inflating universe which is eternal and geodesically complete both into the past and future. In section \ref{bounce}, we provide examples of eternal bouncing cosmologies which are geodesically complete both into the past and future. In section \ref{loiter}, we provide examples of loitering (and quasi-loitering) models which are eternal and geodesically complete both into the past and future.
Finally, in section \ref{conclude}, we present a discussion and our conclusions. We provide appendices to discuss the energy conditions in General Relativity with respect to cosmology, to provide proof details, to define Generalized Friedmann-Robertson-Walker (GFRW) spacetimes, and to present a detailed discussion of geodesics for several models.

\section{Geodesic Completeness}\label{geodcomp}
We begin with a rigorous discussion of what it means to be geodesically complete within the context of GFRW spacetimes (which include the familiar FRW spacetimes).  Stated formally, a GFRW spacetime is the warped product - see \cite{Bishop1969,ONeill1983} - given by $\mathbb{R}^1_1 \times_a \Sigma$ where $\left( \Sigma , g_\Sigma \right)$ is a complete Riemannian manifold and scale factor $a \in \mathcal{C}^\infty \left( \mathbb{R}^1_1 \right)$ which is smooth and $a > 0$.  To fully explore the geodesic completeness of a spacetime manifold $\mathcal{M}$, we must explore \emph{maximal} geodesics: consider a maximal geodesic \( \gamma: I \to \mathcal{M} \), where \( I \) is an \emph{open} interval of \( \mathbb{R} \), and, because geodesics are usually parameterized with constant speed, it is uniquely defined up to transversality.   A manifold is \emph{geodesically complete} if every geodesic \( \gamma \), \( I = (-\infty, \infty) \), is defined for all time (see e.g. \cite{ONeill1983}, Prop. 7.38).

We now construct a theorem quantifying the criteria for geodesic completeness of a GFRW spacetime having scale factor $a$. A geodesic $ \gamma \left( \lambda \right) = \left( t \left( \lambda \right) , \beta \left( \lambda \right) \right) $ in a warped product spacetime of\footnote{Here we utilize the hyperquadratic notation of \cite{ONeill1983} where $\mathbb{R}^n_k$ is the topological space $\mathbb{R}^n$ with semi-Riemannian metric of signature $\{ -1 , \overset{k-2}{\cdots} , -1 , +1 , \overset{n-k-2}{\cdots} , +1 \}$.} $\mathbb{R}^1_1 \times_a \Sigma$ obeys
\begin{equation}
   \nabla_{\frac{\partial}{\partial t}} \frac{\partial}{\partial t} = g_\Sigma \left( \beta ' , \beta ' \right) \left( a \left( t \left( \lambda \right) \right) \right)\left( \mathrm{grad}_{-dt^2} a \right) \label{eq:gfrw geodesic equation1} \,,
\end{equation} 
in $\mathbb{R}^1_1$ and 
\begin{equation}
  \nabla_{\beta '} \beta ' = -\frac{2}{a \left( t \left( \lambda \right) \right)} \left( \beta ' \right) \frac{d}{d \lambda} \left[ a \left( t \left( \lambda \right) \right) \right] \label{eq:gfrw geodesic eq2} \,,
\end{equation}
in Riemannian manifold spacelike leaf $\left( \Sigma , g_\Sigma \right)$ where the geodesic equation $\nabla_{\gamma '} \gamma ' = 0$ has been projected down into the respective spaces $\mathbb{R}^1_1$ and $\left( \Sigma , g_\Sigma \right)$.  The coupled ordinary differential equations (ODE) can be solved exactly yielding an integral solution as a functional of $a$.   
With appropriately selected initial conditions, one finds
\begin{equation}
    \lambda \left( t \right) = \int^{t} \frac{a \left( \zeta \right)}{\sqrt{a^2 \left( \zeta \right) + 1}} d \zeta \label{eq:gfrw soltn timelike integral} \,,
\end{equation}
as an integral solution to \eqs{eq:gfrw geodesic equation1}{eq:gfrw geodesic eq2} for timelike initial conditions and 
\begin{equation}
    \lambda \left( t \right) = \int^{t} a \left( \zeta \right) d \zeta \label{eq:gfrw soltn null integral} \,,
\end{equation}
for null initial conditions.  To solve the ODE we invert $\lambda \left( t \right)$, instead of using $t \left( \lambda \right)$; which is auspicious, because if \eqs{eq:gfrw soltn timelike integral}{eq:gfrw soltn null integral} diverge then the domain of $\gamma$ is infinite and the geodesic ray has domain $\left[ 0 , \infty \right)$ and is complete.

Eq. \ref{eq:gfrw geodesic eq2} defines a pregeodesic and thus has a geodesic reparameterization.  Appropriately reparameterizing, there is a geodesic in $\Sigma$ which has $\beta$ as its image: thus, if $\left( \Sigma , g_\Sigma \right)$ is complete as a Riemannian manifold, then the limiting factor to geodesic completeness is Eq.~\ref{eq:gfrw geodesic equation1} with solutions of \eqs{eq:gfrw geodesic equation1}{eq:gfrw geodesic eq2}.  The assumption of ``generality'' in a GFRW is the least restrictive logical assumption needed to have \eqs{eq:gfrw geodesic equation1}{eq:gfrw geodesic eq2}  characterize geodesic completeness of GFRWs.
Hence, the criteria for geodesic completeness of a GFRW spacetime is given by Thm.~2 of \cite{Lesnefsky:2022fen} (which builds on previous work \cite{Sanchez1998}). The theorem states:
 \begin{thm}{}
        Let $\mathcal{M} = \mathbb{R}^1_1 \times_a \Sigma$ be a GFRW spacetime.
        \begin{enumerate}
            \item The spacetime $\mathcal{M}$ is future timelike complete iff $ \int_{t_0}^\infty \frac{a \left( \zeta \right) d\zeta}{\sqrt{\left( a \left( \zeta \right) \right)^2 + 1}}$ diverges for all $t_0 \in \mathbb{R}$. 
            \item The spacetime $\mathcal{M}$ is future null complete iff $ \int_{t_0}^\infty a \left( \zeta \right) d\zeta$ diverges for all $t_0 \in \mathbb{R}$. 
            \item The spacetime $\mathcal{M}$ is future spacelike complete iff $\mathcal{M}$ is future null complete and $a < \infty$. 
            \item The GFRW is past timelike / null / spacelike complete if, for items 1-3 above, upon reversing the limits of integration from $\int_{t_0}^\infty$ to $\int_{-\infty}^{t_0}$ the word ``future'' is replaced by ``past''.
            \item The spacetime $\mathcal{M}$ is \emph{geodesically complete} iff it is both future and past timelike, null, and spacelike geodesically complete.
        \end{enumerate}
        \label{ledthm}
    \end{thm}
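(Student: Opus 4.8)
The plan is to collapse the coupled system \eqs{eq:gfrw geodesic equation1}{eq:gfrw geodesic eq2} into a single autonomous first-order ODE for $t(\lambda)$ and then read off completeness from the total affine length accumulated as $t$ runs to its extremes. As the excerpt already observes, completeness of $(\Sigma,g_\Sigma)$ disposes of the fiber factor, so the entire question reduces to the base equation. First I would extract the first integral hidden in \eq{eq:gfrw geodesic eq2}: setting $L \equiv g_\Sigma(\beta',\beta')$ and differentiating along $\gamma$ gives $dL/d\lambda = -(4/a)(da/d\lambda)\,L$, so $g_\Sigma(\beta',\beta') = C\,a^{-4}$ for a constant $C \ge 0$ fixed by the initial data. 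Next I would use that the causal character $g(\gamma',\gamma') = \epsilon$ is conserved, since $\tfrac{d}{d\lambda} g(\gamma',\gamma') = 2\,g(\nabla_{\gamma'}\gamma',\gamma') = 0$, with $\epsilon = -1,0,+1$ for timelike, null, and spacelike geodesics. Feeding the first integral into $-\dot t^{\,2} + a^2 g_\Sigma(\beta',\beta') = \epsilon$ yields the quadrature $\dot t^{\,2} = C a^{-2} - \epsilon$, whose separation of variables reproduces \eqs{eq:gfrw soltn timelike integral}{eq:gfrw soltn null integral} after normalizing $C$, and produces the analogue $\lambda(t) = \int^{t} a/\sqrt{C - a^2}\,d\zeta$ in the spacelike case.

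The core of the equivalence is an escape argument. Because $a$ is smooth and strictly positive, the right-hand side $\sqrt{Ca^{-2}-\epsilon}$ is smooth and locally bounded in $t$, so the ODE continuation theorem forces a maximal solution to extend in $\lambda$ until $t \to \pm\infty$. For timelike ($\dot t^{\,2} \ge 1$) and null ($\dot t^{\,2}=Ca^{-2}$, $C>0$) geodesics $\dot t$ never vanishes, so $t(\lambda)$ is strictly monotone and the future affine endpoint is exactly $\lambda_+ = \lim_{t\to\infty}\lambda(t)$. Future completeness is thus equivalent to $\lambda_+ = +\infty$, i.e.\ to divergence of the integrals in items 1 and 2; moreover the divergence of $\int^{\infty} a/\sqrt{a^2+c}\,d\zeta$ is independent of $c$ (it tends to $1$ where $a$ is bounded below and behaves like $\int a$ where $a\to 0$), so the single representative integral in item 1 captures every timelike geodesic. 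Replacing $t\to\infty$ by $t\to-\infty$, equivalently reversing the geodesic's orientation, gives the past statements of item 4 verbatim.

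The delicate case, and the \emph{main obstacle}, is the spacelike one in item 3, because $\dot t = \sqrt{Ca^{-2}-1}$ may vanish: the constraint forbids the region where $a^2 > C$ and generates turning points at $a^2 = C$. I would first verify that these turning points are traversed in finite affine parameter (the singularity $1/\sqrt{C-a^2}$ is integrable against $dt$ when $a'\neq 0$), so they are genuine smooth reflections of the $t$-coordinate rather than affine endpoints; hence finiteness of the affine length is again decided by the $t\to\pm\infty$ behavior. Two regimes must then be controlled. When $a$ is bounded, the comparison $a/\sqrt{C} \le a/\sqrt{C-a^2} \le a/\sqrt{\,C-(\sup a)^2\,}$ for $C > (\sup a)^2$ shows the spacelike affine integral diverges exactly when $\int a\,d\zeta$ does, tying spacelike completeness to null completeness; this is the role of the hypothesis $a<\infty$. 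When $a$ is unbounded one must instead track how the turning-point structure redirects geodesics into the small-$a$ region, where $\dot t$ blows up and infinite coordinate time is covered in finite affine parameter. Carefully matching these asymptotic regimes to the stated condition is the part I expect to require the most bookkeeping.

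Finally, item 5 follows by assembly. Since $g(\gamma',\gamma')$ is constant, every maximal geodesic has a well-defined causal type, the three types exhaust all geodesics, and a geodesic is complete precisely when it extends to both $\lambda\to+\infty$ and $\lambda\to-\infty$. Hence $\mathcal{M}$ is geodesically complete iff it is future and past complete for each of the timelike, null, and spacelike classes, which is exactly the conjunction of items 1--4. No new estimate enters here, only the observation that the causal character cannot change along a geodesic.
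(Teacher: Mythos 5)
Your proposal is correct and follows essentially the same route as the paper: reduce the warped-product geodesic equations to the quadrature $\lambda(t)=\int a/\sqrt{w_0 a^2+1}\,d\zeta$ via the conserved quantity $a^4 g_\Sigma(\beta',\beta')$ (the paper's ``useful constant'' $\xi^2$ in its appendix), identify completeness with divergence of this integral as $t\to\pm\infty$ using monotonicity of $t(\lambda)$ in the causal cases, and settle the spacelike case by a two-sided comparison with the null integral. Your tracking of the geodesic-dependent constant $C$ (noting divergence is independent of $C>0$) and of the spacelike turning points at $a^2=C$ is in fact somewhat more careful than the paper's own appendix, which normalizes $C=1$ and only states the lower-bound comparison explicitly.
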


    \begin{proof}
    The definition of a geodesically complete manifold \(\mathcal{M}\) is a manifold where all geodesics are defined for all \(\mathbb{R}\). Given the structure of \(\mathbb{R}^1_1 \times_a \Sigma\) the geodesic equation projections of Eqs. \ref{eq:gfrw geodesic equation1},\ref{eq:gfrw geodesic eq2} reduce to an integral over \(\mathbb{R}^1_1\). One must discriminate the space \(\mathbb{R}^1_1 \times_{\exp t} \Sigma\), appropriate for the cosmological flat slicing of de Sitter space,  which is known to be past geodesically incomplete (see \eg \cite{ONeill1983}, Example 7.41). The integration \(\int_{\mathbb{R}^1_1} \frac{d\zeta a}{\sqrt{a^2 + 1}}\) diverges, yet the model is past incomplete, in particular, the mass density from the (complete) future divergence obfuscates the mass density of the (incomplete) past convergence.

Using future and / or past directed maximal geodesic rays, for example \(\gamma : \left[ 0 , b \right) \rightarrow \mathcal{M}\), completeness requires for all initial points \(p \in \mathcal{M}\), the endpoint \(b \rightarrow \infty\). In the case of a geodesic \(\gamma \left( \lambda  \right) = t \left( \lambda \right) \oplus x^k \left( \lambda \right)\) of \(\mathbb{R}^1_1 \times_a \Sigma\), integral divergence is required for all future directed geodesic rays 
\begin{equation}\label{geodrayf}
    \int_{t_0}^\infty \frac{d\zeta a \left( \zeta \right)}{\sqrt{w_0 a^2 \left( \zeta \right) + 1}} \,,
\end{equation}
and past directed geodesic rays
    \begin{equation}\label{geodrayp}
    \int_{-\infty}^{t_0} \frac{d\zeta a \left( \zeta \right)}{\sqrt{w_0 a^2 \left( \zeta \right) + 1}} \,,
\end{equation}
because in both of these cases the integral computes the range of the affine parameter \(\lambda \left( t \right)\) and if \(\lambda \left( t \right)\) diverges, no finite \(b\) exists for domain \(\left[ 0 , b \right)\). Independent of causal character this proves item (4). The integral constant \(w_0\) encapsulates causal character with \(w_0 \in \left\{ +1 , 0 , -1 \right\}\) for timelike, null, and spacelike, respectively. Items (1), (2) trivially follow from evaluation of \(w_0 = +1\) and \(w_0 = 0\). 

The proof of Item (3) pertaining to spacelike completeness is less straightforward and not relevant for our primary discussion and hence, relegated to Appendix~\ref{proofof3}. 

\end{proof}

\subsection{Singularities}
In our effort to build geodesically complete spacetimes, we pause to further consider the catastrophes which can prevent completeness. Perhaps the most common object leading to incomplete geodesics is the curvature singularity, where a curvature invariant such as the Kretschmann scalar $K= R_{\mu\nu\kappa\sigma}R^{\mu\nu\kappa\sigma}$, built from the Riemann tensor, diverges at a particular place, or time. Infamous examples of curvature singularities lie at the center of the Schwarzschild black hole or at the cosmological Big-Bang. Generally, curvature singularities occur in regions of spacetime with high curvature, where quantum gravity effects are expected to be significant. In some of our models, it is possible to arrange parameter values so that the energy and curvature scales remain significantly lower than the Planck regime for the entire cosmological evolution, thereby avoiding conditions where quantum gravity is necessary. Consequently, quantum gravity might not play a critical role in the development of an eternal universe.

In order to avoid cosmological singularities it is necessary to violate classical energy conditions in GR. These violations are anticipated to manifest close to the would-be singularity formation points and, occasionally, can be confined to an arbitrarily brief period. Arguably these energy condition violations are a lesser inconvenience when weighed against the formation of singularities, at which point all established physical laws cease to apply.

Of course, it is possible for a spacetime to be geodesically incomplete even without encountering a curvature singularity. The most renowned example is the incompleteness encountered in the flat cosmological slicing of de Sitter space, where an observer following a geodesic into the past encounters a boundary in finite proper time (for a recent discussion see \cite{Kinney:2023urn,Geshnizjani:2023hyd}). 

Yet another, arguably less common malady is a \emph{big rip}, \cite{Caldwell:2003vq}, a catastrophic ``phantom" force that eventually tears the universe apart. This type of cataclysm can occur in models which are dominated by NEC violating matter for a sufficiently long period of time. As a demonstration we construct the following example. We adopt  a perfect-fluid description of matter with equation of state (EOS)
\begin{equation}
    p = w \rho \,,
\end{equation}
where $w$ is the equation of state parameter relating the pressure $p$ to the energy density $\rho$ of the fluid. We now force this fluid to violate the NEC for all time, thus also violating the ANEC. From the NEC condition $\rho + p = \rho (1+w) \ge 0$ (see Append.~\ref{appa}), the fluid will have EOS parameter $w<-1$ in order to violate NEC. We take a flat $k=0$ FRW metric, and since $\dot H \propto - \rho(1+w)$, we find NEC violation implies $\dot H >0$ for $\rho>0$.  We now take $\dot H = \ddot a/a - (\dot a/a)^2 =  2/\beta^2$, for a positive constant $\beta$, and integrate to get 
\begin{equation}
   a(t) = a_0 \, \exp{ \left[ \frac{t}{\alpha} + \left(\frac{t}{\beta} \right)^2 \right] } \,.
\end{equation}
This spacetime is eternally super-inflating with $\dot H>0$ and an energy density which grows without bound from $t=0$ to $t \rightarrow \infty$ and violates the NEC for all time. Such a universe cannot sustain itself as it will inevitably encounter a big rip. The Ricci scalar, for example, is given by 
\begin{equation}
   R = 12\left(\frac{1}{\beta^2} + \frac{(\beta^2 + 2 \alpha t)^2}{\alpha^2 \beta^4}\right) \,,
\end{equation}
which diverges in a singular fashion as $t \rightarrow \infty$. The Kretschmann scalar $K= R_{\mu\nu\kappa\sigma}R^{\mu\nu\kappa\sigma}\propto t^4$, likewise diverges at $t \rightarrow \infty$.  Geometrically, the geodesically connected domains\footnote{For an FRW spacetime, the observable universe (for a given observer) is the geodesically connected domain of said observer at that point.  The geodesically connected domain of a point $p \in \mathcal{M}$ consists of all points which can be reached by a geodesic.  In contrast to a geodesically complete Riemannian manifold, where any two points are guaranteed to be connected by a geodesic, many geodesically complete Lorentzian manifolds are \emph{not} geodesically connected.} of any point become arbitrarily small as  the scale factor and curvatures become arbitrarily large.  Eventually, the geodesically connected domain becomes smaller than a Planck volume, although the affine parameter of a geodesic remains well defined.  Hence, in general, it is difficult to construct a physically reasonable eternal cosmological model with such extended period of NEC violation.

To summarize, a geodesic \( \gamma: I \to \mathcal{M} \) may fail be be complete -- where $I \ne \mathbb{R}$ -- for any of the above reasons, including a curve encountering a curvature singularity at a (topological) limit point, a causal curve encountering a spacelike boundary $\partial \mathcal{M}$ at a limit point, or light cones on a geodesic ``tipping over'' where said curve changes causal character.~\footnote{For a more complete discussion of singularities and geodesic incompleteness see \cite{Hawking:1973uf}.}

\subsection{Inflationary and bouncing cosmologies}
In preparation for our eternal universe constructions and to further hone our discussion, we offer the following definitions. 
\begin{definition}
Let $\left( \mathcal{M}, g \right)$ be an $n$ dimensional spacetime which admits a connected open neighborhood $U \subset \mathcal{M}$  which is isometric to $ \left( b,c \right) \times_a V$ as a warped product open submanifold, where $\left( b,c \right)$ is a timelike codimension $n-1$ embedded submanifold, $V$ is a connected spacelike codimension 1 embedded submanifold and no assumptions are made concerning $a(t)$ except that it is a well defined function between sets.  The spacetime $\left( \mathcal{M}, g \right)$ is an \emph{inflationary spacetime} if there exists some $t_0 \in \left( b,c \right)$ such that (assuming that it exists and is well defined), $\ddot{a} \left( t_0 \right) > 0$ where derivatives of the scale factor are taken with respect to the timelike $\left( b,c \right)$ coordinate and the sign of $\ddot{a}$ is given with respect to the time orientation of $\mathcal{M}$.  \label{LEDinflationdefn}
\end{definition}
Note, in the above definition of inflation we make no assumptions about the continuity of the scale factor $a$ or its derivatives. Furthermore, this definition includes arbitrarily short periods of accelerated expansion as inflationary. We are aware some readers will find this objectionable; however, if one wishes to distinguish the terms ``short accelerated expansion" from ``inflation", one is obliged to set an arbitrary scale for the duration of the acceleration. While GUT-scale inflation naturally requires at least 60 e-foldings of accelerated expansion to solve the flatness and horizon problems, low-energy (\eg,~TeV-scale) inflation does the job with far fewer. It is only a few e-foldings of accelerated expansion that are actually probed in the cosmic microwave background radiation (CMB). 
In the present context of an eternal universe the difference between a few e-folding, and 60 e-foldings is entirely insignificant.
For these reasons we find it unnecessary to reserve the term inflation to designate significant prolonged expansion and use the term interchangeably with accelerated expansion as is appropriate considering
Defn.~\ref{LEDinflationdefn}. 
This is the definition of ``inflation" from \cite{Lesnefsky:2022fen}, which we have reproduced here to align with our present notation. 

We further suggest the following definition for a ``bouncing" spacetime:
\begin{definition}
Let $\left( \mathcal{M}, g \right)$ be an $n$ dimensional spacetime which admits a connected open neighborhood $U \subset \mathcal{M}$  which is isometric to $ \left( b,c \right) \times_a V$ as a warped product open submanifold, where $\left( b,c \right)$ is a timelike codimension $n-1$ embedded submanifold, $V$ is a connected spacelike codimension 1 embedded submanifold and no assumptions are made concerning $a(t)$ except that it is a well defined non-constant function between sets.  The spacetime $\left( \mathcal{M}, g \right)$ is a \emph{bouncing spacetime} if there exists some $t_0 \in \left(  b,c \right)$ such that (assuming that it exists and is well defined), $\dot{a}(t_0) = 0$, where derivatives of the scale factor are taken with respect to the timelike $\left( b,c \right)$ coordinate and the sign of $\dot{a}$ is given with respect to the time orientation of $\mathcal{M}$.  \label{ELbouncedefn}
\end{definition}

We would like to highlight the fact that, in the the usual consideration of a universe shrinking to a turning point and then enlarging, Defn. \ref{ELbouncedefn} also considers an expanding universe reaching a turning point and then contracting a bounce as well.

As we shall discover, in some cases it is advantageous to discuss a ``bounce at infinity''.  Considering that the point at infinity is strictly not an element\footnote{One solution to this is to compactify.  There are many compactification prescriptions, but typically one utilizes the Hausdorff single point compactification of $\mathbb{R} \hookrightarrow S^1$.  Note, however, that a Hausdorff single point compactification does not possess the universal property of preserving continuity upon function extension.} of $\mathbb{R}$, we suggest:

\begin{definition} \label{defn: bounce at infinity}
    A neighborhood $\left( b , c \right) \times_a V \subset \left( \mathcal{M} , g \right)$ is an open submanifold which is isometric to a GFRW $\left( b , c \right) \times_a \Sigma \subset \mathcal{M}$ is said to admit a \emph{bounce at positive infinity} $+ \infty$ if, upon extension if necessary, $c \rightarrow + \infty$ is unbounded above and there exists a divergent sequence $\left\{ t_\ell \right\}_{\ell = 1}^\infty \rightarrow + \infty$ with $t_k > t_\ell$ for $k > \ell$ where $\left\{ \dot{a} \left( t_\ell \right) \right\}_{\ell = 1}^\infty \rightarrow 0$ constitutes a convergent Cauchy sequence for non-constant $a \left( t \right)$.

    A neighborhood $\left( b , c \right) \times_a V \subset \left( \mathcal{M} , g \right) \subset \mathcal{M}$ is an open submanifold which is isometric to a GFRW which admits a \emph{bounce at negative infinity} $- \infty$ if, upon extension if necessary, $b \rightarrow - \infty$ is unbounded below and there exists a divergent sequence $\left\{ t_\ell \right\}_{\ell = 1}^\infty \rightarrow - \infty$ with $t_k < t_\ell$ for $k > \ell$ where $\left\{ \dot{a} \left( t_\ell \right) \right\}_{\ell = 1}^\infty \rightarrow 0$ constitutes a convergent Cauchy sequence.

    A neighborhood $\left( b , c \right) \times_a V \subset \left( \mathcal{M} , g \right) \subset \mathcal{M}$ is an open submanifold which is isometric to a GFRW which admits a \emph{bounce at infinity} if it admits a bounce at positive infinity and / or a bounce at negative infinity.
\end{definition}

In this definition we require the Hubble parameter $H = \dot{a}/a$ be non-zero for some time on the interval $\left( b,c \right)$ to exclude referring to pure Minkowski as a bouncing spacetime. (One may likewise exclude Minkowski spacetime by requiring there be no timelike killing vector for $\left( \mathcal{M}, g \right)$.) In the following we shall casually refer to a spacetime as a bouncing spacetime if it incorporates either a bounce at finite $t$, or a bounce at infinity in accordance with the above definitions.

We are now in a position to build diverse examples of geodescially complete spacetimes. 
\section{Eternal inflating universe}\label{inflate}

Our first example of an eternal universe was introduced in \cite{Lesnefsky:2022fen}, and analyzed in detail in \cite{Easson:2024uxe}:
\begin{eqnarray}\label{aplusc}
    a(t) = a_0 \, e^\frac{2 t}{\alpha} + c \,,
\end{eqnarray}
for real constants $a_0 > 0$, $\alpha$, and $c > 0$. 

Application of Thm.~\ref{ledthm}, finds that the scale factor (\ref{aplusc}) with $c>0$, defines a geodesically complete spacetime for all (real) values of the parameter $\alpha$. The theorem integrals are explicitly calculated in
\cite{Easson:2024uxe}.
The Hubble parameter $H = \dot a/a$ is given by:
\begin{equation}
    H = \frac{2 a_0 e^{\frac{2t}{\alpha}}}{(c + a_0 e^{\frac{2t}{\alpha}}) \alpha} \,.
\end{equation}
This is an example of an eternally inflating spacetime in accordance with Defn.~1:
\begin{equation}
  \frac{\ddot a}{a} =  4 \alpha^{-2} \left(1 - \frac{c}{c + a_0 e^{\frac{2t}{\alpha}}}\right) \,.
\end{equation}
The acceleration is positive for all $t \in (-\infty, \infty)$ for all non-zero values of $\alpha$ with $c>0$. For $\alpha>0$, the universe is expanding and inflating for all $t$.
For $\alpha<0$ the universe is contracting yet inflating ($\ddot a>0$) for all time, approaching zero acceleration as $t \rightarrow \infty$.


The cosmological evolution is depicted in \fig{ceplusc}. Shown are the scale factor $a(t)$, Hubble parameter $H$ (amplified by an order of magnitude) and co-moving Hubble radius $1/a H$. When the co-moving Hubble radius is decreasing the spacetime is inflating--in this case it is decreasing for all $t$.

\begin{figure}[H]
\centering
  \includegraphics[width=0.8\linewidth]{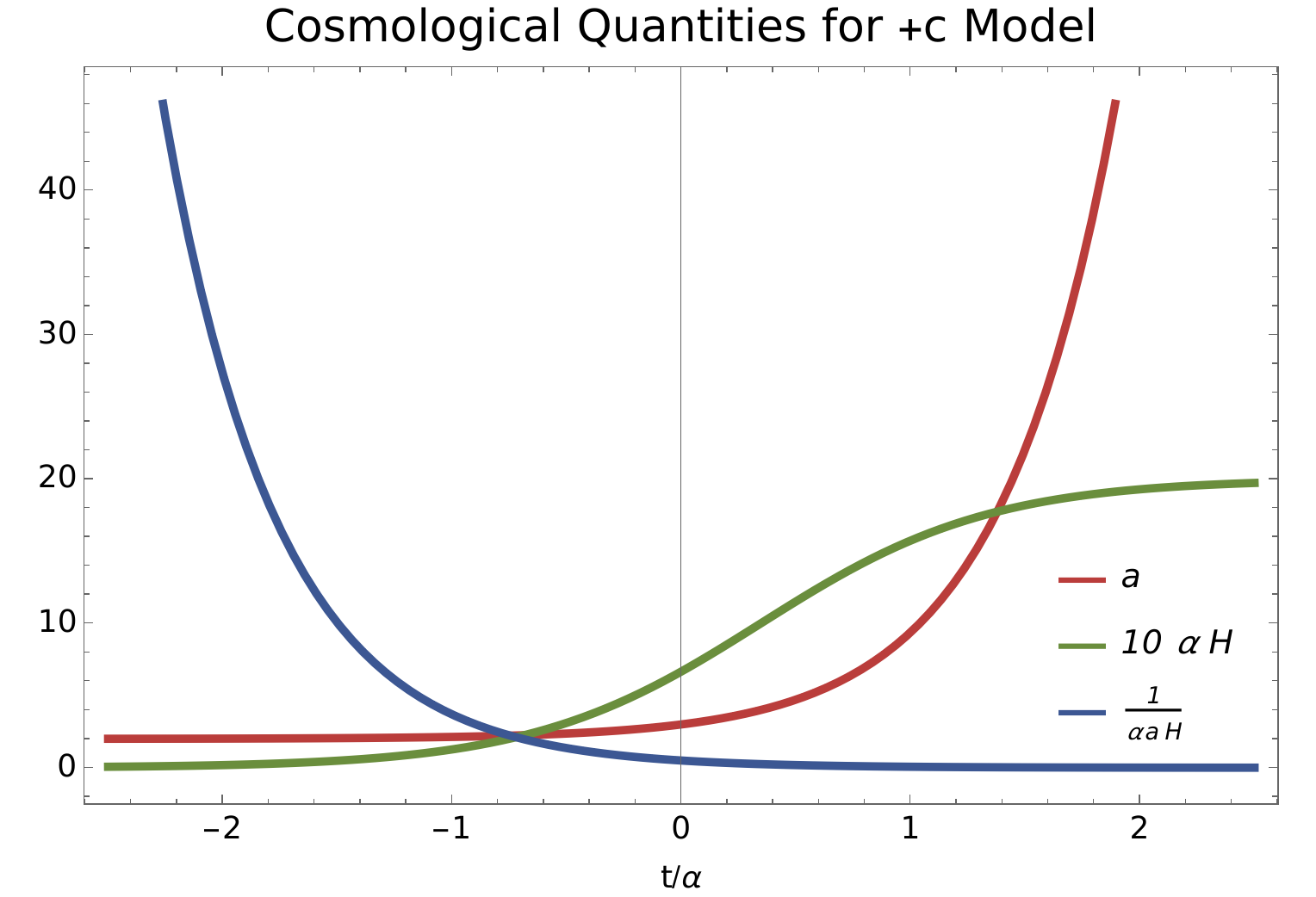}
  \captionof{figure}{Evolutionary behavior for (\ref{aplusc}). The scale factor is plotted in red. The Hubble parameter (green), scaled by $10 \times$ for readability. The co-moving Hubble radius (blue). Model parameters: $c = 2$, $\a_0=\alpha=1$.}
  \label{ceplusc}
\end{figure}

The spacetime is geodesically complete, eternal and inflating, and provides a simple counterexample to the statement that inflationary models must be past incomplete, regardless of energy condition considerations (see, \cite{Borde:2001nh}). It is only one example of an infinite number of models with such behavior as we shall see in the proceeding discussion.~\footnote{The above model \eq{aplusc} is not special. For example, it shares many features of the scale factor used in the \emph{Emergent Universe} scenario~\cite{Ellis:2002we}.}
For $\alpha>0$, the universe \emph{approaches} Minkowski spacetime as $t \rightarrow - \infty$; however, an observer travelling along a past-directed geodesic will experience an infinite proper time to reach the non-accelerating Minkowski space at the point $t=-\infty$ and hence, the model is an eternal inflating spacetime. As we shall see, another eternal inflating example having $\ddot a>0$ $\forall \, t \in \left(-\infty,\infty\right)$, is given by \eq{apoly}.

One may identify the absence of curvature singularities from the curvature invariants, as shown in \fig{csplusc}. We plot the Ricci scalar $R$ and the Kretchmann scalar $K = R_{\mu\nu\rho\sigma}R^{\mu\nu\rho\sigma}$ built from the Riemann tensor. Both are observed to be finite for all $t$:
\begin{equation}
    R = \frac{24 a_0 e^{\frac{2t}{\alpha}} (c + 2 a_0 e^{\frac{2t}{\alpha}})}{(c + a_0 e^{\frac{2t}{\alpha}})^2 \alpha^2}
\,, \qquad K =\frac{192 a_0^2 e^{\frac{4t}{\alpha}} (c^2 + 2 a_0 c e^{\frac{2t}{\alpha}} + 2 a_0^2 e^{\frac{4t}{\alpha}})}{(c + a_0 e^{\frac{2t}{\alpha}})^4 \alpha^4}
 \,.
\end{equation}
\begin{figure}[H]
\centering
  \includegraphics[width=0.8\linewidth]{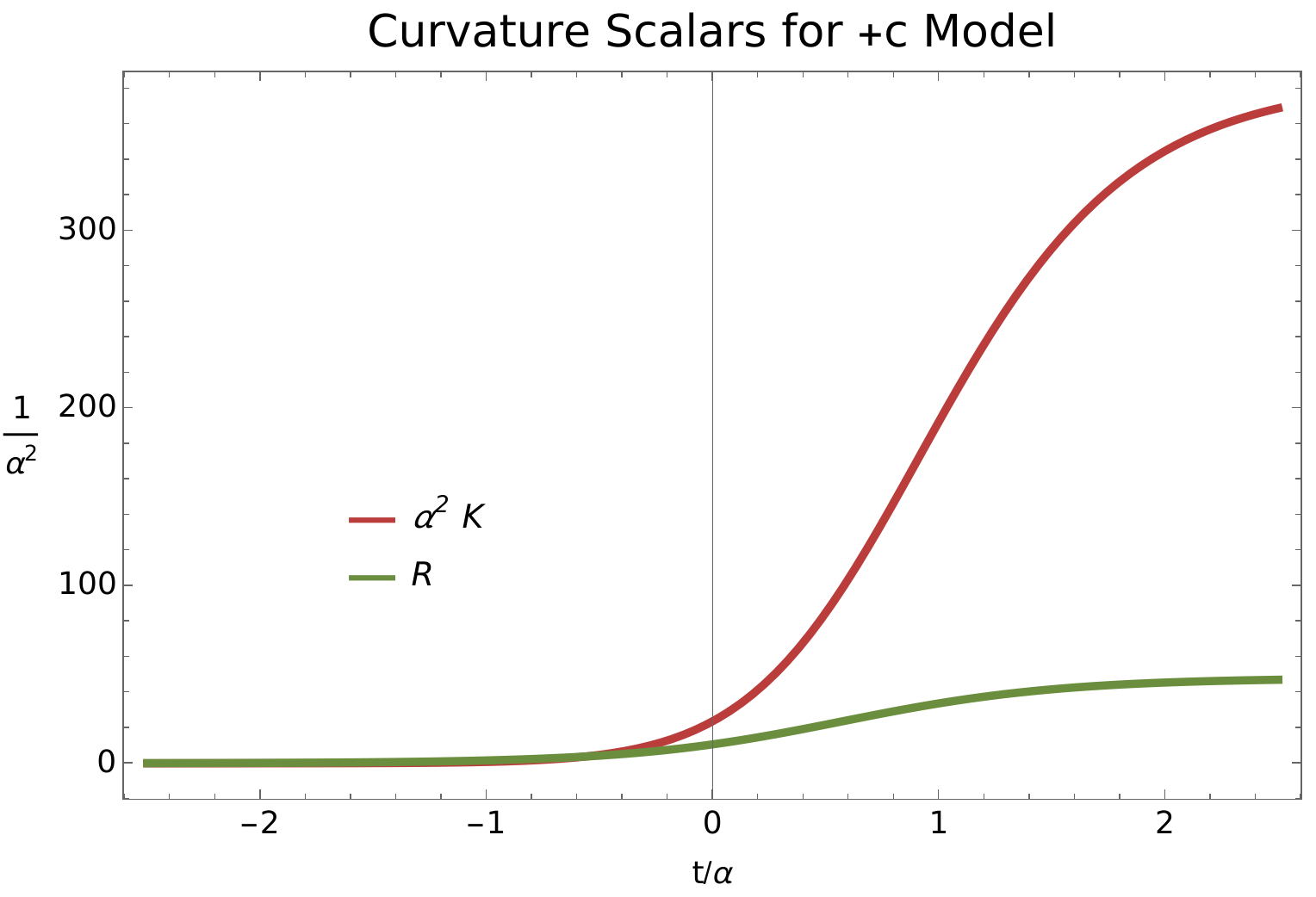}
  \captionof{figure}{Curvature invariants computed from (\ref{aplusc}). The Ricci scalar is plotted in green and the Kretchmann scalar is plotted in red. Model parameters: $c = 2$, $\a_0=\alpha=1$.}
  \label{csplusc}
\end{figure}
We now examine the energy conditions for the model given by (\ref{aplusc}). Calculation of the Einstein tensor yields non-vanishing components:
\begin{eqnarray}\label{etplusc}
   G_{tt} &=& \frac{12 a_0^2 e^{\frac{4t}{\alpha}}}{(c + a_0 e^{\frac{2t}{\alpha}})^2 \alpha^2} \,, \nonumber \\
   G_{ii} &=& -\frac{4 a_0 e^{\frac{2t}{\alpha}} (2c + 3a_0 e^{\frac{2t}{\alpha}})}{\alpha^2} \,.
\end{eqnarray}
The energy density is given by $\rho = - G^t{}_t$ and the pressure is $p = G^i{}_i$. A plot depicting the energy conditions is given in \fig{ecplusc}
\begin{figure}[H]
\centering
  \includegraphics[width=0.8\linewidth]{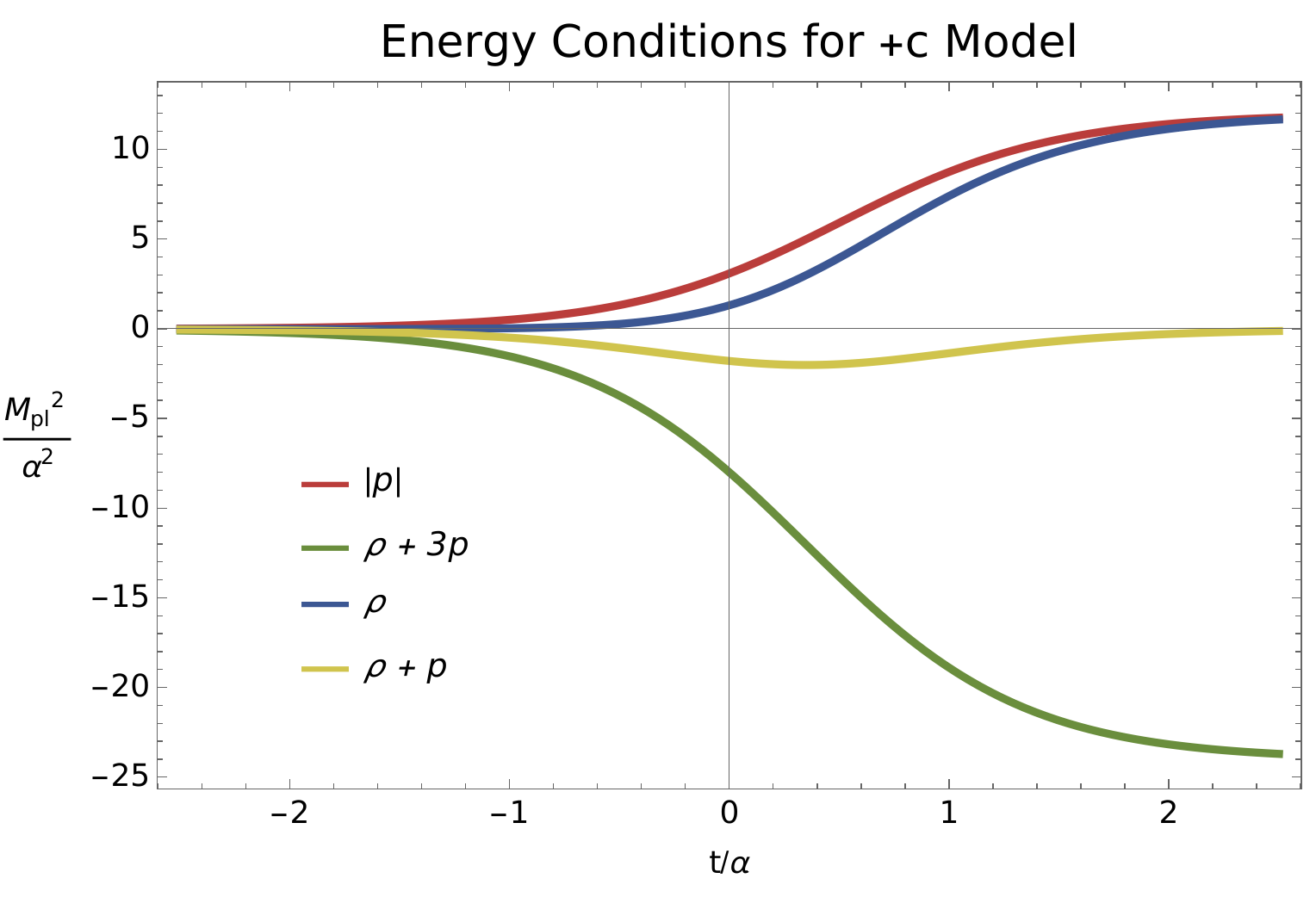}
  \captionof{figure}{Energy conditions from (\ref{ecplusc}). Plot of energy density $\rho$ (blue), $\rho + p$  (yellow), $|p|$ (red) and $\rho + 3p$ (green). Model parameters: $c = 2$, $\a_0=\alpha=1$.}
  \label{ecplusc}
\end{figure}
All of the energy conditions are violated. NEC violation is identified by the dipping of the yellow curve ($\rho + p$) below the vertical axis, both here, and in all of the proceeding energy condition plots. The NEC asymptotically approaches saturation in the early and late universe since $\rho + p = -2 \sech^2{t} \rightarrow 0$ as $t \rightarrow \pm \infty$. It is easy to see the ANEC (\eq{anec}) is also violated in this scenario. We will see that the ANEC violation is a generic feature of flat eternal models. In the late universe, the SEC remains violated allowing continued acceleration, as indicated by the negativity of the green curve.
\section{Eternal bouncing universe}\label{bounce}
Next we consider two bouncing models. Both models are shown to be geodesically complete, and past and future eternal, satisfying the completeness requirements of \thref{ledthm}. The first is a transcendental function bounce and the second is a polynomial bounce. While we refer to these models as bouncing cosmologies, both models have $\ddot a>0$ and thus, in accordance with Defn.~\ref{LEDinflationdefn}, the models are \emph{also} inflationary models. We will comment further on this fact in our concluding remarks.

\subsection{Transcendental bounce}
The first bouncing model we consider is given by a transcendental function scale factor:
\begin{equation}\label{acosh}
    a(t)= a_0 \cosh \left( \frac{t}{\alpha} \right) \,.
\end{equation}
The above scale factor describes a universe that contracts, bounces and then expands (similar to the familiar case of closed de Sitter).~\footnote{A solution with this form was found in a non-local higher-derivative gravity in \cite{Biswas:2005qr,Biswas:2012bp}, which was argued to be free of pathologies, which if true would be quite remarkable given the level of energy condition violation (see \fig{eccosh}).} The Hubble parameter is
\begin{equation}
    H = \alpha^{-1} \tanh\left(\frac{t}{\alpha}\right)\,.
\end{equation}
Remarkably, the universe is inflating for the entire evolution since:
\begin{equation}\label{eq:acccosh}
  \frac{\ddot a}{a} =  \frac{1}{\alpha^{2}} \,,
\end{equation}
and hence, this bouncing model is \emph{also} an eternal inflationary model.
As with our preceding example \eq{aplusc}, it is possible to explicitly calculate the integrals of \thref{ledthm}. We find for the indefinite integrals:
\begin{equation}
\int^t \frac{a \left( \zeta \right) d\zeta}{\sqrt{\left( a \left(  \zeta \right) \right)^2 + 1}}=\alpha \operatorname{arctanh}\left(\frac{a_0 \sinh\left(\frac{t}{\alpha}\right)}{\sqrt{1 + a_0^2 + a_0^2 \sinh\left(\frac{t}{\alpha}\right)^2}}\right)
 \,,
\end{equation}
and
\begin{equation}
\int^t a \left( \zeta \right) d\zeta=a_0 \alpha \sinh\left(\frac{t}{\alpha}\right) \,. 
\end{equation}

The above integrals diverge over the full set of conditions discussed in Thm.~1 for all (non-zero) values of $\alpha$; hence, the spacetime with scale factor \eq{acosh} is geodesically complete. The cosmological evolution is depicted in \fig{cecosh}. Shown are the scale factor $a(t)$, Hubble parameter $H$ and co-moving Hubble radius $1/a H$. As is inherent in bouncing models the co-moving Hubble radius diverges at the bounce point. When the co-moving Hubble radius is decreasing the spacetime is inflating--in this case it is decreasing for all $t$.
\begin{figure}[H]
\centering
  \includegraphics[width=0.8\linewidth]{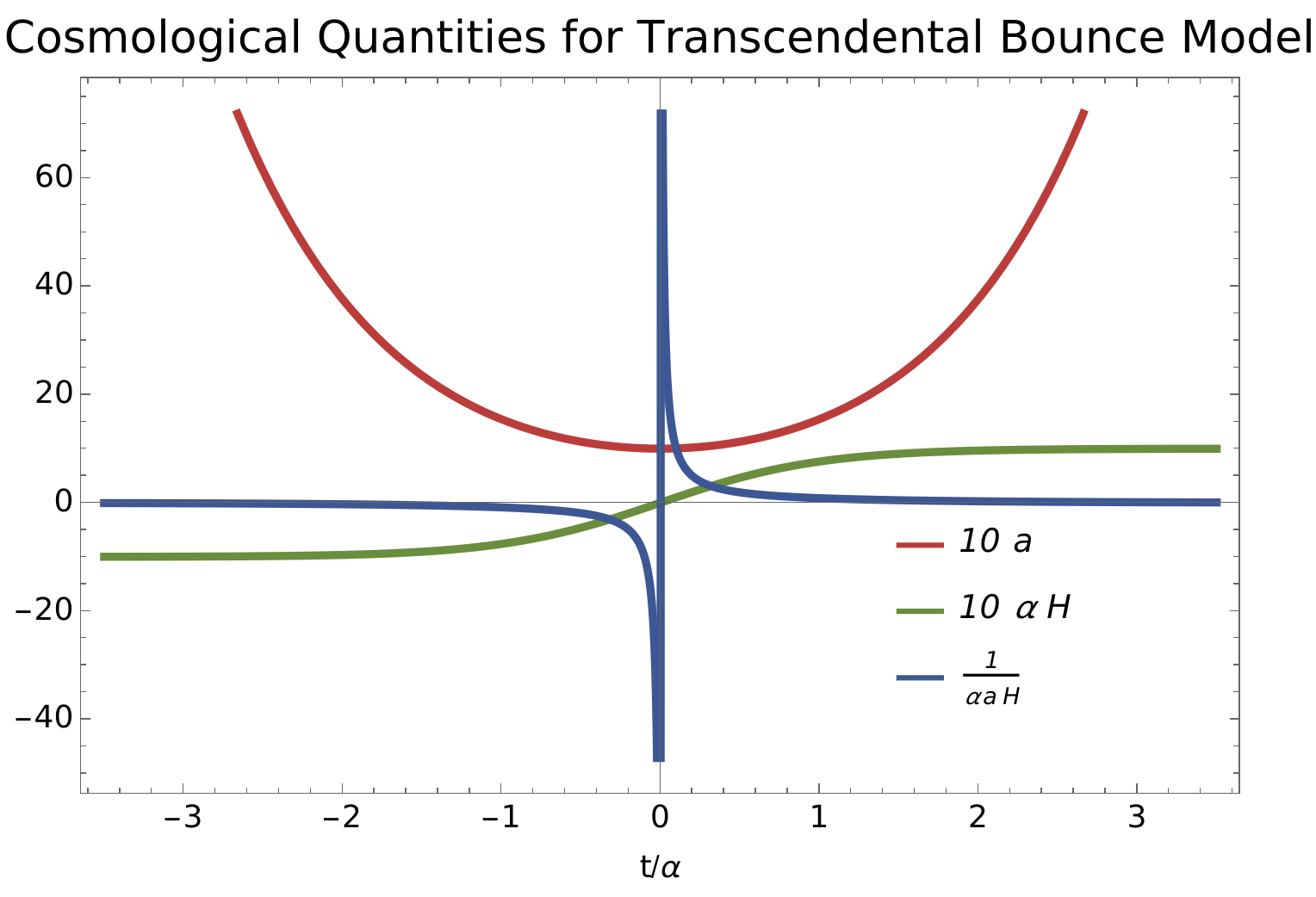}
  \captionof{figure}{Evolutionary behavior for \eq{acosh}. The scale factor amplified by an order of magnitude (red). The Hubble parameter amplified by an order of magnitude (green). The co-moving Hubble radius (blue). Parameters:~$a_0=\alpha=1$.}
  \label{cecosh}
\end{figure}
The Ricci scalar  and Kretchsmann scalar are given by:
\begin{equation}\label{cscosheq}
    R = \frac{6 \left(1 + \tanh\left(\frac{t}{\alpha}\right)^2\right)}{\alpha^2}\,, \qquad K = \frac{12 \left(1 + \tanh\left(\frac{t}{\alpha}\right)^4\right)}{\alpha^4} \,.
\end{equation}
We find there are no curvature singularities as inferred from the finite curvature invariants, plotted in \fig{cscosh}
\begin{figure}[H]
\centering
  \includegraphics[width=0.8\linewidth]{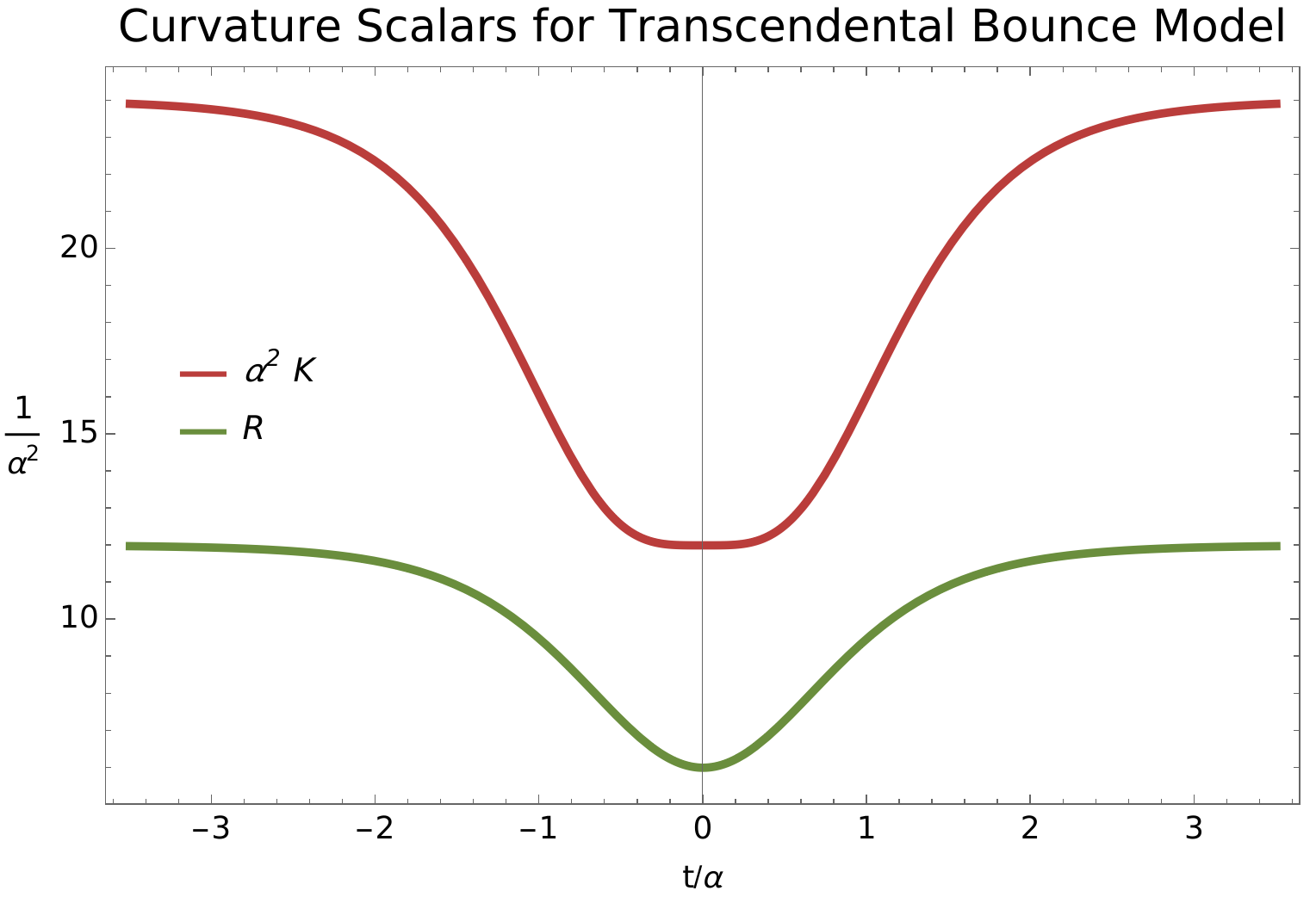}
  \captionof{figure}{Curvature invariants given by \eq{cscosheq}. The Ricci scalar (green) and the Kretchmann scalar (red). Model parameters: $a_0=\alpha=1$.}
  \label{cscosh}
\end{figure}
Calculation of the Einstein tensor yields non-vanishing components:
\begin{eqnarray}\label{eccosheq}
   G_{tt} &=& \frac{3 \tanh\left(\frac{t}{\alpha}\right)^2}{\alpha^2} \,, \nonumber \\
   G_{ii} &=& -\frac{a_0^2 (1 + 3 \cosh\left(\frac{2t}{\alpha}\right))}{2 \alpha^2} \,.
\end{eqnarray}
The energy density is given by $\rho = - G^t{}_t$ and the pressure is $p = G^i{}_i$. A plot depicting the energy conditions is given in \fig{eccosh}
\begin{figure}[H]
\centering
  \includegraphics[width=0.8\linewidth]{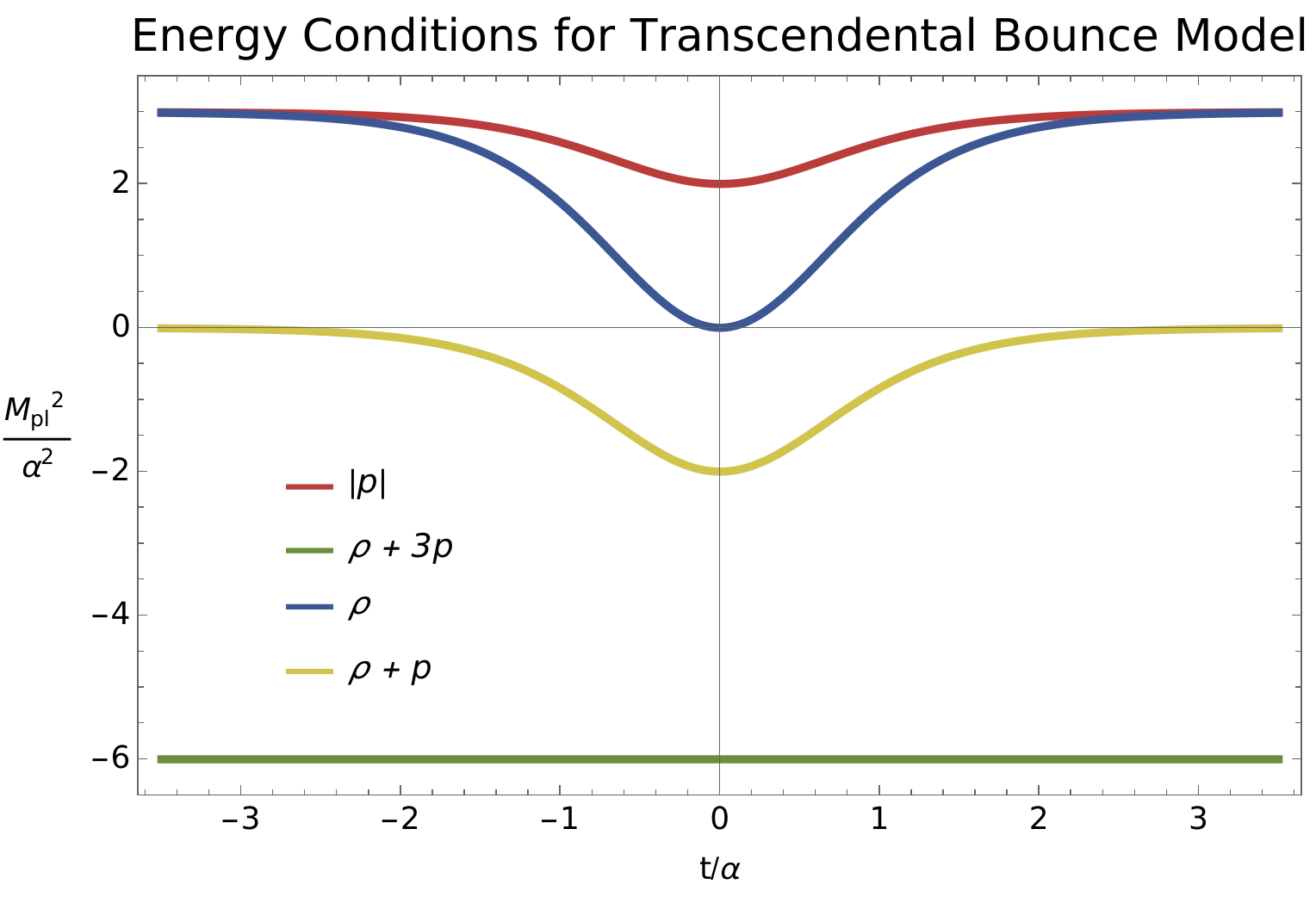}
  \captionof{figure}{Energy conditions from \eq{eccosheq}. Plot of energy density $\rho$ (blue), $\rho + p$  (yellow), $|p|$ (red) and $\rho + 3p$ (green). Model parameters: $a_0=\alpha=1$.}
  \label{eccosh}
\end{figure}
As with our previous example, all of the energy conditions including the ANEC, are violated. The SEC is violated for the entire range of the time coordinate as the model is eternally inflating, which can be seen from the eternal negativity of the green curve, and \eq{eq:acccosh}. 

\subsection{Polynomial bounce}\label{sec:poly}
An important class of nonsingular models is given by the general polynomial function:
\begin{equation}\label{gapoly}
    a(t) = a_n t^n + a_{n-1} t^{n-1} + \ldots + a_2 t^2 + a_1 t + a_0 \,.
\end{equation}
In order to avoid having $a$ become negative somewhere on the interval $t \in (-\infty,\infty)$, our polynomial must have even degree and $a_n > 0$.  With respect to Thm.~1, we derive the following with respect to a scale factor $p(t)$:

\begin{cor} \label{cor: polynomial completeness}
Let $p\left( t \right) \in \mathbb{R} \left[ t \right]$ be a polynomial over the field of real numbers over a single indeterminate $t$.  Additionally, let $p \left( t \right) >0$ and hence has no real roots.  The GFRW $\mathbb{R}^1_1 \times_p \Sigma$, with $\left( \Sigma , g_\Sigma \right)$ a complete Riemannian manifold, is complete. 
\end{cor}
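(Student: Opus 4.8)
The plan is to apply \thref{ledthm} and reduce the claim to the divergence of its three integral criteria, evaluated at both $+\infty$ and $-\infty$. The hypotheses of the theorem are met immediately: $p$ is smooth and strictly positive, so $\mathbb{R}^1_1 \times_p \Sigma$ is a bona fide GFRW, and $\Sigma$ is complete by assumption, so the spatial pregeodesic of \eq{eq:gfrw geodesic eq2} reparametrizes to a complete $\Sigma$-geodesic and the only obstruction lives in the $t$-integrals. The single structural fact that powers the whole argument is that a real polynomial strictly positive on all of $\mathbb{R}$ must have even degree with positive leading coefficient, since an odd-degree real polynomial changes sign and hence acquires a real root by the intermediate value theorem. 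Consequently $p(t) \to +\infty$ as $t \to \pm\infty$ when $\deg p \ge 2$, while $p$ is a positive constant when $\deg p = 0$. In either case $p$ is continuous, positive, and coercive, so it attains a global minimum $m := \inf_{t \in \mathbb{R}} p(t) > 0$. I would record this positive lower bound first, since items (1) and (2) of the theorem then follow almost mechanically.

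For future and past null completeness (items (2) and (4)) I would bound the integrand from below by $m$: since $p(\zeta) \ge m > 0$, both $\int_{t_0}^{\infty} p\, d\zeta$ and $\int_{-\infty}^{t_0} p\, d\zeta$ dominate $\int m\, d\zeta$ and therefore diverge. For future and past timelike completeness (items (1) and (4)) I would rewrite the integrand as $p/\sqrt{p^2+1} = (1 + p^{-2})^{-1/2}$, which is monotone increasing in $p$ and hence bounded below by $(1 + m^{-2})^{-1/2} > 0$; the same comparison against a positive constant forces divergence at both ends. No explicit evaluation of the integrals is needed---only the uniform lower bound $m$.

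The remaining ingredient is spacelike completeness, which is the one genuinely delicate case and the reason the corresponding part of \thref{ledthm} is proved separately in Appendix~\ref{proofof3}. By item (3), future (respectively past) spacelike completeness is equivalent to the conjunction of future (past) null completeness, already established above, with the condition $a < \infty$. I read this as the requirement that the scale factor not blow up at any finite value of $t$, which would amputate the time axis and create a spacelike boundary reached in finite affine parameter. A polynomial is finite at every finite argument, so the condition holds trivially. Invoking item (5), simultaneous future/past timelike, null, and spacelike completeness assembles into full geodesic completeness.

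The main obstacle here is conceptual rather than computational and lies entirely in the spacelike case. Spacelike geodesics can possess a turning point in the time coordinate where $\dot t = 0$, so the naive integrand $p/\sqrt{1 - p^2}$ is not globally real and the clean reduction to a single monotone integral in $t$ breaks down. I would not re-derive this phenomenon; I would instead lean on the already-proven item (3) and merely verify its hypotheses, taking care to state precisely what $a < \infty$ means so that the finiteness of $p$ at finite $t$ is manifestly sufficient. The timelike and null cases, by contrast, are entirely routine once the positive lower bound $m$ is in hand.
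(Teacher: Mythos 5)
Your proposal is correct and follows essentially the same route as the paper: both arguments reduce to the observation that a strictly positive real polynomial must have even degree with positive leading coefficient, hence is coercive and realizes a strictly positive infimum $m$, after which the integral criteria of Thm.~\ref{ledthm} are satisfied by comparison against the constant $m$. You spell out the integrand bounds and the spacelike case (via item~(3) and the finiteness of $p$ at finite $t$) more explicitly than the paper does, but the underlying idea is identical.
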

\begin{proof}
  Apropos Thm. 1, any scale factor which has strictly positive infimum is geodesically complete.  A degree $n$ polynomial which is strictly positive satisfies this requirement.  First, if $p \left( t \right) > 0$ then it must be of even degree with $a_n > 0$.  Next, it is a well-known fact that an even degree polynomial with positive diverging range realizes its infimum, which by assumption is strictly positive.  Thus, a sufficient condition for geodesic completeness under Thm. 1 is satisfied, hence $\mathbb{R}^1_1 \times_p \Sigma$ is complete.
\end{proof}
\begin{defn} \label{defn: appropriate polynomials}
Let $p \left( t \right) \in \mathbb{R} \left[ t \right]$ be a polynomial which is complete under Cor.~\ref{cor: polynomial completeness} - namely an even degree polynomial with $a_n > 0$ and no real roots.  Any aforementioned polynomial is known as \emph{appropriate} for the remainder of this exposition.
\end{defn}

Because any even-degree polynomial scale factor $a(t) = a_n t^n + a_0$ with $n \ge 2$ and $a_n , a_0 > 0$ is inflationary and complete, all models of the aforementioned type will accelerate for at least some time. This suggests the following Corollary to Thm. \ref{ledthm}:

\begin{prop} \label{cor: complete are inflationary}
Every smooth, non-constant scale factor $a > 0$ of a geodesically complete GFRW spacetime must inflate for at least some period of time.
\end{prop}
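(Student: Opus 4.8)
The plan is to prove the contrapositive: I assume the scale factor never inflates and show that positivity on all of $\mathbb{R}$ then forces $a$ to be constant, contradicting non-constancy. First I would unpack Defn.~\ref{LEDinflationdefn}: the spacetime \emph{fails} to be inflationary exactly when there is no $t_0$ with $\ddot a(t_0) > 0$, i.e. when $\ddot a(t) \le 0$ for every $t$, which is to say $a$ is (globally) concave. The key structural input is that for a GFRW the time factor is $\mathbb{R}^1_1$, so $a$ is smooth and strictly positive on \emph{all} of $\mathbb{R}$; geodesic completeness (via \thref{ledthm}) is the regime in which the coordinate $t$ genuinely ranges over the whole real line, so that the two-sided limits $t \to \pm\infty$ are the physically meaningful ones and may both be invoked.

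Next I would exploit concavity through the tangent-line estimate. Since $\ddot a \le 0$, the derivative $\dot a$ is non-increasing, and the graph of $a$ lies below each of its tangent lines: $a(t) \le a(t_1) + \dot a(t_1)\,(t - t_1)$ for all $t,t_1 \in \mathbb{R}$. Because $a$ is non-constant and smooth, there exists $t_1$ with $\dot a(t_1) \ne 0$, and I split into two cases. If $\dot a(t_1) > 0$, letting $t \to -\infty$ drives the right-hand side to $-\infty$, so $a(t) \to -\infty$, contradicting $a > 0$. If $\dot a(t_1) < 0$, letting $t \to +\infty$ yields the same contradiction. Hence $\dot a \equiv 0$ and $a$ is constant, contrary to hypothesis; therefore some $t_0$ must satisfy $\ddot a(t_0) > 0$, and the spacetime inflates for at least an instant, completing the argument.

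The proof is short, and the only real subtlety is to recognize that the statement reduces to an elementary fact about positive concave functions on the line and to apply the two-sided limit correctly; this is precisely where the global $\mathbb{R}$-domain enters, so I would be careful to flag that dependence rather than assume a finite time interval. As a clarifying remark I would note the physical content: in flat FRW one has $\ddot a/a \propto -(\rho + 3p)$, so $\ddot a \le 0$ everywhere is equivalent to the strong energy condition holding everywhere, and the Proposition thus asserts that a complete, non-constant GFRW must violate the SEC somewhere. This recovers, in a cleaner and more general form, the content of the (commented-out) corollary built on \thref{ledthm} and the singularity-theorem intuition, but now without any appeal to the SEC machinery or to O'Neill's Prop.~12.15.
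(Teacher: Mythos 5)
Your argument is correct and is essentially identical to the paper's: the paper isolates the same fact as Lemma~\ref{lemma: no globally concave pos func exist} (a positive, globally concave function on $\mathbb{R}$ must be constant), proved by the same tangent-line/monotone-derivative bound with the same case split on the sign of $\dot a(t_1)$ and the same limits $t\to\mp\infty$. The only cosmetic differences are that you phrase it as a contrapositive rather than a contradiction and attribute the global $\mathbb{R}$-domain to geodesic completeness, whereas in the paper it is already built into the GFRW definition.
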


Before the proof is stated, a lemma from analysis:
\begin{lemma} \label{lemma: no globally concave pos func exist}
Let $f : \mathbb{R} \rightarrow \mathbb{R}^+$ be twice differentiable with $f'' \leq 0$ over all $\mathbb{R}$.  Then $f$ must be a constant function: thus no globally concave positive non-constant real functions exist.
\end{lemma}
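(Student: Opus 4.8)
The plan is to argue by contradiction, exploiting the fact that a concave function lies weakly below each of its tangent lines, together with the positivity constraint $f>0$.

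First I would establish the global tangent-line bound. Since $f$ is twice differentiable, Taylor's theorem with Lagrange remainder gives, for any fixed $a$ and any $x$, a point $\xi$ strictly between $a$ and $x$ with
\[
f(x) = f(a) + f'(a)(x-a) + \tfrac{1}{2} f''(\xi)(x-a)^2 \,.
\]
Because $f'' \le 0$ everywhere, the quadratic remainder is nonpositive, so
\[
f(x) \le f(a) + f'(a)(x-a) \qquad \text{for all } a,x \in \mathbb{R} \,.
\]
Equivalently, one could note that $f'' \le 0$ renders $f'$ monotone non-increasing and integrate, but the single Taylor application is the most economical route and, crucially, holds between \emph{any} two real points.

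Next I would suppose, toward a contradiction, that $f$ is non-constant. Then $f'$ does not vanish identically, so there exists $a$ with $f'(a) \ne 0$. If $f'(a) > 0$, let $x \to -\infty$ in the tangent-line bound: the right-hand side $f(a)+f'(a)(x-a)$ tends to $-\infty$, forcing $f(x) < 0$ for all sufficiently negative $x$, which contradicts $f > 0$. If instead $f'(a) < 0$, the symmetric limit $x \to +\infty$ drives the right-hand side to $-\infty$ and again contradicts positivity. Hence $f'(a)=0$ for every $a$, so $f$ is constant.

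I expect essentially no obstacle here; the one point requiring care is justifying the tangent-line inequality on all of $\mathbb{R}$ rather than merely locally, which is exactly what the Taylor estimate (valid between any pair of real arguments) supplies. The positivity hypothesis enters only at the final step, where it is precisely the feature that a nonzero-slope tangent line must eventually undercut.
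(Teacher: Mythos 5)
Your proof is correct and follows essentially the same strategy as the paper's: both establish the tangent-line bound $f(x) \le f(a) + f'(a)(x-a)$ (you via Taylor's theorem with Lagrange remainder, the paper via the Fundamental Theorem of Calculus and the monotonicity of $f'$ — the alternative you yourself mention) and then derive a contradiction with positivity by sending $x$ to $-\infty$ or $+\infty$ according to the sign of $f'(a)$. No gap; the Taylor hypothesis is satisfied since twice differentiability gives continuity of $f'$ on any closed interval.
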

\begin{proof}
First, assume that $f$ is not a constant function.  There are now two cases to examine for $f'$: for some $t_0 \in \mathbb{R}$ either $f' \left( t_0 \right) > 0$ or $f' \left( t_0 \right) < 0$.

Let $f'\left( t_0 \right) > 0$.  By $f'' \leq 0$ one reaps $f'$ is non-increasing.  Starting from the Fundamental Theorem of Calculus one can calculate
\begin{equation}
    f \left( t \right) = f \left( t_0 \right) - \int^{t_0}_t f' \left( s \right) d s < f \left( t_0 \right) - f' \left( t_0 \right)\left(t_0 - t \right)
\end{equation}
In the limit
\begin{equation}
    \lim_{t \rightarrow - \infty} f\left( t \right) < \lim_{t\rightarrow - \infty} f \left( t_0 \right) - f' \left( t_0 \right)\left(t_0 -t \right) = - \infty
\end{equation}
thus $f$ cannot have range $\mathbb{R}^+$.

Likewise assume $f' \left( t_0 \right) < 0.$  In a similar vein
\begin{equation}
    f \left( t \right) = f \left( t_0 \right) + \int^{t}_{t_0} f' \left( s \right) d s < f \left( t_0 \right) + f' \left( t_0 \right)\left(t - t_0 \right)    \,
\end{equation}
again taking
\begin{equation}
    \lim_{t \rightarrow  \infty} f\left( t \right) < \lim_{t\rightarrow  \infty} f \left( t_0 \right) + f' \left( t_0 \right)\left(t - t_0 \right) = - \infty    
\end{equation}
Thus $f' = 0$ over all $\mathbb{R}$: hence $f$ is a constant function.
\end{proof}

Now the proof of Cor. \ref{cor: complete are inflationary} can be stated:
\begin{proof}
    Assume that $a'' \leq 0$ over all $\mathbb{R}$; by Lemma \ref{lemma: no globally concave pos func exist} this implies $a$ is a constant function: a contradiction.  Thus there is at least one point where $a'' > 0$; hence the spacetime is inflationary.
\end{proof}

Considering Corollary \ref{cor: complete are inflationary} and Defn.~\ref{LEDinflationdefn} we find 
non-trivial complete cosmologies \emph{must} be inflationary \cite{Easson:2024uxe}.\footnote{We pause here to acknowledge the celebrated work of Alexei Starobinsky who was motivated to find non-singular cosmological solutions by having an early-time accelerated expansion \cite{STAROBINSKY198099}, opening the window for the theory of the inflationary universe, and aligning with Cor.\ref{cor: complete are inflationary}.}

\subsubsection{Polynomial model}
Curvature and GR tensors can be computed in generality for a degree $n$ ``appropriate'' scale factor model per Defn. \ref{defn: appropriate polynomials}.  To evaluate the aforementioned class of models said polynomial constants $\left\{ a_\ell \right\}_{\ell = 0}^n$ must be evaluated.  The space of polynomial constants $\left\{ a_\ell \right\}^n_{\ell = 0}$ are not unconstrained–at minimum the requirements that the scale factor has no real roots yields a cursory set of limitations-namely that $a_n > 0$ and $n$ even, proemially.  It is well known that discriminants derived from radical methods can be utilized to reap polynomials with no real roots up to quartic degree, but more advanced methods including Galois theory, must be employed for quintic degree and higher - see \cite{Dummit2004} for a standard graduate reference.  The equations below are simple evaluations of well known curvature and GR tensors of GFRW models applied to an ``appropriate'' polynomial scale factor.  All equations will be left in ``summation form'', seeing as how there is no canonical way to expand said equations without further evaluation.

Recapitulating Eq. \ref{gapoly} of an ``appropriate'' polynomial
\begin{equation} \label{eq: poly scale factor general}
    \mathbb{R} \left[ t \right] \ni p \left( t \right) = \sum_{\ell = 0}^n a_\ell t^\ell
\end{equation}
in FRW\footnote{A propos Thm. \ref{ledthm}, the geodesic completeness of a GFRW is determined by its scale factor.  For purposes of simplicity, a flat ($k=0$) FRW is utilized for computation.} $\mathbb{R}^1_1 \times_p \mathbb{R}^{n-1}$ where $\mathbb{R} \left[ t \right]$ is the ring of polynomials over indeterminate $t$ taken to be the privileged time coordinate foliated in $\mathbb{R}^1_1$ and $a_\ell \in \mathbb{R}$ are the real coefficients of said polynomial.

Applying $p \left( t \right)$ to the formalism, one can calculate Kretchhmann and Ricci scalars to be

\begin{align}\label{eq: polynomial K}
    K =& \frac{12 \left(\left(\sum_{k=0}^n k a_k t^{k-1}\right)^4+\left(\sum_{k=0}^n (k-1) k a_k t^{k-2}\right)^2 \left(\sum_{k=0}^n a_k t^k\right)^2\right)}{\left(\sum_{k=0}^n a_k t^k\right)^4}\, , \\ 
    R =& \frac{6 \left(\left(\sum_{k=0}^n k a_k t^{k-1}\right)^2+\left(\sum_{k=0}^n (k-1) k a_k t^{k-2}\right) \sum_{k=0}^n a_k t^k\right)}{\left(\sum_{k=0}^n a_k t^k\right)^2} \, . 
\end{align}
The non-zero components of the Einstein tensor can be calculated
\begin{align}
    G_{tt} =& \frac{3 \left(\sum_{k=0}^n k a_k t^{k-1}\right)^2}{\left(\sum_{k=0}^n a_k t^k\right)^2} \, , \\
    G_{ii} =& -\left(\sum_{k=0}^n k a_k t^{k-1}\right)^2-2 \left(\sum_{k=0}^n (k-1) k a_k t^{k-2}\right) \sum_{k=0}^n a_k t^k \, . \label{eq: polynomial G}
\end{align}
The values of $\rho = - G^t_t$, $p = G^i_i$, and all subsequent energy conditions are trivially read from the previous equation.  No plots can be generated until $\left\{ a_\ell \right\}^n_{\ell = 0}$ are specified.

For the sake of completeness, the Hubble parameter is given
\begin{equation} \label{eq: polynomial H}
    H = \frac{\sum_{k=0}^n k a_k t^{k-1}}{\sum_{k=0}^n a_k t^k}\, .
\end{equation}

\subsubsection{Quadratic model} \label{sec: quadratic model}
We now study a particular subclass of the above polynomial models. To avoid having zeros of $a$ and the associated singularities, we require that our polynomial must have no real roots. In this example, the linear term is included to skew the symmetry of the pure quadratic for purposes of generality:
\begin{equation}\label{apoly}
    a(t) = a_0 t^2 + bt + c \,,
\end{equation}
and for numerics we will take $a_0 = b= 1$, $c= 2$, adhering to our requirement that \eq{apoly} have no real roots.

The Hubble parameter is
\begin{equation}
    H = \frac{b+2 a_0 t}{c + bt + a_0 t^2}\,.
\end{equation}

As with our previous bouncing example, the universe is inflating for the entire evolution $t \in (-\infty, \infty)$:
\begin{equation}
  \frac{\ddot a}{a} = \frac{2 a_0}{c+ bt  + a_0 t^2}\,,
\end{equation}
and $\ddot a = 2 a_0 \, \forall \, t$. 
For the scale factor \eq{apoly}, the first integral of Thm.~1 may be evaluated as an elliptic function. Further we find for the indefinite integral:

\begin{equation}
\int^t a \left( \zeta \right) d\zeta= c t  + b t^2 + \frac{a_0 t^3}{3} \,. 
\end{equation}

For the parameter values we consider in this example ($a_0 = b= 1$, $c= 2$), the integrals diverge over the full set of conditions discussed in Thm~1; hence, the spacetime is geodesically complete. The cosmological evolution is depicted in \fig{cepoly}. Shown are the scale factor $a(t)$, Hubble parameter $H$ and co-moving Hubble radius $1/a H$. As is typical in bouncing models the co-moving Hubble radius diverges at the bounce point. When the co-moving Hubble radius is decreasing the spacetime is inflating--once again we see that in this case it is decreasing for all $t$.
\begin{figure}[H]
\centering
  \includegraphics[width=0.8\linewidth]{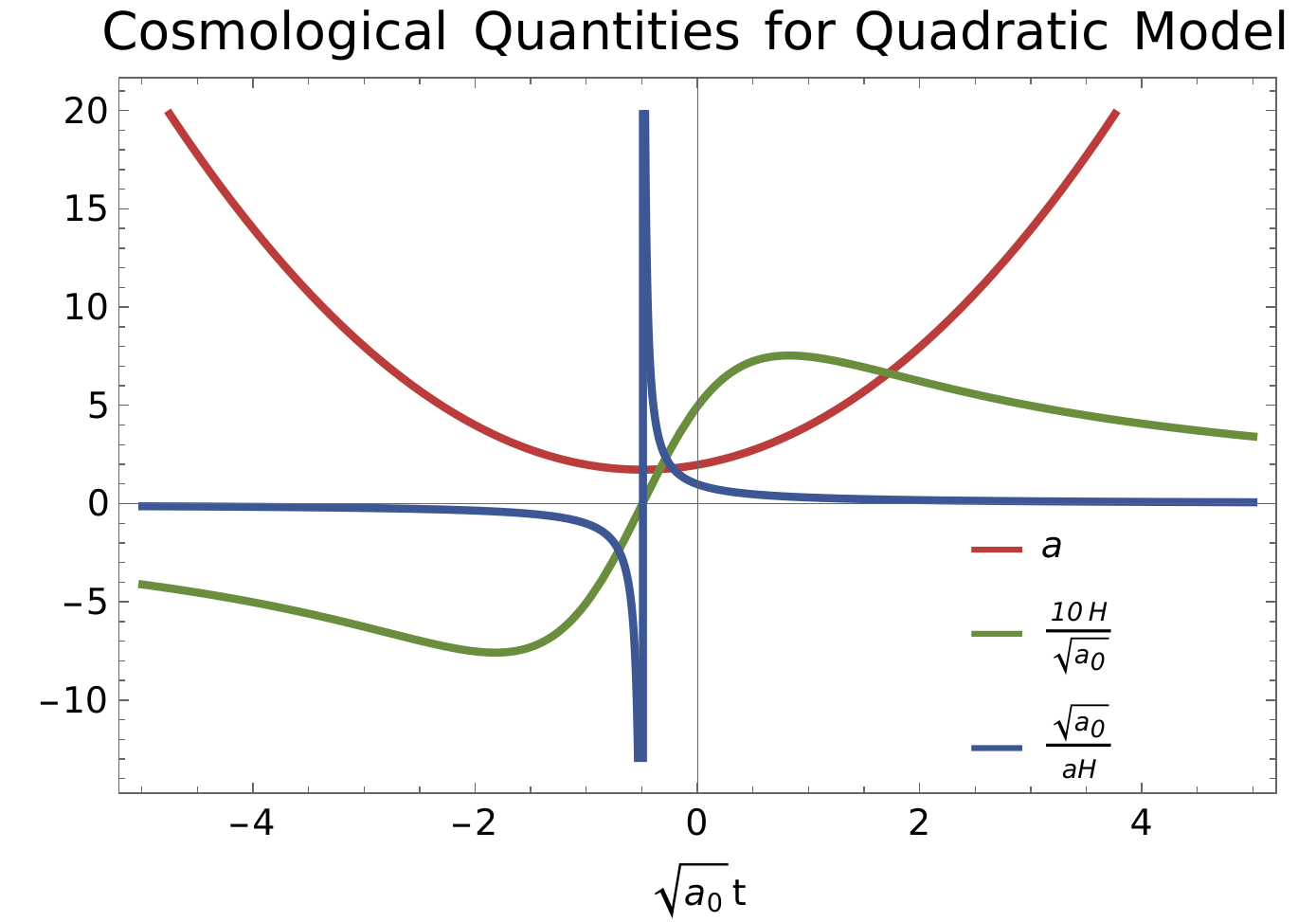}
  \captionof{figure}{Evolutionary behavior for \eq{apoly}. The scale factor (red). The Hubble parameter amplified by an order of magnitude (green). The co-moving Hubble radius (blue). Model parameters:~$a_0=b=1$,~$c=2$.}
  \label{cepoly}
\end{figure}
The Ricci scalar  and Kretchsmann scalar are given by:
\begin{equation}\label{cspolyeq}
    R = \frac{6 \left(b^2 + 6 a_0 b t + 2 a_0 (c + 3 a_0 t^2)\right)}{(c + t (b + a_0 t))^2}\,, \qquad K =\frac{12 \left((b + 2 a_0 t)^4 + 4 a_0^2 (c + t (b + a_0 t))^2\right)}{(c + t (b + a_0 t))^4}
 \,.
\end{equation}
The absence of curvature singularities is shown from the curvature invariants plotted in \fig{cspoly}.  Additionally, the geodesics can be computed in closed form, and the existence of said solution for all time can be shown - see Appendix \ref{appendix: quadratic geodesics}.
\begin{figure}[H]
\centering
  \includegraphics[width=0.8\linewidth]{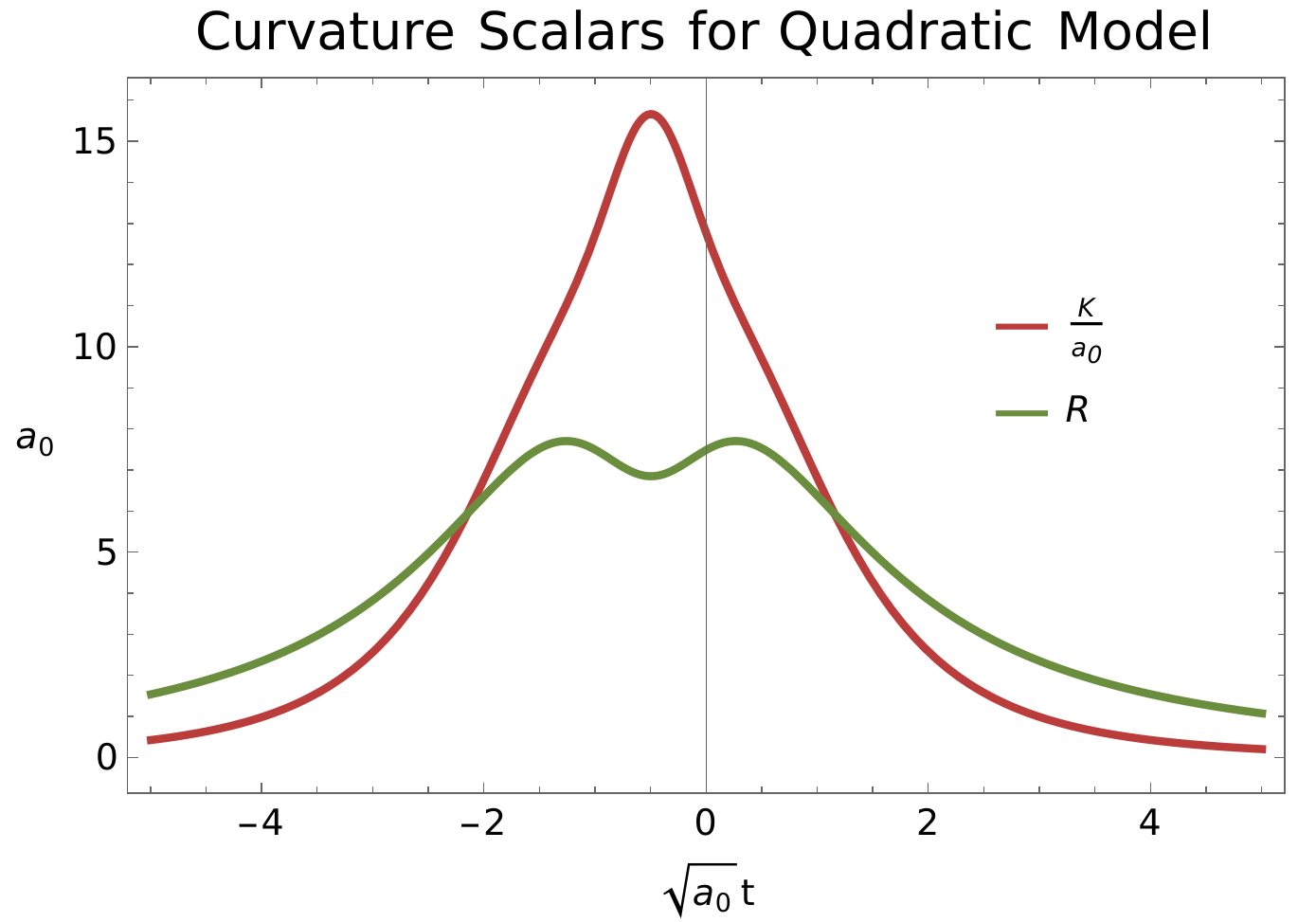}
  \captionof{figure}{Curvature invariants computed in \eq{cspolyeq}. The Ricci scalar (green) and the Kretchmann scalar (red). Model parameters: $a_0=b=1$, $c=2$.}
  \label{cspoly}
\end{figure}
Calculation of the Einstein tensor yields non-vanishing components:
\begin{eqnarray}\label{ecpolyeq}
   G_{tt} &=& \frac{3 (b + 2 a_0 t)^2}{(c + t (b + a_0 t))^2} \,, \nonumber \\
   G_{ii} &=& -b^2 - 8 a_0 b t - 4 a_0 \left(c + 2 a_0 t^2\right) \,.
\end{eqnarray}
The energy density is given by $\rho = - G^t{}_t$ and the pressure is $p = G^i{}_i$. A plot elucidating the energy conditions is given in \fig{ecpoly}
\begin{figure}[H]
\centering
  \includegraphics[width=0.8\linewidth]{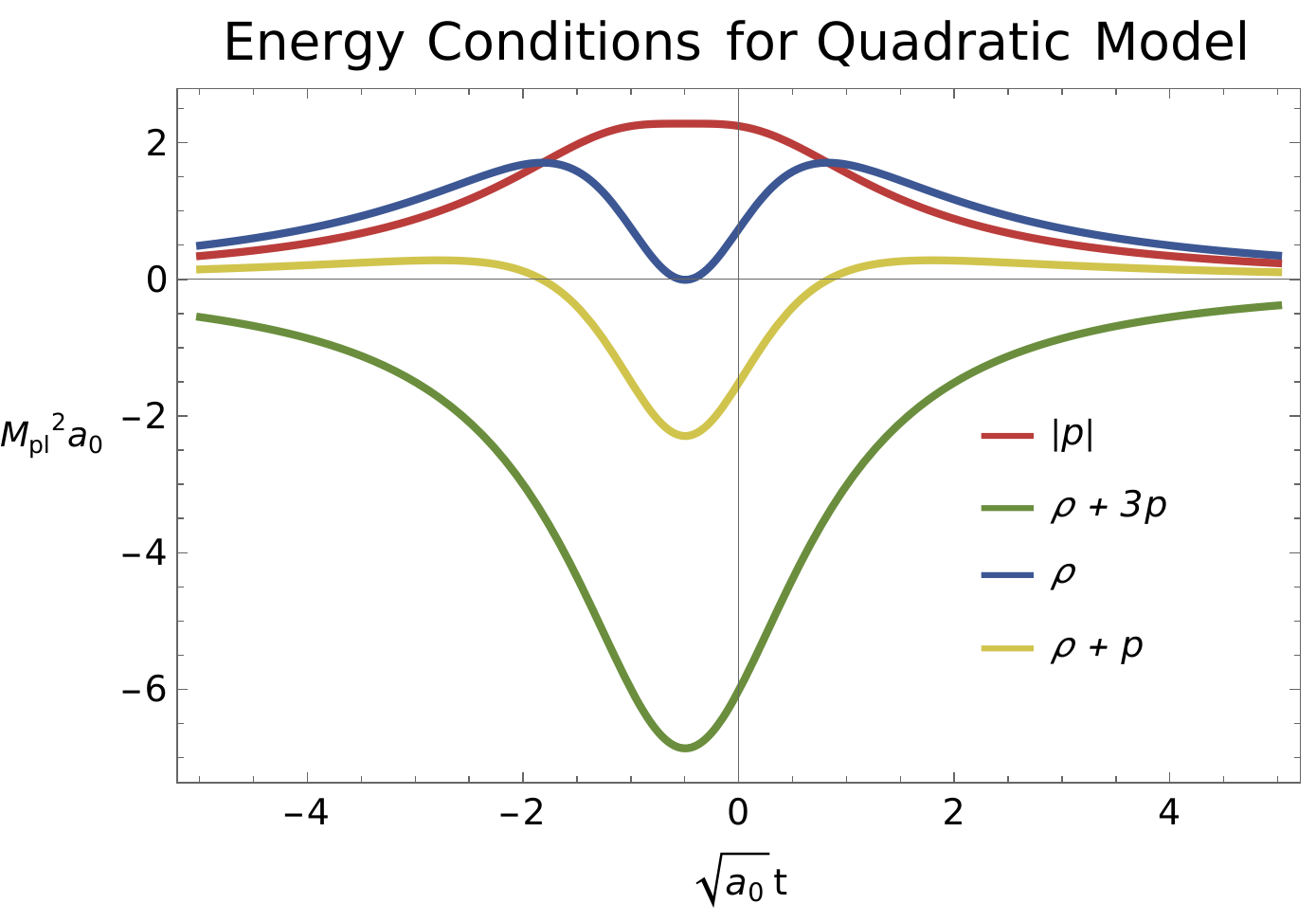}
  \captionof{figure}{Energy conditions from \eq{ecpolyeq}. Plot of energy density $\rho$ (blue), $\rho + p$  (yellow), $|p|$ (red) and $\rho + 3p$ (green).  Model parameters: $a_0=b=1$, $c=2$.}
  \label{ecpoly}
\end{figure}
While all of the classic energy conditions are violated, violation of the NEC can be restricted to an arbitrarily short time interval near the bounce point. Because of this, the asymptotic regions may be broken up into those obeying the NEC and the region violating the NEC.

In this model, and others we consider, the NEC is violated during a finite time interval, $-t_- < t< t_+$. One may break up the ANEC integral into the asymptotic regions $t<-t_-$, $t> t_+$ where other energy conditions ( \eg~NEC, WEC and DEC) are satisfied, and the NEC violating region. In this case the ANEC condition \eq{anec} may be recast as \cite{Giovannini:2017ndw}:
\begin{equation}\label{anecarea}
    \left| \int_{t_-}^{t_+} (\rho + p) d\lambda \right| \leq \int_{-\infty}^{-t_-} (\rho + p) d\lambda + \int_{t_+}^{+\infty} (\rho + p) d\lambda \,,
\end{equation}
where in the above we assume the spacetime curvature invariants are all regular and that the non-spacelike (causal) geodesics are complete and extendible to arbitrary values of their affine parameter. This is the case in all eternal universe solutions we present in this paper. In the energy condition plot, any violation of the ANEC may be ascertained via the inequality \eq{anecarea}, each term being conveniently associated to an area beneath the curves defined by $\rho + p$ in the NEC violating, and NEC satisfying regions.

In this case the ANEC integral gives:
\begin{align}
\int_{- \infty}^\infty (\rho + p ) \, d\lambda 
&= \left(
\frac{2 a_0 (2 a_0 t + b)}{(b^2 - 4 a_0 c)(a_0 t^2 + b t + c)}
- \frac{8 a_0^2 \tan^{-1}\left(\frac{2 a_0 t + b}{\sqrt{4 a_0 c - b^2}}\right)}{(4 a_0 c - b^2)^{3/2}} \right. \nonumber \\
&\quad \left. - \frac{2 a_0 t + b}{(a_0 t^2 + b t + c)^2}
\right) \Big|_{-\infty}^{\infty} \leq 0 \,,
\end{align}
which is negative for generic parameter values indication the ANEC is violated.


In the late universe, only the SEC is violated allowing continued acceleration, as indicated by the negativity of the green curve. This too is an eternally inflating, bouncing cosmology similar to the behavior of \eq{acosh}, but it is somewhat better-behaved with respect to energy conditions as the NEC violation may be restricted to a moment in the very early universe.~\footnote{A modern example of such a bounce with momentary NEC violation is the S-brane model~\cite{Brandenberger:2013zea}.}

\subsubsection{Models from limiting curvature}\label{lchbounce}
One method for building such non-singular bounces stems from the limiting curvature hypothesis (LCH) which stipulates the existence of minimal length (or maximal curvature) scale in nature, presumably somewhere near the Planck length \cite{markov1982, markov1987}, although one may introduce new physics at lower energy scales using this method. By construction, curvature invariants are limited in an action resulting in non-singular solutions. The LCH has been used to successfully construct many non-singular cosmological solutions \cite{Mukhanov1992,Brandenberger1993,Moessner1995,Brandenberger1995,Brandenberger1998,Easson1999, Easson2003ia,Easson:2006jd,Chamseddine:2016uef} in a variety of settings, as well as non-singular black hole solutions \cite{Frolov1989, Frolov1990, Morgan1991,Trodden1993,Easson2003,Easson:2017pfe,Chamseddine2017ktu,Brandenberger:2021jqs, Frolov:2021afd}.

A particular example following from the LCH is given by the scale factor \cite{Chamseddine:2016uef}:
\begin{equation}
a(t) = a_0 \left(c + b t^2 \right)^{1/3} \,,
\end{equation}
for constants $c>0, b>0$, which is capable of smoothly connecting two matter dominated phases having $a \propto t^{2/3}$ via a NEC violating bounce at $t=0$ where $\ddot a>0$. The schematic behavior of the cosmological quantities and the energy conditions are very similar to the polynomial model analyzed above; however, in the present case, even the SEC is obeyed in both the early and late (non-accelerating) universe. The ANEC integral \eqs{anec}{anecarea} may be calculated in terms of elliptic functions:
\begin{align}
\int_{- \infty}^\infty (\rho + p ) \, d\lambda 
&= 
\frac{2 a_0 (2 a_0 t + b)}{(b^2 - 4 a_0 c)(a_0 t^2 + b t + c)}
- \frac{8 a_0^2 \tan^{-1}\left(\frac{2 a_0 t + b}{\sqrt{4 a_0 c - b^2}}\right)}{(4 a_0 c - b^2)^{3/2}} \nonumber \\
&\quad 
- \frac{2 a_0 t + b}{(a_0 t^2 + b t + c)^2}
\Big|_{-\infty}^{\infty} \leq 0 \,,
\end{align}
%
which is negative for generic parameter values.

\section{An eternal loitering universe}\label{loiter}
We now discuss the possibility of an eternal loitering universe \cite{Sahni:1991ks,Alexander:2000xv,Brandenberger:2001kj,Melcher:2023kpd}. We break this section into two model categories, those which actually have an extended past Minkowski phase with $H=0$, and those which are always expanding, but have an extremely slow initial expansion asymptotically approaching Minkowski in the past. The latter we refer to as ``quasi-loitering" models.

\subsection{Loitering eternal universe}\label{subsecloit}
It is possible to construct a loitering eternal universe by smoothly gluing together a long Minkowski past phase to an inflating phase via a smooth bump function. This blueprint was first utilized in our previous work \cite{Lesnefsky:2022fen} to build a loitering model which is complete. 

For this example we take the scale factor to be constant for $t<0$, so that $H=0 \, \forall t<0$, and the spacetime is Minkowski. For $t>1$, the scale factor is exponential, $H$ is constant and the spacetime is quasi-de Sitter. A smooth bump function is used to piece the two regimes together in the regime $t \in [0,1]$.
\begin{figure}[H]
\centering
  \includegraphics[width=1\linewidth]{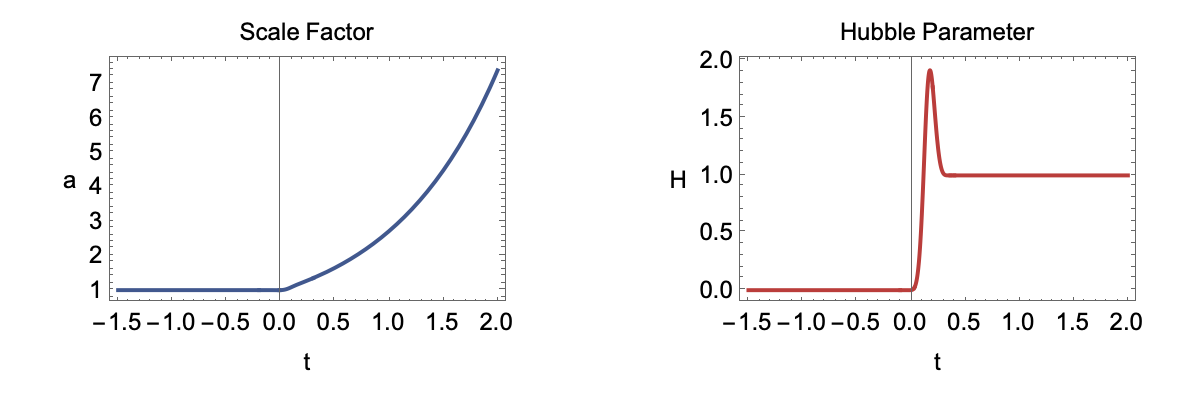}
  \captionof{figure}{Evolutionary behavior for the loitering eternal model of \S \ref{subsecloit}.}
  \label{loitf}
\end{figure}

 Clearly the model is eternal and complete. As with previous examples, the NEC is violated when $\dot H >0$, and inflationary by Defn. \ref{LEDinflationdefn} with $\ddot a>0$ for $t>0$. The Minkowski period with $H=0$ satisfies the bouncing conditions with $\dot a=0$ and with non-trivial scale factor overall, of Defn. \ref{ELbouncedefn}. The model has an accelerating de Sitter phase at late times with $H=const$. The ``phantom-like'' physics required to push the universe from the static 
$H = 0$ plateau into the de~Sitter phase imprints an overall negative ANEC integral.

Using this technique it is simple to build an infinite number of interesting eternal and complete cosmological spacetimes. A category of geodesically complete spacetimes which we do not focus on here are cyclic models which are defined by multiple repeating bounces (see \eg~\cite{Penrose:2014vok,Steinhardt:2001st,Steinhardt:2002ih, Khoury:2003rt,Ijjas:2019pyf,Lehners:2013cka}). Such models are easy to construct using the above composite method. 
We dedicate further study of such composite spacetimes to future work \cite{ELfrankenstein}. 

\subsection{Quasi-loitering eternal universe}
For our final example of an eternal universe we construct what we refer to as a quasi-loitering universe–in that the scale factor is always expanding, although initially at an exponentially small (yet accelerated) rate before entering an increased accelerated inflationary phase. The situation at early time is reminiscent of the so-called Galileon Genesis model \cite{Creminelli:2010ba}, in which a universe that is asymptotically Minkowski in the past expands with $\dot H > H^2$, violating the NEC with an increasing energy density $\rho$; although, in the current model this phase naturally comes to an end.  Eventually the inflationary phase transitions, the energy density decreases and the universe enters another period of extremely slow (negative accelerated) non-inflationary expansion. The quasi-loitering scale factor is:
\begin{equation}\label{atanh}
    a(t)= a_0 \tanh\left(\frac{t}{\alpha}\right) + c
 \,.
\end{equation}
In order to achieve geodesic completeness we require sufficient lifting of the tanh function, requiring $c>a_0$. The Hubble parameter is:
\begin{equation}
    H = \frac{a_0 \operatorname{sech}^2\left(\frac{t}{\alpha}\right)}{\alpha (c + a_0 \tanh\left(\frac{t}{\alpha}\right))}\,.
\end{equation}

At early times asymptotic into the past, as $t \rightarrow - \infty$, the universe is very slowly inflating.  This inflation begins to increase
with a sudden growing energy density (the continuity equation in GR gives $\dot \rho = - 3 H (\rho +p)$) and a strong NEC violating phase with $\dot H >0$, $\rho + p<0$. The acceleration equation gives:
\begin{equation}
\frac{\ddot a}{a} = -\frac{2 a_0 \sech^2\left(\frac{t}{\alpha}\right) \tanh\left(\frac{t}{\alpha}\right)}{\alpha^2 (c + a_0 \tanh\left(\frac{t}{\alpha}\right))} \,.
\end{equation}

As $t \rightarrow \pm \infty$, the Hubble parameter approaches $H  \rightarrow 0$. These zeros of $H$ are examples of bounce points at infinity as identified by Defn.~\ref{ELbouncedefn}, and hence, this is both an inflationary and a bouncing spacetime.

For the scale factor \eq{atanh}, it is possible to explicitly calculate the integrals of Thm~1. We find for the indefinite integrals:
\begin{equation}
\begin{aligned}
& \int^t \frac{a(\zeta) \, d\zeta}{\sqrt{\left(a(\zeta)\right)^2 + 1}} \\
&= \frac{1}{2\alpha} \left(\frac{(a_0 - c) \operatorname{arctanh}\left(\frac{1 - a_0 c + c^2 + a_0 (-a_0 + c) \tanh\left(\frac{t}{\alpha}\right)}{\sqrt{1 + (a_0 - c)^2} \sqrt{1 + c^2 + a_0 \tanh\left(\frac{t}{\alpha}\right)(2c + a_0 \tanh\left(\frac{t}{\alpha}\right))}}\right)}{\sqrt{1 + (a_0 - c)^2}} \right. \\
&\left. \quad + \frac{(a_0 + c) \operatorname{arctanh}\left(\frac{1 + c (a_0 + c) + a_0 (a_0 + c) \tanh\left(\frac{t}{\alpha}\right)}{\sqrt{1 + (a_0 + c)^2} \sqrt{1 + c^2 + a_0 \tanh\left(\frac{t}{\alpha}\right)(2c + a_0 \tanh\left(\frac{t}{\alpha}\right))}}\right)}{\sqrt{1 + (a_0 + c)^2}} \right)
\end{aligned}
\end{equation}
and
\begin{equation}
\int^t a \left( \zeta \right) d\zeta=c t + a_0 \alpha \log\left(\cosh\left(\frac{t}{\alpha}\right)\right) \,. 
\end{equation}

As with our previous examples, the above integrals diverge over the full set of conditions discussed in Thm.~1 for all (non-zero) values of $\alpha$; hence, the spacetime is geodesically complete. The cosmological evolution is depicted in \fig{cetanh}. Shown are the scale factor $a(t)$, Hubble parameter $H$ and co-moving Hubble radius $1/a H$. 
The inflation occurs at early times and eventually ends leading to a decelerating expanding phase with $\ddot a/a <0$. The future is a very slowly expanding (quasi-future-loitering) phase during which the energy conditions become well-behaved except the DEC is violated at late times and asymptotically approaches saturation, see Fig.~\ref{ectanh}. 

\begin{figure}[H]
\centering
  \includegraphics[width=0.8\linewidth]{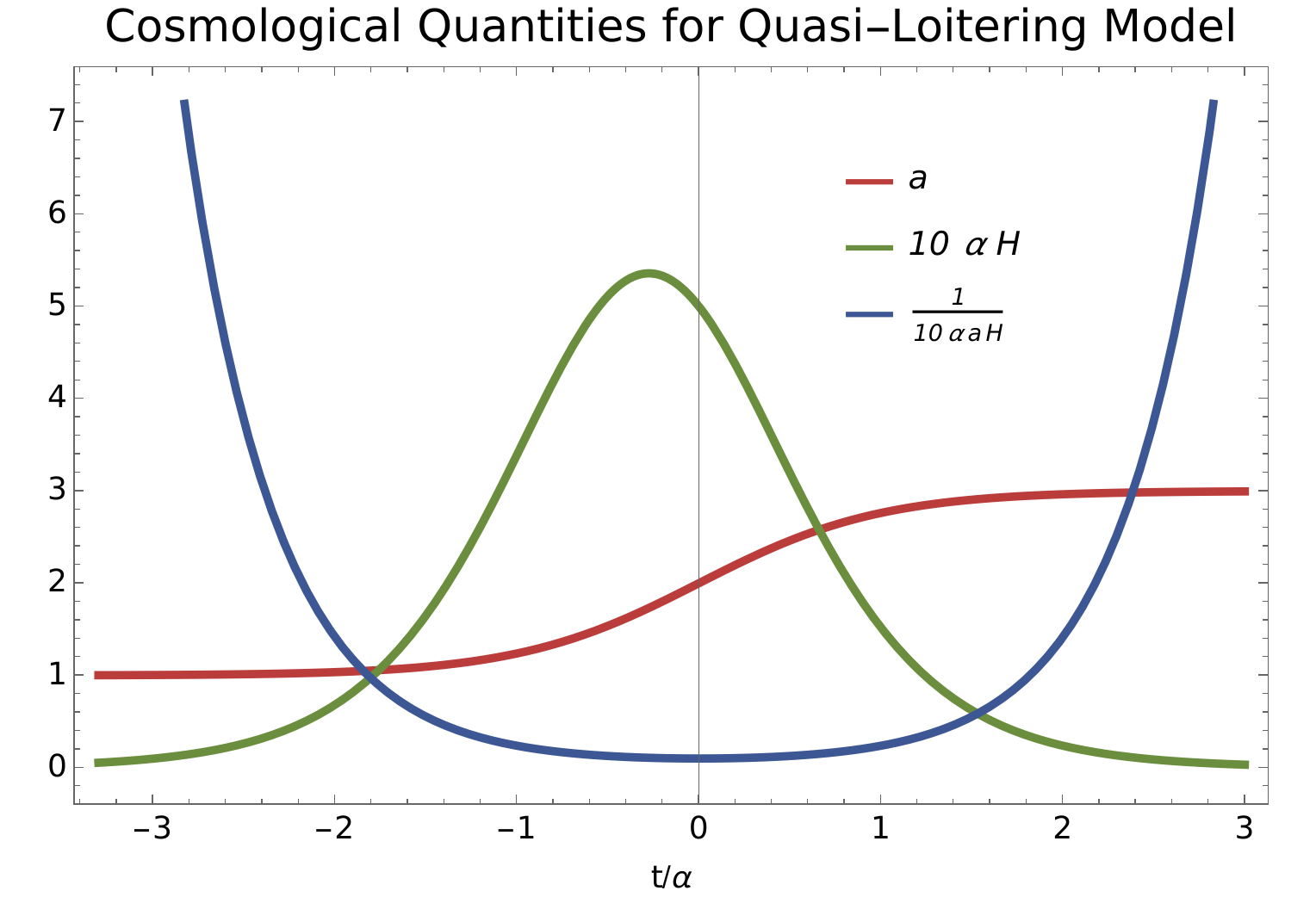}
  \captionof{figure}{Evolutionary behavior for \eq{atanh}. The scale factor (red). The Hubble parameter (green) magnified by an order of magnitude. The co-moving Hubble radius suppressed by an order of magnitude (blue). Model parameters: $a_0=\alpha=1$,~$c=2$.}
  \label{cetanh}
\end{figure}

The Ricci scalar  and Kretchsmann scalar are given by:
\begin{eqnarray}
    R &=&-\frac{6 a_0 \operatorname{sech}\left(\frac{t}{\alpha}\right)^4 \left(a_0 (-2 + \cosh\left(\frac{2t}{\alpha}\right)) + c \sinh\left(\frac{2t}{\alpha}\right)\right)}{\alpha^2 (c + a_0 \tanh\left(\frac{t}{\alpha}\right))^2} \nonumber \,, \\
    K &=& \frac{6 a_0^2 \operatorname{sech}\left(\frac{t}{\alpha}\right)^8 \left(5 a_0^2 - c^2 - 4 a_0^2 \cosh\left(\frac{2t}{\alpha}\right) + (a_0^2 + c^2) \cosh\left(\frac{4t}{\alpha}\right) + 16 a_0 c \cosh\left(\frac{t}{\alpha}\right) \sinh\left(\frac{t}{\alpha}\right)^3\right)}{\alpha^4 (c + a_0 \tanh\left(\frac{t}{\alpha}\right))^4} \nonumber \,.
\end{eqnarray}

The absence of curvature singularities is shown from the curvature invariants plotted in 
\fig{cstanh}

\begin{figure}[H]
\centering
  \includegraphics[width=0.8\linewidth]{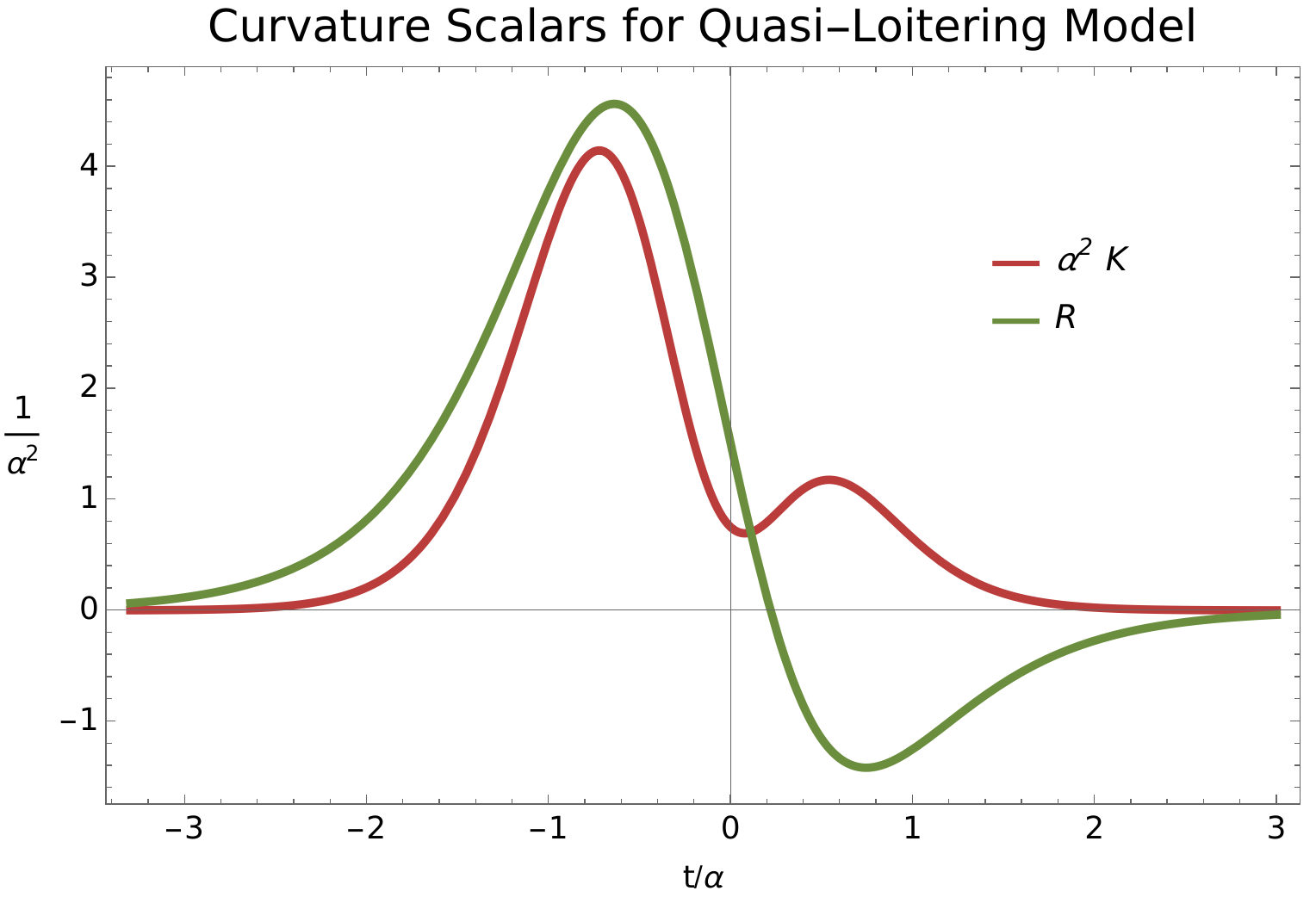}
  \captionof{figure}{Curvature invariants computed from \eq{atanh}. The Ricci scalar (green) and the Kretchmann scalar (red). Model parameters: $a_0=\alpha=1$, $c=2$.}
  \label{cstanh}
\end{figure}

Calculation of the Einstein tensor yields non-vanishing components:
\begin{eqnarray}\label{ectanheq}
   G_{tt} &=&-\frac{3 a_0^2 \operatorname{sech}^4\left(\frac{t}{\alpha}\right)}{\alpha^2 (c + a_0 \tanh \left(\frac{t}{\alpha}\right))^2}
 \,, \nonumber \\
   G_{ii} &=& \frac{a_0 \operatorname{sech}^4\left(\frac{t}{\alpha}\right) \left(-3 a_0 + 2 a_0 \cosh\left(\frac{2t}{\alpha}\right) + 2 c \sinh\left(\frac{2t}{\alpha}\right)\right)}{\alpha^2 } \,.
\end{eqnarray}

The energy density is given by $\rho = - G^t{}_t$ and the pressure is $p = G^i{}_i$. A plot elucidating the energy conditions is given in \fig{ectanh}.

\begin{figure}[H]
\centering
  \includegraphics[width=0.8 \linewidth]{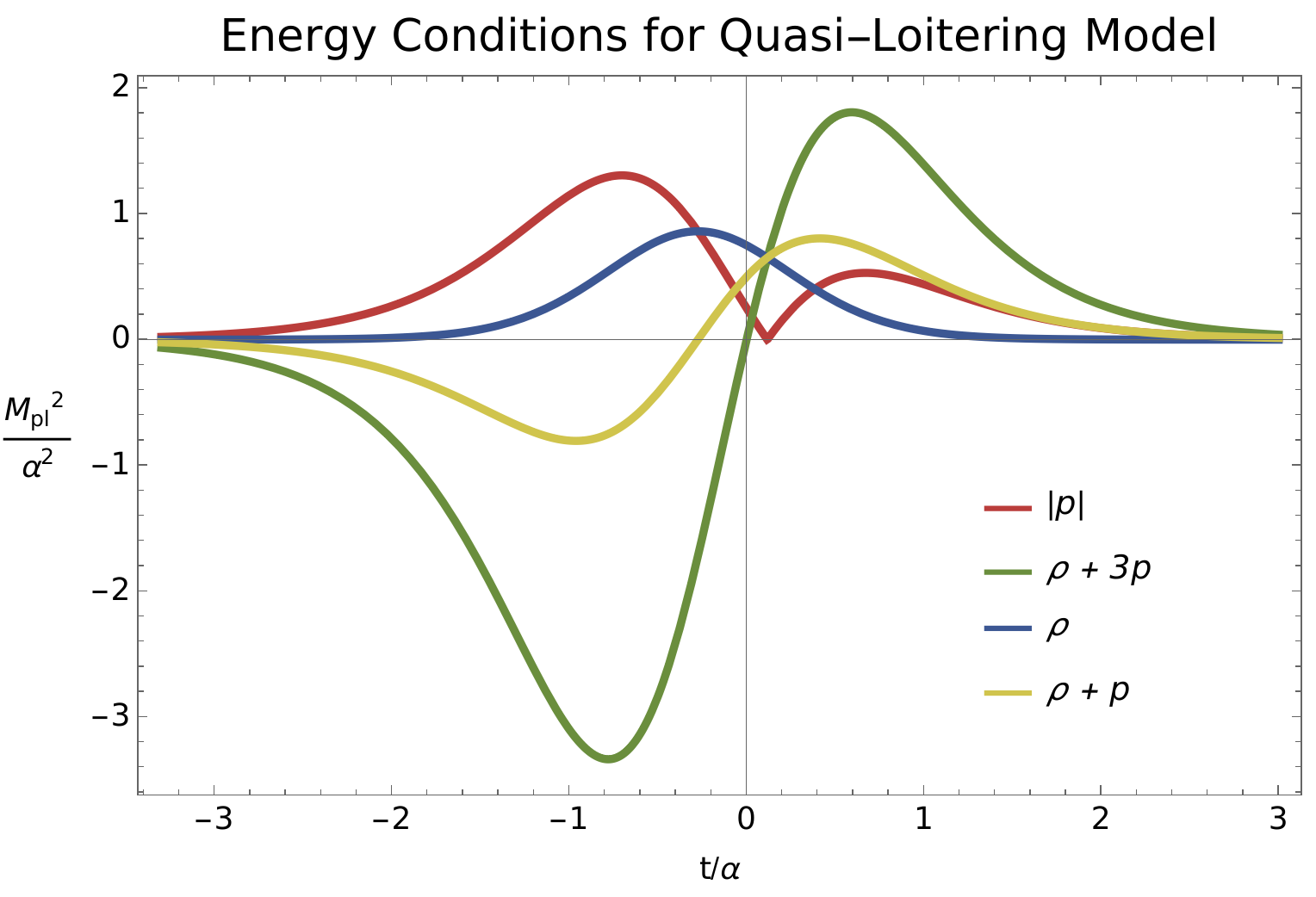}
  \captionof{figure}{Energy conditions from \eq{ectanheq}. Plot of energy density $\rho$ (blue), $\rho + p$  (yellow), $|p|$ (red) and $\rho + 3p$ (green). Legend quantities are rendered dimensionless by the factor $\alpha^2/M^2_{pl}$. Model parameters: $a_0=\alpha=1$, $c=2$.}
  \label{ectanh}
\end{figure}

At early times, all of the classical energy conditions are violated. At late times the WEC, NEC and SEC are all satisfied. Only the DEC is mildly violated at late times; however, there exists a brief period for small positive $t$, during which \emph{all} of the energy conditions are satisfied. This period lasts from $t=0$ when the SEC (green curve) starts to be satisfied until the moment the red curve grows larger than the blue curve where the DEC begins to be violated. This healthy period can be past extended to accommodate an inflationary phase where only the SEC is violated (starting where NEC is valid and yellow curve intersects the horizontal line). The inflationary period naturally ends when the SEC starts to be obeyed (green curve intersects the horizontal line). For the ANEC, we evaluate \eq{anec}
\begin{align}
    \int_{- \infty}^\infty \left(\rho + p \right) d\lambda &=  \frac{2}{\alpha a_0} \left(  \frac{c}{a_0 \tanh \left(\frac{t}{\alpha }\right)+c}+\log \left(a_0 \sinh \left(\frac{t}{\alpha }\right)+c \cosh \left(\frac{t}{\alpha }\right)\right) \right. \nonumber \\
    & \qquad \qquad  \left. \left. -\log \left(\cosh \left(\frac{t}{\alpha }\right)\right) -\frac{a_0^2}{2\left(a_0 \sinh \left(\frac{t}{\alpha }\right)+c \cosh \left(\frac{t}{\alpha }\right)\right)^2} -1 \right) \right|_{-\infty}^\infty \nonumber \\
    & \leq 0 \,.
\end{align}

Hence, the ANEC is violated.

\section{Discussion}\label{conclude}
Here we summarize our key findings. We see from the energy condition plots above that all of the models we have scrutinized embrace at least some period of null energy condition violation:
\begin{conj}\label{conjnec}
    In General Relativity, every smooth, non-constant scale factor a(t) of a geodesically complete, flat FRW spacetime must violate the NEC during at least some period of time.
\end{conj}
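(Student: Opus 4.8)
The plan is to recast the null energy condition as a monotonicity statement for the Hubble rate and then play it against the null-completeness criterion of \thref{ledthm}. For a flat FRW spacetime the Friedmann equations give $\dot H = -4\pi G(\rho+p)$ (the relation $\dot H\propto-\rho(1+w)$ used earlier in the text), so the NEC $\rho+p\ge 0$ holds at every time if and only if $\dot H\le 0$ for all $t$; equivalently $H=\dot a/a$ is a continuous, non-increasing function on $\mathbb{R}$. I would therefore argue by contradiction: suppose $a>0$ is smooth and non-constant, the spacetime is geodesically complete, and the NEC holds for all $t$, so that $H$ is non-increasing. The goal is to show this is incompatible with $a$ being simultaneously non-constant and null complete at both ends, where by \thref{ledthm} null completeness demands that $\int^{\infty}a\,dt$ and $\int_{-\infty}a\,dt$ both diverge.

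Next I would introduce the asymptotic limits $L_-=\lim_{t\to-\infty}H(t)$ and $L_+=\lim_{t\to+\infty}H(t)$, which exist in $[-\infty,+\infty]$ precisely because $H$ is monotone, and which satisfy $L_-\ge H(t)\ge L_+$. The argument then splits into three cases. If $L_->0$, then $H(t)\ge L_-/2>0$ for all sufficiently negative $t$, so integrating $\tfrac{d}{dt}\ln a=H$ backwards yields an exponential upper bound $a(t)\le C\,e^{(L_-/2)t}$ in the far past; hence $\int_{-\infty}a\,dt<\infty$ and the spacetime is past null incomplete, a contradiction (this is exactly the mechanism that renders flat de Sitter slicing past incomplete). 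The mirror-image estimate shows that $L_+<0$ forces future null incompleteness. The only surviving possibility is $L_-\le 0\le L_+$; combined with $L_-\ge L_+$ this pins $L_-=L_+=0$, and a non-increasing function whose supremum $L_-$ and infimum $L_+$ both vanish must be identically zero. Thus $H\equiv 0$, i.e. $a$ is constant, contradicting non-triviality and completing the argument.

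I expect the main obstacle to be conceptual rather than computational: it is choosing the right variable. The earlier Proposition~\ref{cor: complete are inflationary} controlled the concavity of $a$ through Lemma~\ref{lemma: no globally concave pos func exist}, but the NEC constrains $\dot H$ rather than $\ddot a$, and the concavity route stumbles on the diffeomorphism subtlety flagged in the surrounding discussion; passing to monotonicity of $H$ sidesteps this entirely. The residual care is purely technical---securing the uniform asymptotic lower bound on $|H|$ needed for the exponential-decay estimate, and treating the degenerate limits $L_\pm=\pm\infty$, where the decay is only faster and the conclusion is unchanged. Finally I would stress that flatness is essential, since it is precisely what renders the NEC equivalent to $\dot H\le0$; spatial curvature spoils this equivalence, consistent with the abstract's caveat that including curvature alters the averaged statement.
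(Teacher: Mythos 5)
Your argument is correct, and it actually goes beyond what the paper establishes: the statement appears there only as Conjecture~\ref{conjnec}, supported by examples (the polynomial models) and by the observation that a bounce at finite $t$ in a flat FRW forces $\dot H>0$; the text explicitly concedes that the general non-bouncing case ``remains to be shown.'' Your route closes that gap. The reduction NEC $\Leftrightarrow\dot H\le 0$ is exactly the relation the paper itself invokes ($\dot H=-4\pi G(\rho+p)$ for $k=0$, so flatness is indeed essential), and once $H$ is non-increasing the trichotomy on its monotone limits $L_\pm$ is airtight: $L_->0$ gives $a(t)\le C e^{ct}$ with $c>0$ in the far past, so $\int_{-\infty}^{t_0}a\,d\zeta<\infty$ and the spacetime is past null incomplete by Thm.~\ref{ledthm} (the flat de Sitter mechanism promoted to a no-go); $L_+<0$ is the mirror image in the future; and the residual case forces $\sup H=\inf H=0$, hence $a$ constant. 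This cleanly sidesteps the concavity route of Prop.~\ref{cor: complete are inflationary}, which controls $\ddot a$ rather than the NEC-relevant quantity $\dot H$. Only two cosmetic points deserve explicit mention in a write-up: the degenerate limits $L_-=+\infty$ and $L_+=-\infty$ should be handled by fixing any positive lower bound on $|H|$ near the relevant end rather than by the literal expression $L_-/2$, and one should note that divergence of the null integral for a single $t_0$ is equivalent to divergence for all $t_0$ since the integrand is positive and finite on compact sets, so checking one endpoint per time direction suffices.
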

For example, it is well known in a flat $k=0$ (or hyperbolic $k=-1$) FRW spacetime, the NEC must be violated in order to achieve a cosmological bounce \cite{Hawking:1970zqf}. During a bounce through a local minimum, the universe transitions from a contracting phase, $H<0$, to an expanding phase, $H>0$. This transition inherently requires that at the point where the contraction halts and expansion begins, the derivative of the Hubble parameter, $\dot H$, must be positive. Since $\dot H = - 4 \pi G (\rho + p)$, we must have $\rho + p<0$, signalling violation of the NEC. Thus, any such spacetime which exhibits a bounce, or bounces, for any part of it's history must violate NEC.  Likewise any super-inflationary models (by definition) have $\dot H>0$ and violate NEC. With non-zero curvature $k$, one can produce a bounce without violating NEC since, $\dot H = - 4 \pi G (\rho + p) + k/a^2$, can become positive at the bounce in the case of positive $k$; however, such universes are closed if the matter sector obeys NEC and SEC.
While the NEC holds, $\ddot a/a = H^2 + \dot H = - 4 \pi G (\rho/3 + p)$, and $H=0$ at the bounce, $\ddot a/a>0$ and the SEC is violated. Curvature bounces are discussed in detail in \cite{1978SvAL....4...82S,Graham:2011nb,Burwig2025}. 

Conjecture \ref{conjnec} clearly holds in the case of the general polynomial inflationary models given by \eq{apoly}. Any such model with non-constant scale factor that are complete will be inflationary by Corollary~\ref{cor: complete are inflationary} and clearly contain (at least) one bounce (hence, violating the NEC).  

It remains only to show that all models which are geodesically complete (in principle, do not bounce) and involve an inflationary phase of arbitrary length violate NEC; although, as one can see \emph{all} of the above models satisfy our definition of a bouncing cosmology Defn.~\ref{ELbouncedefn} (essentially, a place where $\dot a=0$). While most of the models demonstrate the bounce phase clearly, it is worth revisiting 
the eternal inflationary model \eq{aplusc}, in this context. 

Our exponential model \eq{aplusc}, can be expanded in a Taylor expansion. The Maclaurin series of the relevant portion of the scale factor is
\begin{equation}\label{polyexp}
   e^{\frac{2t}{\alpha}} = \sum_{n=0}^{\infty} \frac{\left(\frac{2t}{\alpha}\right)^n}{n!} = 1 + \frac{2t}{\alpha} + \frac{\left(\frac{2t}{\alpha}\right)^2}{2!} + \frac{\left(\frac{2t}{\alpha}\right)^3}{3!} + \cdots \,,
\end{equation}
and is compatible with \eq{gapoly}. Applied to the scale factor \eq{aplusc}, the odd powered polynomial terms are negative as $t \rightarrow - \infty$, and effectively cancel the positive even powered terms ultimately creating a ``bounce point at infinity" in accordance with Defns.~\ref{ELbouncedefn},\ref{defn: bounce at infinity} at $t \rightarrow - \infty$, with constant scale factor $a=c$ and with $H=0$. Thus, our eternal inflating model \eq{aplusc} is also a bouncing cosmology par Defn.\ref{ELbouncedefn}. To elucidate our discussion we remind the reader of the general polynomial bounce discussed in \S{\ref{sec:poly}} and \eqs{gapoly}{apoly}. In essence the bounce point of a finite series polynomial, such as \eq{apoly}, is in this case translated to $t \rightarrow -\infty$, where 
\begin{equation}
    \lim_{t \rightarrow -\infty} H =0 \,.
\end{equation}

In further support of this notion of the ``bounce", one sees that the bouncing cosh model \eq{acosh}, having $\dot a (t_0)=0$ at the bounce point,
is recovered by adding together two copies of the exponential model \eq{aplusc}, one with $\alpha>0$ and one with $\alpha<0$, each having its own ``bounce at infinity" at $t= \pm \infty$ where $H=0$:
\begin{equation}
a_0 \, e^\frac{2 t}{\alpha} + c + a_0 \, e^\frac{-2 t}{\alpha} + c = 2 \left(c + a_0 \cosh\left(\frac{2 t}{\alpha}\right)\right) \,.
\end{equation}

On the path to $t=+\infty$, we find all terms in the polynomial expansion \eq{polyexp} are positive, and sum to give a de Sitter inflating phase with constant $H$:
\begin{equation}
     \lim_{t \rightarrow \infty} H =2/\alpha \,.
\end{equation}

Indeed, \emph{all} of the geodesically complete spacetimes presented have at least some period of NEC violation, and a phase with $\dot a=0$, and may be classified as bouncing models in accordance with Defn.~\ref{ELbouncedefn}:
\begin{conj}
    In General Relativity, every smooth, non-constant scale factor a(t) of a geodesically complete FRW spacetime must experience at least one bounce, where said bounce may be at infinity.
\end{conj}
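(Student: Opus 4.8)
The plan is to prove the statement directly from the definitions of a bounce, Defn.~\ref{ELbouncedefn}, and a bounce at infinity, Defn.~\ref{defn: bounce at infinity}, by a clean dichotomy on the zeros of $\dot a$. Throughout, $a \in \mathcal{C}^\infty(\mathbb{R})$ with $a>0$ and $a$ non-constant, as required by the GFRW structure $\mathbb{R}^1_1 \times_a \Sigma$ underlying the FRW spacetime. Since $a$ is smooth, $\dot a$ is continuous on all of $\mathbb{R}$, so exactly one of two situations occurs: either $\dot a$ vanishes at some finite $t_0$, or $\dot a$ never vanishes. In the first case $\dot a(t_0)=0$ for non-constant $a$ is precisely a finite bounce by Defn.~\ref{ELbouncedefn}, and we are done. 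The entire content of the statement therefore lies in the second case, which I would handle by producing a bounce at infinity.

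So suppose $\dot a$ has no zero. By continuity and the intermediate value theorem $\dot a$ has constant sign, hence $a$ is strictly monotonic; without loss of generality take $\dot a>0$ (the case $\dot a<0$ is recovered by the time reversal $t \mapsto -t$, which is a symmetry of the setup and interchanges the two ends). As $t\to-\infty$ the increasing function $a$ decreases and is bounded below by $0$, so $\lim_{t\to-\infty}a(t)=L$ exists with $L=\inf a\ge 0$ finite. The next step is to show that positivity of $a$ forces $\dot a$ to approach $0$ along a sequence running to $-\infty$.

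The key estimate is the following. Suppose, for contradiction, that $\liminf_{t\to-\infty}\dot a(t)=\delta>0$. Then there is a $T$ with $\dot a(s)>\delta/2$ for all $s<T$, and the fundamental theorem of calculus gives, for every $t<T$,
\begin{equation}
a(t)=a(T)-\int_t^T \dot a(s)\,ds < a(T)-\frac{\delta}{2}\,(T-t)\,,
\end{equation}
whose right-hand side diverges to $-\infty$ as $t\to-\infty$, contradicting $a>0$ (equivalently, contradicting $a\to L\ge 0$). Hence $\liminf_{t\to-\infty}\dot a(t)=0$, and since $\dot a$ is continuous and positive one extracts a strictly decreasing divergent sequence $\{t_\ell\}\to-\infty$ with $\dot a(t_\ell)\to 0$; as a convergent real sequence it is Cauchy, so $\{\dot a(t_\ell)\}$ satisfies exactly the hypotheses of a bounce at negative infinity in Defn.~\ref{defn: bounce at infinity}. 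The symmetric computation for $\dot a<0$ produces a bounce at positive infinity, completing the dichotomy.

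The step I expect to be the main obstacle is not any single estimate but the careful bookkeeping needed to certify that the extracted limiting behavior meets the precise technical requirements of Defn.~\ref{defn: bounce at infinity} --- a monotone divergent sequence $\{t_\ell\}$ together with a convergent (Cauchy) sequence $\{\dot a(t_\ell)\}$ --- and to confirm that the two-case split is genuinely exhaustive, which rests on the continuity of $\dot a$ and the strict positivity of $a$ doing the real work. It is worth remarking that this argument complements Cor.~\ref{cor: complete are inflationary}: there the positivity of $a$ (via Lemma~\ref{lemma: no globally concave pos func exist}) forbids global concavity and forces an inflationary epoch, while here the same positivity, now applied to the first derivative, forbids $\dot a$ from being bounded away from zero at the end where $a$ attains its infimum and thereby forces a bounce.
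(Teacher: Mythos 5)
Your argument is correct, and it is worth emphasizing that the paper offers \emph{no} proof of this statement at all: it is left as a conjecture, supported only by the catalogue of examples and by the heuristic Maclaurin-series discussion of the exponential model (the observation that $H\to 0$ as $t\to-\infty$ for the ``$+c$'' scale factor). Your dichotomy --- either $\dot a$ has a finite zero, which is a bounce by Defn.~\ref{ELbouncedefn}, or $\dot a$ has constant sign, in which case integrability of $\dot a$ over the half-line where $a$ approaches its infimum forces $\liminf \dot a = 0$ and hence a sequence meeting the Cauchy requirement of Defn.~\ref{defn: bounce at infinity} --- is exhaustive and each branch is sound; the contradiction $a(t) < a(T) - \tfrac{\delta}{2}(T-t) \to -\infty$ correctly rules out $\liminf_{t\to-\infty}\dot a > 0$, and your use of $\liminf$ rather than $\lim$ is exactly what the definition of a bounce at infinity demands (it tolerates $\dot a$ having spikes so long as it dips toward zero along some divergent sequence). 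In effect you have upgraded the conjecture to a theorem in the same way the paper's Lemma~\ref{lemma: no globally concave pos func exist} upgrades the inflationary half of Conj.~\ref{conj:both} to Prop.~\ref{cor: complete are inflationary}. One point deserves to be made explicit rather than left implicit: geodesic completeness enters your proof only through the requirement that $a$ be smooth, strictly positive, and defined on all of $\mathbb{R}$ --- which in the paper's GFRW framework is built into the definition, and which for a genuine FRW spacetime is forced by completeness of the comoving geodesics. Stating that reduction up front would make clear that your result is actually slightly stronger than the conjecture as phrased.
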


Combining this with Cor.\ref{cor: complete are inflationary} we formulate a provocative composite conjecture:
\begin{conj}\label{conj:both}
    Every geodesically complete eternal spacetime which admits a neighborhood isometric to a GFRW with a non-constant scale factor will inflate for some time and bounce at least once, where said bounce may be at infinity.
\end{conj}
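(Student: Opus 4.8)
The plan is to establish the two halves of Conjecture~\ref{conj:both} separately and then combine them. The inflationary half is essentially in hand: upon reducing to the scale factor $a:\mathbb{R}\to\mathbb{R}^+$ of the global GFRW chart, Proposition~\ref{cor: complete are inflationary}---whose proof rests on Lemma~\ref{lemma: no globally concave pos func exist}---guarantees a point $t_0$ with $\ddot a(t_0)>0$, hence an inflationary epoch in the sense of Definition~\ref{LEDinflationdefn}. The real substance therefore lies in the bouncing half, for which I would argue directly from the analytic behavior of $a$ rather than from the field equations.

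For the bounce I would perform a trichotomy on the sign of $\dot a$. If $\dot a$ vanishes at some finite $t_0$, that point is a bounce in the sense of Definition~\ref{ELbouncedefn} and we are done. Otherwise $\dot a$ is nowhere zero, and since $a$ is smooth, $\dot a$ keeps a constant sign, so $a$ is strictly monotone; without loss of generality take $\dot a>0$. The key claim is then $\liminf_{t\to-\infty}\dot a(t)=0$. Indeed, $\dot a>0$ gives $\liminf\ge 0$, while if the $\liminf$ were some $m>0$ there would be a $T$ with $\dot a>m/2$ for all $t<T$, whence $a(t)=a(T)-\int_t^T\dot a\,ds\le a(T)-\tfrac{m}{2}(T-t)\to-\infty$ as $t\to-\infty$, contradicting $a>0$. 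From $\liminf_{t\to-\infty}\dot a=0$ I extract a strictly decreasing sequence $t_\ell\to-\infty$ with $\dot a(t_\ell)\to0$; being convergent, this sequence is Cauchy, so it realizes a bounce at negative infinity in the sense of Definition~\ref{defn: bounce at infinity}. The case $\dot a<0$ is symmetric and yields a bounce at positive infinity by running the same estimate as $t\to+\infty$. Combining this with the inflationary half then closes the conjecture. Note that this half is purely analytic and does not invoke completeness; completeness enters only through the reduction described next.

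The hard part will be the passage from the statement as phrased---\emph{admits a neighborhood isometric to a GFRW}---to the global analysis above. Both the inflationary Lemma~\ref{lemma: no globally concave pos func exist} and the $\liminf$ estimate require $a$ to be defined on all of $\mathbb{R}$, whereas the hypothesis only furnishes a chart $(b,c)\times_a V$ on a possibly bounded interval. The crux is therefore to exploit geodesic completeness of the ambient $\mathcal{M}$, through Theorem~\ref{ledthm}, to show that this chart extends to a full GFRW chart $\mathbb{R}^1_1\times_a\Sigma$: one must rule out that a finite endpoint $b$ or $c$ is a genuine boundary of $\mathcal{M}$---a curvature singularity, a causal boundary where the light cones tip over, or a locus where $a$ fails to remain positive---since completeness forbids a causal geodesic from terminating there at finite affine parameter. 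I expect this reduction, controlling $a$ near the chart endpoints and certifying that the warped-product structure persists out to all of $\mathbb{R}$, to be the only genuinely delicate step; once the domain is established to be $\mathbb{R}$, the two halves above complete the proof.
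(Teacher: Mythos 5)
First, note that the paper does not prove this statement: it is deliberately labelled Conjecture~\ref{conj:both}, assembled by juxtaposing Proposition~\ref{cor: complete are inflationary} with the (also unproven) bounce conjecture preceding it, and is supported only by the worked examples. There is therefore no proof of record to compare against; you are attempting to settle something the authors leave open.

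Your two halves are correct \emph{provided} the scale factor is a smooth, positive, non-constant function on all of $\mathbb{R}$. The inflation half is exactly Proposition~\ref{cor: complete are inflationary} via Lemma~\ref{lemma: no globally concave pos func exist}. The bounce half is your own and it is sound: the trichotomy on $\dot a$, the observation that a nowhere-vanishing $\dot a$ has constant sign, and the estimate $a(t)\le a(T)-\tfrac{m}{2}(T-t)\to-\infty$ ruling out $\liminf_{t\to-\infty}\dot a=m>0$ together force either a finite-time zero of $\dot a$ (Definition~\ref{ELbouncedefn}) or a sequence realizing Definition~\ref{defn: bounce at infinity}. In the global setting this genuinely upgrades the paper's bounce conjecture to a theorem. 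It is worth flagging that, as you observe, neither half uses geodesic completeness at all --- positivity and smoothness of $a$ on $\mathbb{R}$ do all the work --- so in your version the word ``complete'' in the hypothesis is vestigial.

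The gap is the reduction you defer, and it is not merely delicate: as envisioned it cannot be closed. If ``isometric to a GFRW'' is read literally against Definition~\ref{appendix: defn: gfrw defn} (a full $\mathbb{R}^1_1\times_a\Sigma$ with $a>0$ smooth on all of $\mathbb{R}$), then the hypothesis already hands you a globally defined scale factor, your reduction is unnecessary, and your proof is complete. But under the reading you adopt --- a chart $(b,c)\times_a V$ on a possibly bounded interval, consistent with Definitions~\ref{LEDinflationdefn}--\ref{defn: bounce at infinity} --- geodesic completeness of $\mathcal{M}$ guarantees only that geodesics leaving the chart continue; it does not force the warped-product structure to continue with them. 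The foliation by homogeneous leaves can simply cease to exist outside $U$ without any curvature singularity, boundary, or light-cone pathology, so the obstruction is not one that ``completeness forbids a causal geodesic from terminating there'' can remove. Concretely, Minkowski space is geodesically complete and eternal and contains the Milne wedge, isometric to $(0,\infty)\times_{t} H^3$ with non-constant scale factor $a(t)=t$; there $\ddot a\equiv 0$ and $\dot a\equiv 1$, the chart cannot be extended past $t=0$ because $a$ would change sign, and yet nothing singular happens at the wedge boundary inside Minkowski. You should either restate the claim for spacetimes that are globally GFRW --- in which case your argument proves it --- or supply a genuinely different idea for the local-to-global step.
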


Naturally, the question of whether our own observable universe started with an inflationary phase or cosmological bounce is of paramount importance; however, ultimately, in the context of an inflationary model, our access to primordial perturbations is limited to a small number of e-foldings, as discerned through observations of the cosmic microwave background (CMB). Observables from the early universe capable of probing and constraining models are plentiful and include primordial gravitational waves, non-gaussianities, isocurvature modes, various reheating predictions, of course the two-point correlation power spectrum and others \cite{Mather:1990tfx,Daly:1991uob,Mather:1993ij,Hu:1994bz, Linde:1996gt,Moroi:2001ct,Lyth:2001nq,Smith:2005mm, Smith:2006xf,WMAP:2012nax,Dent:2012ne,Cook:2015vqa,Planck:2018vyg}. Incredible progress has been made in detecting or attempting to detect these observables and we remain optimistic with respect to eventually identifying the underlying mechanisms responsible for generating our universe. But there are many technical challenges ahead, and limited data can make it difficult to definitively distinguish between bouncing and inflationary models due to degeneracy problems \cite{Easson:2010zy,Easson:2010uw,Easson:2012id,Vennin:2015vfa,Mishra:2021wkm,Ben-Dayan:2023rlj}, or other obstructions.~\footnote{For example censorship behind Cauchy horizons or, at a more subtle level, G{\"o}del's incompleteness theorems introduce exotic phenomena, rooted in logical and computational complexities  \cite{Godel1992}.} 

In our discussion of a fully geodesically complete spacetime that exists for an eternity, we encounter a tautological paradigm shift: the key focus moves from categorizing spacetimes as ``inflationary" or ``bouncing" to emphasizing their geodesic completeness. We discover from Conj. \ref{conj:both}, although both inflation and cosmological bounces play crucial roles in forming a geodesically complete spacetime, they are often transient stages in the vast timeline of an eternal cosmos.~\footnote{An observer inhabiting a decelerating universe, who holds completeness as a foundational principle, would predict the existence of a prior inflationary period by Cor. \ref{cor: complete are inflationary}, and even anticipate the likelihood of a future phase of acceleration.}

Developing a comprehensive model for an eternal universe goes far beyond the initial explorations presented here, requiring studies of anisotropic models and a move away from the simplistic perfect fluid description. One must incorporate our known observable universe with a proper reheating period, nucleosynthesis, and periods of radiation, matter and dark energy domination. This ambitious endeavor awaits those venturing along future-directed (hopefully complete) timelike geodesic rays.
\acknowledgements
It is a pleasure to thank S.~Alexander, R.~Brandenberger, P.~Davies, G.~Ellis, G.~Geshnizjani, A.~Karch, W.~Kinney, B.~Kotschwar, M.~Tomasevic, M.~Parikh, T.~Vachaspati, E.~Verheijden, A.~Vikman and S.~Watson for useful discussions and correspondence. DAE is supported in part by the U.S. Department of Energy, Office of High Energy Physics, under Award Number DE-SC0019470.  JEL is supported by the gracious patronage of Ms. Deborah L Nelson, Dr. Edward J Lesnefsky, and Mrs. Laken S Lesnefsky.
\newpage
\appendix
\renewcommand{\theequation}{\thesection.\arabic{equation}}
\addappheadtotoc
\appendixpage

\section{Energy conditions}\label{appa}

For this discussion we will assume General Relativity with matter described by a perfect fluid stress energy tensor
\begin{equation}
    T_{\mu\nu} = (\rho + p)U_\mu U_\nu + p g_{\mu\nu} \,,
\end{equation}
where $p$ and $\rho$ and the pressure and energy density of the fluid and $U^\mu$ is the fluid four-velocity.
\subsection{Weak Energy Condition (WEC)}
\noindent
The WEC states that $T_{\mu\nu} t^\mu t^\nu \ge 0$, $\forall$ timelike vectors $t^\mu$. This is equivalent to $\rho\ge 0$ and $\rho + p \ge 0$.
\subsection{Null Energy Condition (NEC)}
\noindent
The NEC states that $T_{\mu\nu} \ell^\mu \ell^\nu \ge 0$, $\forall$ null vectors $\ell^\mu$. This is equivalent to $\rho + p \ge 0$. The energy density may now be negative as long as it is compensated for by sufficient positive pressure.
\subsection{Dominant Energy Condition (DEC)}
\noindent
The DEC states that $T_{\mu\nu} t^\mu t^\nu \ge 0$, $\forall$ timelike vectors $t^\mu$ as well as $T_{\mu\nu}T^\nu{}_\lambda t^\mu t^\lambda \le 0$. This is equivalent to $\rho\ge |p|$. 
\subsection{Strong Energy Condition (SEC)}
\noindent
The SEC states that $T_{\mu\nu} t^\mu t^\nu \ge \frac{1}{2} T^\lambda{}_\lambda t^\sigma t_\sigma$, $\forall$ timelike vectors $t^\mu$. This is equivalent to $\rho + p \ge 0$ and $\rho + 3p \ge 0$.

\noindent
Note that the DEC$\implies$WEC, WEC$\implies$NEC, SEC$\implies$NEC; and SEC$\rlap{$\quad\not$}\implies$WEC.

\section{Theorem 1 Item 3 Proof}\label{proofof3}
The denominators of \eqs{geodrayf}{geodrayp} in the spacelike case of \(w_0 = -1\) evaluate to \(\sqrt{1 - a^2 \left( t \right)}\). We now invoke an (enthymeme) assumption that the domain of discourse of this proof exists for real valued differentiable manifolds, so \(\sup_{\mathbb{R}^1_1} a = 1\) if the denominator is to evaluate to a real number between \(\left[ 0 , 1 \right]\) and not become imaginary. We now have 3 cases;  unlike the timelike case where there are two disconnected lightcones (hence, two integral orientations $\int_{t_0}^\infty$ and $\int^{t_0}_{-\infty}$), the null case, despite the fact that the null cone is connected one remains either future pointing or past pointing, the unit spacelike hyperquadratic in $T_p \mathcal{M}$ under the exponential map $\exp_p$ is connected with potential (closed) periodic spacelike curves\footnote{It is a well-known (see \cite{Beem1996} Ch. 3 or \cite{Romero1994,Sanchez1998}) that GFRWs with complete Riemannian moieties are globally hyperbolic, so that closed causal curves cannot occur.}.  There are 3 subcases: $d \pi_{\mathbb{R}^1_1} \dot{\gamma} = \frac{d t}{d \lambda} \frac{\partial}{\partial t}$ is future directed, past directed, or purely spacelike (achronal).  In the achronal case, the geodesic is complete by the assumption that $\left( \Sigma , g_\Sigma \right)$ is complete.  In both the past and future cases,  the integrand of $\frac{a}{\sqrt{1 - a^2}} > a$ is pointwise $\forall \alpha \in \mathbb{R}^1_1$ because $\sqrt{1 - a^2} < 1$ by the spacelike assumption and the assumption that this is a $\mathbb{R}$ valued differentiable manifold.  Thus, if the null integral diverges so will the spacelike integrals. There is one more technicality to consider, that of periodic spacelike geodesics. In this case, the integrands are everywhere strictly positive, so with every period another strictly positive mass is added to the integral, and as the number of orbits becomes unbounded approaching cardinal $\aleph_0$, the integral diverges.  

Considering this, a periodic spacelike trajectory will have $d \pi_{\mathbb{R}^1_1} \dot{\gamma} = \frac{d t}{d \lambda} \frac{\partial}{\partial t}$, partially future pointing and partially past pointing so that the two distinct past (future) directed integrals coincide.  For the case of unbounded trajectories, if the null integral diverges so will the two past (future) pointing spacelike integrals, for the reason mentioned above.  Thus, for spacelike completeness, one must be both past (future) null complete - along with the achronal case which holds by assumption - so full null completeness must hold.  The null case of the integral of item (2) bounds the non-infinite $a$ spacelike case from below.  Thus, if the GFRW is past and / or future null complete then - in the case of a non-infinite $a$ - the GFRW is spacelike complete and the first portion of item (3) is proven.  For completeness, we include the case where $a$ potentially diverges and becomes infinite for a finite $t \in \mathbb{R}$.  In this case a ``big rip'' boundary will form and a geodesic will encounter it within finite parameter.

Finally, as a matter of definition, a Lorentzian spacetime is (fully) geodesically complete if every geodesic ray is both past and future, timelike, null, and spacelike complete; thus, item (5) is proven.

\section{Generalized Robertson Walker Spacetimes} \label{appendix: sec: gfrw}
The well-known Friedmann Robertson Walker (FRW) spacetime can be extended to a Generalized Friedmann Robertson Walker (GFRW) spacetime construction.  Many properties of FRW spacetimes do not utilize the spaceform homogeniety and isotropy of the spatial section and are predicated upon behavior of the scale factor only.  The constant sectional curvature assumptions of the FRW spatial sections can be weakened to any complete Riemannian manifold.

\subsection{Definition} \label{appendix: subsec: gfrw defn}
Building on the warped product formalism of \cite{Bishop1969} defined for Riemannian manifolds, it can be repurposed and melded with the FRW formalism to define a Generalized Friedmann Robertson Walker spacetime as in \cite{ONeill1983,Sanchez1998,Romero1994}.

\begin{definition} \label{appendix: defn: gfrw defn}
    Let $\mathbb{R}^1_1$ be the negative definite manifold $\left( \mathbb{R} , -dt^2 \right)$ and $\left( \Sigma , g_\Sigma \right)$ be any geodesically complete Riemannian manifold, and $a \in \mathcal{C}^
    \infty \left( \mathbb{R}^1_1 \right)$ a smooth, strictly positive function on $\mathbb{R}$.  A \emph{Generalized Robertson Walker spacetime}, denoted $\mathbb{R}^1_1 \times_a \Sigma$, is the (topological) Cartesian product $\mathbb{R}^1_1 \times \Sigma$ with metric $d \pi^*_{\mathbb{R}^1_1} \left( -dt^2 \right) + a^2 \left( t \right) d \pi^*_\Sigma g_\Sigma$ where $\pi_{\mathbb{R}^1_1}$, $\pi_\Sigma$ are the canonical projections onto $\mathbb{R}^1_1$, $\Sigma$ guaranteed by the Cartesian product construction.  From this point on, unless otherwise required, the metric pullbacks $d \pi^*_{\mathbb{R}^1_1} \left( -dt^2 \right)$, $d \pi^*_\Sigma g_\Sigma$ will have $d \pi^*_{\mathbb{R}^1_1}$, $d \pi^*_\Sigma g_\Sigma$ omitted.
\end{definition}

For enumerated derived properties of the connection, geodesics, and curvatures of GFRWs see \cite{ONeill1983} Chapters 7,12.  Here we now show how GFRW geodesic equation projections relate to the common geodesic equation
\begin{equation}\label{geodeq}
\frac{d^2 x^\sigma}{d \lambda^2} + \Gamma^\sigma_{\alpha \beta} \frac{d x^\alpha}{d \lambda} \frac{d x^\beta}{d \lambda}=0 \,.
\end{equation}

\subsection{Geodesics in GFRWs} \label{appendix: subsec gfrw geodesics}
Let $\mathcal{M} = \mathbb{R}^1_1 \times_a \Sigma$ be a GFRW and $x^\mu \left( \lambda \right) = t \left( \lambda \right) \oplus x^k \left( \lambda \right)$ be a geodesic having affine parameter $\lambda$ with $t = \pi_{\mathbb{R}^1_1 } x^\mu$ being the (timelike) projection on $\mathbb{R}^1_1$, and $x^k = \pi_\Sigma x^\mu$ being the (spacelike) projection on $\left( \Sigma , g_\Sigma \right)$. The geodesic equation as viewed in $\mathcal{M}$:
\begin{equation} \label{appendix: eq: gfrw geodesic eq geom}
\nabla_{\frac{d x^\nu}{d \lambda} \frac{\partial}{\partial x^\nu}} \frac{d x^\mu}{d \lambda} \frac{\partial}{\partial x^\mu} = \left( \frac{d^2 x^\mu}{d \lambda^2} +\Gamma^\mu_{\alpha \beta} \frac{d x^\alpha}{d \lambda} \frac{d x^\beta}{d \lambda} \right) \frac{\partial}{\partial x^\mu} = 0
\end{equation}
The full geodesic equation in $\mathcal{M}$ can be projected into both $\mathbb{R}^1_1 , \left( \Sigma , g_\Sigma \right)$ as differential equations:
\begin{equation} \label{appendix: eq: prop 7.38-t}
    \nabla_{\frac{\partial}{\partial t} } \frac{\partial}{\partial t} = - g_\Sigma \left( \frac{d x^k}{d \lambda} \frac{\partial}{\partial x^k} , \frac{d x^k}{d \lambda} \frac{\partial}{\partial x^k} \right) \cdot \left( a \left( t \left( \lambda \right) \right) \right) \cdot \frac{d a}{d t}
\end{equation}
\begin{equation} \label{appendix: eq: prop 7.38-x}
    \nabla_{\left( \frac{d x^k}{d \lambda} \frac{\partial}{\partial x^k} \right)} \left( \frac{d x^\ell}{d \lambda} \frac{\partial}{\partial x^\ell} \right) = - \frac{2}{a \left( t \left( \lambda \right) \right)} \frac{d a}{dt} \frac{dt}{d \lambda} \frac{d x^\ell}{d \lambda} \frac{\partial}{\partial x^\ell}
\end{equation}
respectively.  In particular, Eq. \ref{appendix: eq: prop 7.38-x} in $\Sigma$ is not quite totally geodesic, but is a pregeodesic because the acceleration is proportional to $\frac{d x^\ell}{d \lambda} \frac{\partial}{\partial x^\ell}$ and thus a geodesic reparameterization exists ``boosting'' away the acceleration the affine parameter $\lambda$.  The image of Eq. \ref{appendix: eq: prop 7.38-x} is the image of a geodesic.  We now invoke the completeness assumption of $\left( \Sigma , g_\Sigma \right)$, yielding $x^k \left( \lambda \right)$ is defined $\forall \, \mathbb{R}$.  Finally, if $x^\mu \left( \lambda \right)$ fails to be complete, it must be due to Eq. \ref{appendix: eq: prop 7.38-t}.

\subsection{Derivation of Warped Product Geodesics}
In fact, \eqs{appendix: eq: prop 7.38-t}{appendix: eq: prop 7.38-x} can be generalized to a warped product of any semi-Riemannian manifolds.  The familiar geodesic equation \eq{geodeq} can be distilled into the generalized form of \eqs{appendix: eq: prop 7.38-t}{appendix: eq: prop 7.38-x}.
In the following we use $f$ for the scale factor. As in \cite{ONeill1983} Chapter 7 in a warped product of any two semi-Riemannian manifolds $B \times_f F $ the first manifold is the leaf manifold $\left( B , \Tilde{g}_B \right)$ - corresponding to $\mathbb{R}^1_1$ - and the second manifold is the fiber $\left( F , \Tilde{g}_F \right)$ - corresponding to $\left( \Sigma , g_\Sigma \right)$.  Additionally, consider a chart $\left( U_B , \{ t^k \} \right)$ of $B$ and a chart $\left( U_F , \{ x^\ell \} \right)$.  The warped product metric is a block diagonal form $g_B + f^2 \left( t^k \right) g_F$.  All factors of $f\left( t^k \right)$ will be explicitly written and terms with a ``tilde'' are viewed to be elements of the constituent spaces and naked (no tilde) terms are elements of the warped product $B \times_f F$, for example the pullback $g_B = d \pi_B^* \Tilde{g}_B$.  Finally, the geodesic equation is calculated over curve $t^k \left( \lambda \right) \oplus x^\ell \left( \lambda \right)$.

For this construction, the Christoffel symbols can be calculated
\begin{align*}
    \Gamma^{t^k}_{t^\ell t^m} &= \Tilde{\Gamma}^{t^k}_{t^\ell t^m}     &     \Gamma^{t^k}_{x^\ell x^m} &= - \left(
    \Tilde{g}_F \right)_{x^\ell x^m} f \left( t^k \right) \left(\mathrm{grad}_{\Tilde{g}_B} f \right)^{t^k}     &     \Gamma^{t^k}_{t^\ell x^m} &= 0  \\
    \Gamma^{x^k}_{x^\ell x^m} &= \Tilde{\Gamma}^{x^k}_{x^\ell x^m}     &     \Gamma^{x^k}_{x^\ell t^m} &=  \frac{1}{f \left( t^m \right)} \frac{\partial f}{\partial t^m} \left( \Tilde{g}_F \right)^{x^k x^a} \left( \Tilde{g}_F \right)_{x^a x^\ell}     &     \Gamma^{x^k}_{t^\ell t^m} &= 0
\end{align*}
where $\left(\mathrm{grad}_{\Tilde{g}_B} f \right)^{t^k}$ is the $t^k$ component of the gradient, and the scale factor of $f : B \rightarrow \mathbb{R}$ is written $f \left( t^k \right)$.

The geodesic equation may be expanded
\begin{align} \label{appendix: eq gfrw prop 7.38 from geodesic eqn}
    0 &= \frac{d^2 x^\sigma}{d \lambda^2} + \Gamma^\sigma_{\alpha \beta} \frac{d x^\alpha}{d \lambda} \frac{d x^\beta}{d \lambda} \nonumber \\
      &= \frac{d^2 t^k}{d \lambda^2} + \Gamma^{t^k}_{\alpha \beta} \frac{d x^\alpha}{d \lambda} \frac{d x^\beta}{d \lambda} + \frac{d^2 x^\ell}{d \lambda^2} + \Gamma^{x^\ell}_{\alpha \beta} \frac{d x^\alpha}{d \lambda} \frac{d x^\beta}{d \lambda} \nonumber \\
      &= \frac{d^2 t^k}{d \lambda^2} + \Gamma^{t^k}_{t^a t^m} \frac{d t^a}{d \lambda} \frac{d t^m}{d \lambda} + \Gamma^{t^k}_{x^a x^m} \frac{d x^a}{d \lambda} \frac{d x^m}{d \lambda} \nonumber \\
      & \qquad + \frac{d^2 x^\ell}{d \lambda^2} + \Gamma^{x^\ell}_{x^a x^m} \frac{d x^a}{d \lambda} \frac{d x^m}{d \lambda} + \Gamma^{x^\ell}_{x^a t^m} \frac{d x^a}{d \lambda} \frac{d t^m}{d \lambda} + \Gamma^{x^\ell}_{t^a x^m} \frac{d t^a}{d \lambda} \frac{d x^m}{d \lambda} \nonumber \\
      &= \frac{d^2 t^k}{d \lambda^2} + \Tilde{\Gamma}^{t^k}_{t^a t^m} \frac{d t^a}{d \lambda} \frac{d t^m}{d \lambda} - \left(
    \left( \Tilde{g}_F \right)_{x^\ell x^m} \frac{d x^a}{d \lambda} \frac{d x^m}{d \lambda} \right) \cdot \left( f \left( t^k \right) \left( \mathrm{grad}_{\Tilde{g}_B} f \right)^{t^k} \right) \nonumber \\
    & \qquad + \frac{d^2 x^\ell}{d \lambda^2} + \Tilde{\Gamma}^{x^\ell}_{x^a x^m} \frac{d x^a}{d \lambda} \frac{d x^m}{d \lambda} + \frac{2}{f \left( t^m \right)} \frac{\partial f}{\partial t^m} \left( \Tilde{g}_F \right)^{x^\ell x^a} \left( \Tilde{g}_F \right)_{x^a x^k} \frac{d x^a}{d \lambda} \frac{d t^m}{d \lambda}
\end{align}
One then notes that the $t^k$ terms and $x^\ell$ terms in the final expression of Eq.~\ref{appendix: eq gfrw prop 7.38 from geodesic eqn} are linearly independent, so for the geodesic equation to be homogeneous each vector term must be homogeneous.  Thus
\begin{equation} \label{appendix: eq prop 7.38-t component}
\frac{d^2 t^k}{d \lambda^2} + \Tilde{\Gamma}^{t^k}_{t^a t^m} \frac{d t^a}{d \lambda} \frac{d t^m}{d \lambda} = \left(
    \left( \Tilde{g}_F \right)_{x^\ell x^m} \frac{d x^a}{d \lambda} \frac{d x^m}{d \lambda} \right) \cdot \left( f \left( t^k \right) \left(\mathrm{grad}_{\Tilde{g}_B} f \right)^{t^k} \right)    
\end{equation}
\begin{equation} \label{appendix: eq prop 7.38-x component}
    \frac{d^2 x^\ell}{d \lambda^2} + \Tilde{\Gamma}^{x^\ell}_{x^a x^m} \frac{d x^a}{d \lambda} \frac{d x^m}{d \lambda} = - \frac{2}{f \left( t^m \right)} \frac{\partial f}{\partial t^m} \delta^{x^\ell}_{x^k} \frac{d x^k}{d \lambda} \frac{d t^m}{d \lambda}   
\end{equation}
are recovered.  In order to eventually compare Eqs. \ref{appendix: eq prop 7.38-t component}, \ref{appendix: eq prop 7.38-x component} to Eqs. \ref{appendix: eq: prop 7.38-t}, \ref{appendix: eq: prop 7.38-x} in the case of a GFRW one can consolidate using formal notation~\footnote{A pedantic reader may have noticed the ``free'' index of $t^\ell$ on the right hand side of Eq. \ref{appendix: eq prop 7.38-t geometer}.  This is an unfortunate thorn of mixing ``component'' and ``geometer'' notation.  Strictly -- and in most geometric texts such as \cite{ONeill1983,Beem1996,Peterson2006} the projection of the geodesic onto $B$ is written $\tau : J \rightarrow B$ and the function written as the composition $\left(f \circ \tau \right)\left( \lambda \right)$.  While being precise, this minimizes the ``component'' moiety of the expression which we are trying to draw attention to.  A similar occurrence appears in Eq. \ref{appendix: eq prop 7.38-x geometer}.  We have been a bit caviler with the notation in order to show the relationship between the two notations - we hope the reader will forgive this indulgence of an unbalanced ``free'' index.}
\begin{equation} \label{appendix: eq prop 7.38-t geometer}
    \nabla_{\left( \frac{d t^\ell}{d \lambda} \frac{\partial}{\partial t^\ell} \right)} \left( \frac{d t^\ell}{d \lambda} \frac{\partial}{\partial t^\ell} \right) = 
    \Tilde{g}_F \left( \frac{d x^k}{d \lambda} \frac{\partial}{\partial x^k} , \frac{d x^m}{d \lambda} \frac{\partial}{\partial x^m} \right) \cdot \left( f \left( t^\ell \right) \mathrm{grad}_{\Tilde{g}_B} f \right)
\end{equation}
\begin{equation} \label{appendix: eq prop 7.38-x geometer}
    \nabla_{\left( \frac{d x^\ell}{d \lambda} \frac{\partial}{\partial x^\ell} \right)} \left( \frac{d x^\ell}{d \lambda} \frac{\partial}{\partial x^\ell} \right) = - \frac{2}{f \left( t^m \right)} \frac{\partial f}{\partial t^m} \frac{d t^m}{d \lambda} \frac{d x^k}{d \lambda} \frac{\partial}{\partial x^k}
\end{equation}

In the case of a GFRW, $\left( B , \Tilde{g}_B \right) = \mathbb{R}^1_1$ and $\left( F , \Tilde{g}_F \right) = \left( \Sigma , g_\Sigma \right)$; with respect to Eqs. \ref{appendix: eq prop 7.38-t geometer}, \ref{appendix: eq prop 7.38-x geometer} the aforementioned Eqs. \ref{appendix: eq: prop 7.38-t}, \ref{appendix: eq: prop 7.38-x} are recovered, as desired.


\subsection{A Useful Constant} \label{appendix: subsection constant}
The question of notation nonwithstanding, Eqs. \ref{appendix: eq prop 7.38-t component}, \ref{appendix: eq prop 7.38-x component}, Eqs. \ref{appendix: eq prop 7.38-t geometer}, \ref{appendix: eq prop 7.38-x geometer}, and / or Eqs. \ref{appendix: eq: prop 7.38-t}, \ref{appendix: eq: prop 7.38-x} form a system of coupled ODEs which, at least upon a cursory inspection, appear formidable.  However, they can be simplified with the introduction of a geodesic dependent constant.

Given a geodesic $t^\ell \left( \lambda \right) \oplus x^k \left( \lambda \right)$ a constant\footnote{We include a reference to the geodesic $\gamma$ as a subscript in order to remind ourselves that constant $\xi^2_\gamma$ is geodesic dependent.}
\begin{equation} \label{appendix: eq xi const}
    \xi^2_{t^\ell \oplus x^k} = f^4 \left( t^\ell \right) \Tilde{g}_F \left( \frac{d x^k}{d \lambda} \frac{\partial}{\partial x^k} , \frac{d x^k}{d \lambda} \frac{\partial}{\partial x^k} \right)
\end{equation}
can be defined along the image of the geodesic.  The constant $\xi^2_{t^\ell \oplus x^k}$ is not a Noether conserved quantity a propos any symmetry per se, but represents the initial spatial ``speed'' of a particle in ``free fall'' upon a geodesic.
\begin{proof}
Starting with geodesic $t^\ell \left( \lambda \right) \oplus x^k \left( \lambda \right)$ with velocity vector $\frac{d}{d \lambda}$ one calculates along the image of $t^\ell \left( \lambda \right) \oplus x^k \left( \lambda \right)$:
\begin{align*}
    \frac{d \xi^2_{t^\ell \oplus x^k}}{d \lambda} & \overset{?}{=} \frac{d}{d \lambda} \left[ f^4 \left( t^\ell \right) \Tilde{g}_F \left( \frac{d x^k}{d \lambda} \frac{\partial}{\partial x^k} , \frac{d x^k}{d \lambda} \frac{\partial}{\partial x^k} \right) \right] \\
    &= 4 f^3 \left( t^\ell \right) \frac{d f}{d t^\ell} \frac{d t^\ell}{d \lambda} \Tilde{g}_F \left( \frac{d x^k}{d \lambda} \frac{\partial}{\partial x^k} , \frac{d x^k}{d \lambda} \frac{\partial}{\partial x^k} \right) + 2 f^4 \left( t^\ell \right) \Tilde{g}_F \left( \nabla_{\frac{d}{d \lambda}} \frac{d x^k}{d \lambda} \frac{\partial}{\partial x^k} , \frac{d x^k}{d \lambda} \frac{\partial}{\partial x^k} \right) \\
    &= 4 f^3 \left( t^\ell \right) \frac{d f}{d t^\ell} \frac{d t^\ell}{d \lambda} \Tilde{g}_F \left( \frac{d x^k}{d \lambda} \frac{\partial}{\partial x^k} , \frac{d x^k}{d \lambda} \frac{\partial}{\partial x^k} \right) - 4 \frac{f^4 \left( t^\ell \right)}{f \left( t^\ell \right)} \Tilde{g}_F \left( \frac{d x^k}{d \lambda} \frac{\partial}{\partial x^k} , \frac{d x^k}{d \lambda} \frac{\partial}{\partial x^k} \right) \\
    &= 0
\end{align*}
where Eq. \ref{appendix: eq: prop 7.38-x} was inserted to evaluate $\nabla_{\frac{d}{d \lambda}} \frac{d x^k}{d \lambda} \frac{\partial}{\partial x^k}$.
\end{proof}

Utilizing $\xi^2_{t^\ell \oplus x^k}$ one can revisit Eq. \ref{appendix: eq: prop 7.38-t}.  The potentially complex - even before integration! - term $ f \left( t^\ell \right) \Tilde{g}_F \left( \frac{d x^k}{d \lambda} \frac{\partial}{\partial x^k} , \frac{d x^k}{d \lambda} \frac{\partial}{\partial x^k} \right)$ can be eliminated for $\frac{\xi^2_{t^\ell \oplus x^k}}{f^3 \left( t^\ell \right)}$.  The $\nabla_\frac{\partial}{\partial t} \frac{\partial}{\partial t}$ equation now reads
\begin{equation} \label{appendix: eq t xi}
    \nabla_\frac{\partial}{\partial t} \frac{\partial}{\partial t} = \frac{\xi^2_{t^\ell \oplus x^k}}{f^3 \left( t^\ell \right)} \mathrm{grad}_{\Tilde{g}_B} f
\end{equation}
which has a more tractable integral solution: Thm 1 follows.

\section{Geodesics of Selected Models} \label{appendix: quadratic geodesics}
The geodesics completeness of GFRWs is completely characterized by Thm. \ref{ledthm}; and in particular this is done without solving the geodesic equation in the usual fashion - namely the end state of a real valued function $\gamma : J \subset \mathbb{R} \rightarrow \mathcal{M}$ satisfying initial conditions with $\nabla_{\dot \gamma} \dot \gamma = 0$.  Instead one arrives at the integral of Eq. \ref{eq:gfrw soltn timelike integral} which reaps the domain of $\lambda \left( t \right) \in J$ from which geodesic completeness can be deduced.  This feature gives Thm. \ref{ledthm} much of its power, however, it can be useful to examine geodesics in the usual sense.  

While the geodesics for any model discussed in this opus can be solved for at least numerically for given initial conditions in a computer algebra system of your choice\footnote{Or a standard reference such as \cite{Polyanin2003,Abramowitz1965-ho}}, we have elected to solve geodesics for the ``$+ c$'' model of Section \ref{inflate} and the quadratic model of Section \ref{sec: quadratic model}.  Both of these models have the advantage of closed form ODE solutions in addition to their numeric solutions.  Examining the raw geodesics, one verifies that, in fact, the (timelike) lengths diverge with unbounded affine parameter.

\subsection{A Comment on Initial Conditions}
Instead of considering a geodesic of $\gamma \left( \lambda \right)$ for affine parameter $\lambda$, one can often solve for closed form solutions for $\gamma \left( t  \right)$ where $t$ is the time coordinate of GFRW $\mathbb{R}^1_1 \times_f \Sigma$.  In a GFRW coordinate chart $\left\{ t, x^k \right\}$ one can transform the usual geodesic equation into
\begin{equation} \label{eq: appendix example geodesics geodesic as t}
    \frac{d^2 x^k}{d t^2} + \Gamma^k_{a b} \frac{d x^a}{d t}\frac{d x^b}{d t} - \Gamma^t_{a b} \frac{d x^a}{d t} \frac{d x^b}{d t} \frac{d x^k}{d t} = 0
\end{equation}
where $x^k \left( t \right)$ is a (geodesic) solution utilizing the time coordinate $t$ as an independent variable.

However, one must relate the initial conditions of $\left. \frac{d \gamma}{d \lambda} \right|_{\lambda_0}$ to those of $\left. \frac{d \gamma}{d t} \right|_{t_0}$; in particular the two integration constants which we will denote $\left\{ x^k_0 , \omega^k_0 \right\}$.  Here $x^k_0 = x^k \left( t_0 \right)$ is the initial spatial condition at initial time $t_0$ and $\omega^k_0$ is related to the causal character of the geodesic\footnote{The glyph `$\omega_0$' was chosen to as an allusion to the notational conventions of \cite{LesnefskyDissertation_2024}.  In \cite{LesnefskyDissertation_2024} Chapter 2.5, Thm. \ref{ledthm} was proven, and the relationship of integration constant $w_0$ to causal character was discussed.  In the current opus, Eq. \ref{eq: appendix example geodesics geodesic as t}, the integration constant $\omega_0$ is related - but not equivalent - to the role of $w_0$ in Thm. \ref{ledthm}, hence the transliteration to Greek.}.  It is trivial to solve for $x^k_0$, however the solution for $\omega_0^k$ is more involved.  For timelike geodesics one can compute:
\begin{align} \label{eq: appendix example geodesics omega 0 long derivation timelike}
		-1 & = g \left( \frac{d x^\mu}{d \lambda} \frac{\partial}{\partial x^\mu} , \frac{d x^\mu}{d \lambda} \frac{\partial}{\partial x^\mu} \right) \nonumber \\
		   & = g \left( \frac{d t}{d \lambda} \frac{\partial}{\partial t} , \frac{d t}{d \lambda} \frac{\partial}{\partial t} \right) + g \left( \frac{d x^k}{d \lambda} \frac{\partial}{\partial x^k} , \frac{d x^k}{d \lambda} \frac{\partial}{\partial x^k} \right) \nonumber \\
		   & = \left( \frac{d t}{d \lambda} \right)^2 g \left( \frac{\partial}{\partial t} , \frac{\partial}{\partial t} \right) + \left( \frac{d x^k}{d \lambda} \right)^2 g \left( \frac{\partial}{\partial x^k} , \frac{\partial}{\partial x^k} \right) \nonumber \\
		   & = - \left( \frac{d t}{d \lambda} \right)^2 + f^2 \left( t \left(\lambda\right) \right) \left(\frac{d t}{d \lambda}\right)^2 \left(\frac{d x^k}{d \lambda} \frac{d \lambda}{d t}\right)^2 g_\Sigma \left( \frac{\partial}{\partial x^k} , \frac{\partial}{\partial x^k} \right) \nonumber \\
		   & = \left( \frac{d t}{d \lambda} \right)^2 \left( -1 + f^2 \left( t\left(\lambda\right) \right) \left( \frac{d x^k}{d t} \right)^2 g_\Sigma \left(\frac{\partial}{\partial x^k} , \frac{\partial}{\partial x^k}\right) \right)
\end{align}                                                                                                  Switching to $t$ as in independent variable and setting $t_0 = t \left( \lambda_0 \right)$, one can solve for\footnote{Per the assumptions enthymeme in this solutions of the geodesic equation - namely the solution exists, is unique, and is smooth per Thm. D.1 in \cite{Lee2012} - the Inverse Function Theorem of \cite{Lee2012} Thm. C.34 can be applied to invert $\frac{d x^k}{d t}$.} $ \frac{d x^k}{d t} \left( \omega_0 \right) $:
\begin{align} \label{eq: appendix example geodesics omega 0 general GFRW calculation}
			\left. \left(\frac{d x^k}{d t}\right)^2 \left( \omega_0^k \right) \right|_{t_0} & = \left. \frac{1 - \left(\frac{d \lambda}{d t}\right)^2}{f^2 \left( t_0 \right) g_\Sigma \left( \frac{\partial}{\partial x^k} , \frac{\partial}{\partial x^k} \right)} \right|_{t_0} \nonumber \\
			& = \left. \frac{f^2 \left( t_0\right)}{\xi^2_{x^k}} \left( 1 - \left(\frac{d \lambda}{d t}\right)^2 \right) \right|_{t_0}
\end{align}
where $\xi^2_{x^k}$ is the ``useful constant'' of Eq. \ref{appendix: eq xi const} for geodesic solution $x^k$ and $\frac{d t}{d \lambda}$ is the velocity in the frame $ \left[ \frac{d t}{d \lambda} , \left\{ \frac{d x^k}{d \lambda} \right\}_{k=1}^{n-1} \right]^T$.  Unfortunately, $\frac{d t}{d \lambda}$ cannot be eliminated, so the initial conditions for $\frac{d x^k}{d t}$ are subordinate to frame $ \left[ \frac{d t}{d \lambda} , \left\{ \frac{d x^k}{d \lambda} \right\}_{k=1}^{n-1} \right]^T$.  Until further information about the problem is specified, simplification of Eq. \ref{eq: appendix example geodesics omega 0 general GFRW calculation} cannot proceed.

\subsection{Geodesics of the $+c$ Model}
The geodesics for the model of Sec. \ref{inflate} given in Eq. \ref{aplusc} will be calculated in frames $ \left[ \frac{d t}{d \lambda} ,  \frac{d x}{d \lambda} , 0 , 0 \right]^T$ and $\left[1 , \frac{d x}{d t} , 0 , 0 \right]^T$ over GFRW\footnote{A careful reader will note that the scale factor of $\mathbb{R}^1_1 \times_{\exp{\nicefrac{t}{\alpha}} + c} \mathbb{R}^{n-1}$ differs from that of Eq. \ref{aplusc}.  Utilizing this form simplifies the geodesic calculations in the warped product formalism a propos Eqs. \ref{appendix: eq: prop 7.38-t}, \ref{appendix: eq: prop 7.38-x} without changing the crux of the function behavior.} $\mathbb{R}^1_1 \times_{\exp{\nicefrac{t}{\alpha}} + c} \mathbb{R}^{n-1}$ with $n=4$.  The former admits a numeric solution but the latter admits a closed form solution!  Additionally, the frame utilized for timelike geodesics is that of
\begin{equation}
	\frac{d x^\mu}{d\lambda} \frac{\partial}{\partial x^\mu} = \left. \begin{bmatrix}
			\sqrt{2} \\
			\frac{1}{f\left( t \right)} \\
			0 \\
			0
		\end{bmatrix} \right|_{\lambda=0} = \begin{bmatrix}
			\sqrt{2} \\
			\frac{1}{2} \\
			0 \\
			0 
		\end{bmatrix}
		\label{eq:appendix example geodesics ndsolve initial condition timelike}
\end{equation}
and the frame for null geodesics is that of
\begin{equation}
	\frac{d x^\mu}{d\lambda} \frac{\partial}{\partial x^\mu} = \left. \begin{bmatrix}
			1 \\
			\frac{1}{f\left( t \right)} \\
			0 \\
			0
		\end{bmatrix} \right|_{\lambda=0} = \begin{bmatrix}
			1 \\
			\frac{1}{2} \\
			0 \\
			0 
		\end{bmatrix} \, .
		\label{eq:appendix example geodesics ndsolve initial condition null}
\end{equation}
for parameterization of $\alpha = c = 1$ and $x\left(0 \right) = t \left( 0 \right) = 0$.

\subsubsection{Numeric Geodesics of the $+c$ Model} \label{sec:appendix Num Geodesics +c}
Utilizing the \texttt{ogre.m} library of \emph{Mathematica} - see \cite{Shoshany2021_OGRe} - the geodesic equation is calculated in the usual way - not using the trick of Eq. \ref{eq: appendix example geodesics geodesic as t} - as
\begin{eqnarray}
    \label{eq:appendix example geodesics lambda geodsic equations}
			0 & = & \frac{e^{\nicefrac{t}{\alpha}}}{\alpha}  \left(c+e^{\nicefrac{t}{\alpha}}\right) \dot{x}^2+\ddot{t} \\
			0 & = & \frac{1}{\alpha}\left(\frac{2 \dot{t} \dot{x} e^{\nicefrac{t}{\alpha}}}{c+e^{\nicefrac{t}{\alpha}}}\right)+\ddot{x}
\end{eqnarray}
Utilizing the above initial conditions one finds the timelike geodesic solutions of Figs. \ref{fig:appendix example numeric geodesics +c},\ref{fig:appendix example numeric future geodesics +c},\ref{fig:appendix example numeric past geodesics +c},\ref{fig:appendix example numeric geodesics +c 1},\ref{fig:appendix example numeric future geodesics +c 1},\ref{fig:appendix example numeric past geodesics +c 1},\ref{fig:appendix example numeric parametric geodesics +c},\ref{fig:appendix example numeric future parametric geodesics +c},\ref{fig:appendix example numeric past parametric geodesics +c}.
	\begin{figure}
		\centering
		\includegraphics[width=\linewidth]{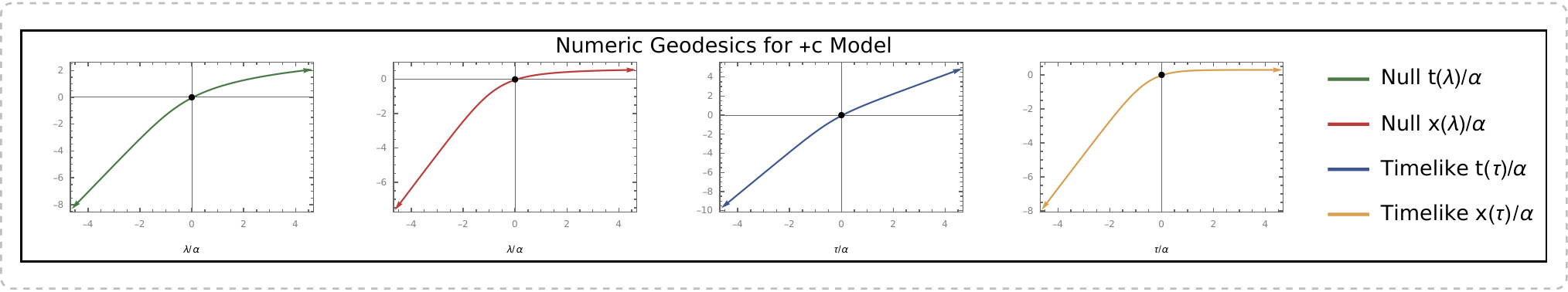}
		\caption{Plot of null and timelike geodesics of $\mathbb{R}^1_1 \times_{\exp{\nicefrac{t}{\alpha}} + c} \mathbb{R}^3$ with $\alpha = c = 1$.  Null affine parameter is given by $\lambda$ and timelike proper time is given by $\tau$.  All length units have been renormalized to be dimensionless dividing by canonical length scale $\alpha$.  Initial null velocity is $\left. \left[ 1 , \frac{1}{2} , 0 , 0 \right]^T \right|_{\lambda = 0}$ and initial timelike velocity is $\left. \left[ \sqrt{2} , \frac{1}{2} , 0 , 0 \right]^T \right|_{\tau = 0}$.  The shown plot is a cylinder over constant coordinates $y,z$, hence two spacetime dimensions have been suppressed.  Initial conditions of $x\left(\lambda_0 = 0 \right) = t \left( \lambda_0 = 0 \right) = 0 $ are shown by black point.  Future and past geodesic rays shown by arrows points towards $x,t \rightarrow + \infty $ and $ - \infty \leftarrow x,t$, respectively.  Appropriate causal arclength was calculated to diverge as $\lambda, \tau \rightarrow \infty$; thus this model is geodesically complete.  However, $H^-_{avg} = 0$ for computation over first limit asymptotic past $t \rightarrow - \infty$, $H^+_{avg} = \frac{1}{\alpha} = 1$ for first limit asymptotic future $t \rightarrow + \infty$, and $H_{avg} = \frac{1}{2}$ for a symmetric interval limit computation - see \cite{Lesnefsky:2022fen} for discussion of $H_{avg}$ limit order concerns. }
		\label{fig:appendix example numeric geodesics +c}
	\end{figure}
    \begin{figure}
		\centering
		\includegraphics[width=\linewidth]{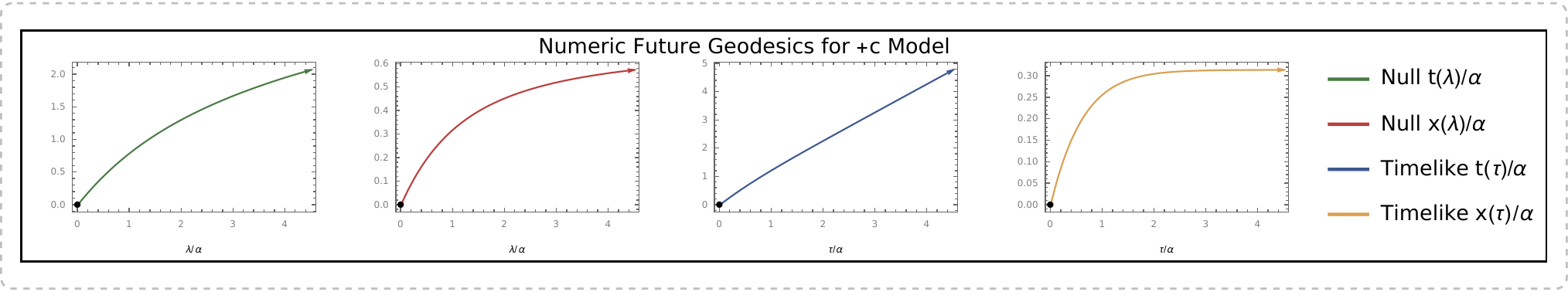}
		\caption{An additional plot highlighting future geodesic rays of the model and computations of Fig. \ref{fig:appendix example numeric geodesics +c}.}
		\label{fig:appendix example numeric future geodesics +c}
	\end{figure}
    \begin{figure}
		\centering
		\includegraphics[width=\linewidth]{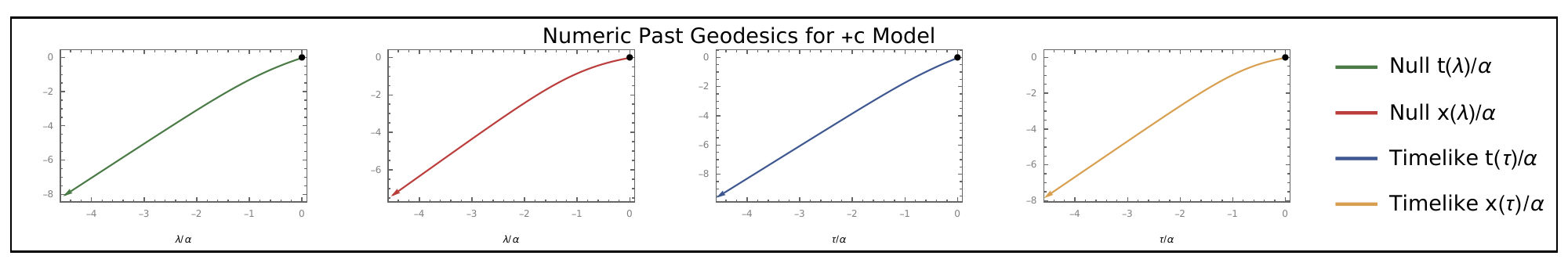}
		\caption{An additional plot highlighting past geodesic rays of the model and computations of Fig. \ref{fig:appendix example numeric geodesics +c}.}
		\label{fig:appendix example numeric past geodesics +c}
	\end{figure}
    \begin{figure}
		\centering
		\includegraphics[width=\linewidth]{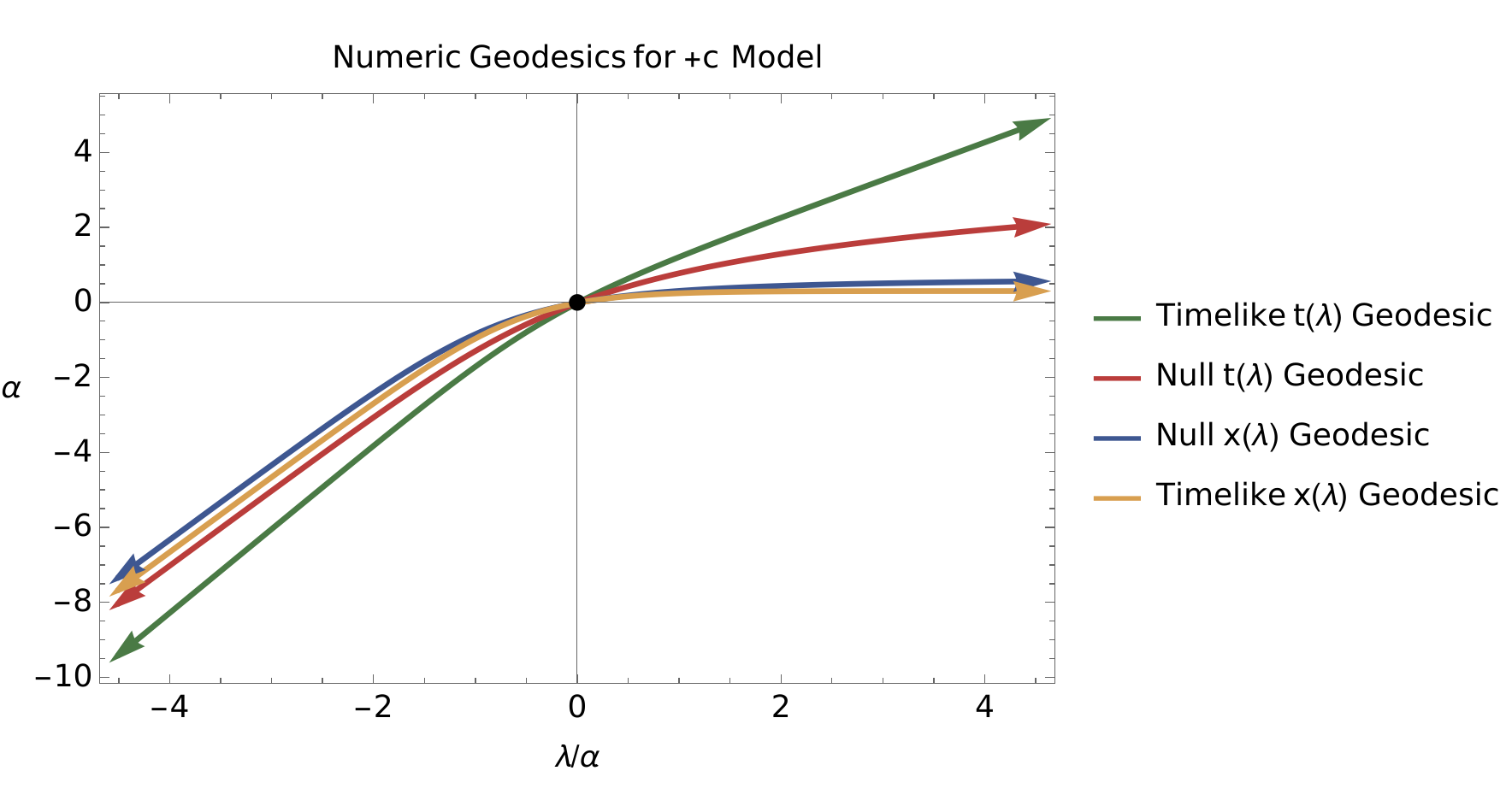}
		\caption{An additional plot of null and timelike geodesics utilizing the model and computations of Fig. \ref{fig:appendix example numeric geodesics +c}.}
		\label{fig:appendix example numeric geodesics +c 1}
	\end{figure}
    \begin{figure}
		\centering
		\includegraphics[width=\linewidth]{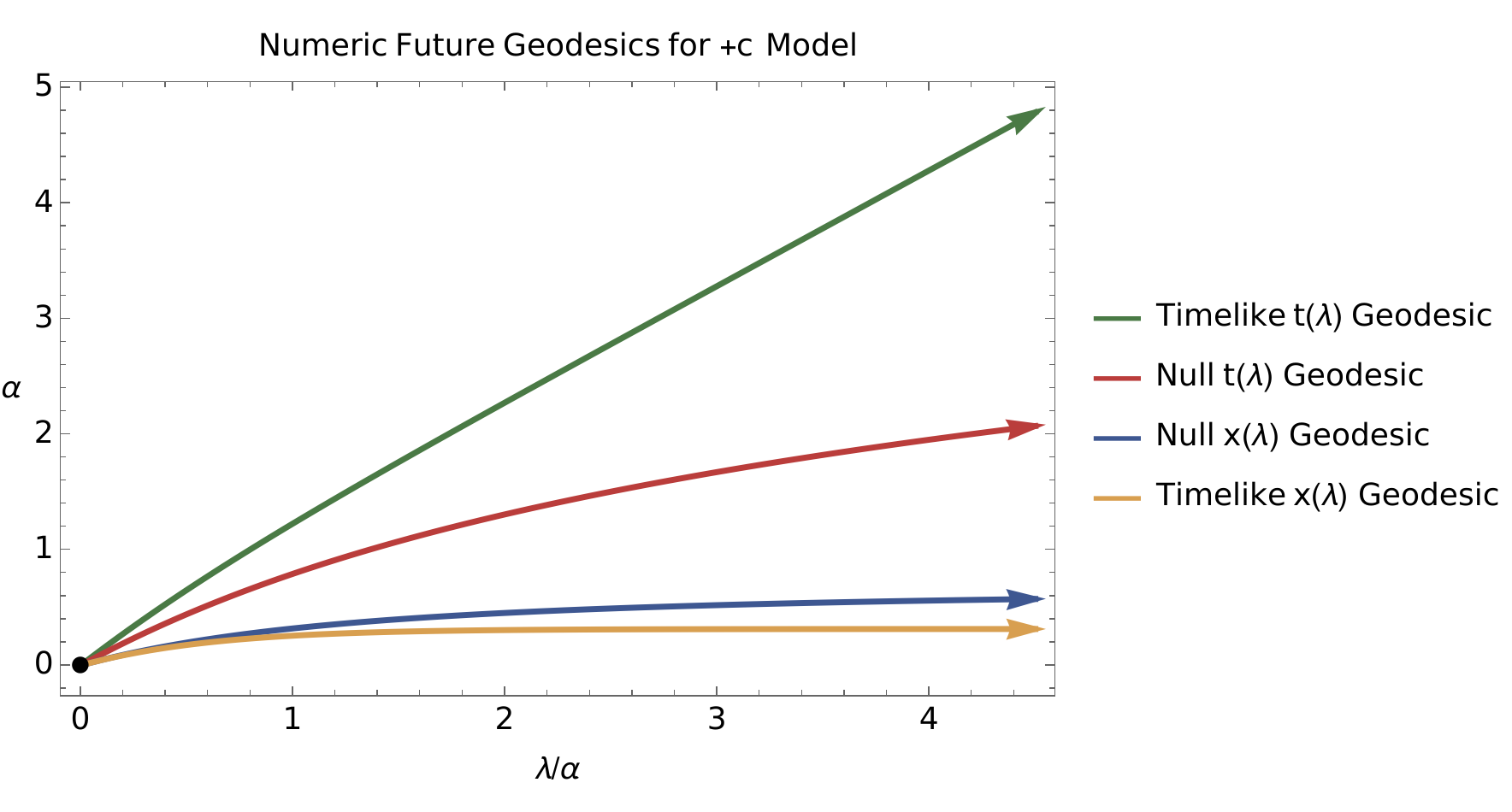}
		\caption{An additional plot of future geodesic rays utilizing the model and computations of Fig. \ref{fig:appendix example numeric geodesics +c}.}
		\label{fig:appendix example numeric future geodesics +c 1}
	\end{figure}
    \begin{figure}
		\centering
		\includegraphics[width=\linewidth]{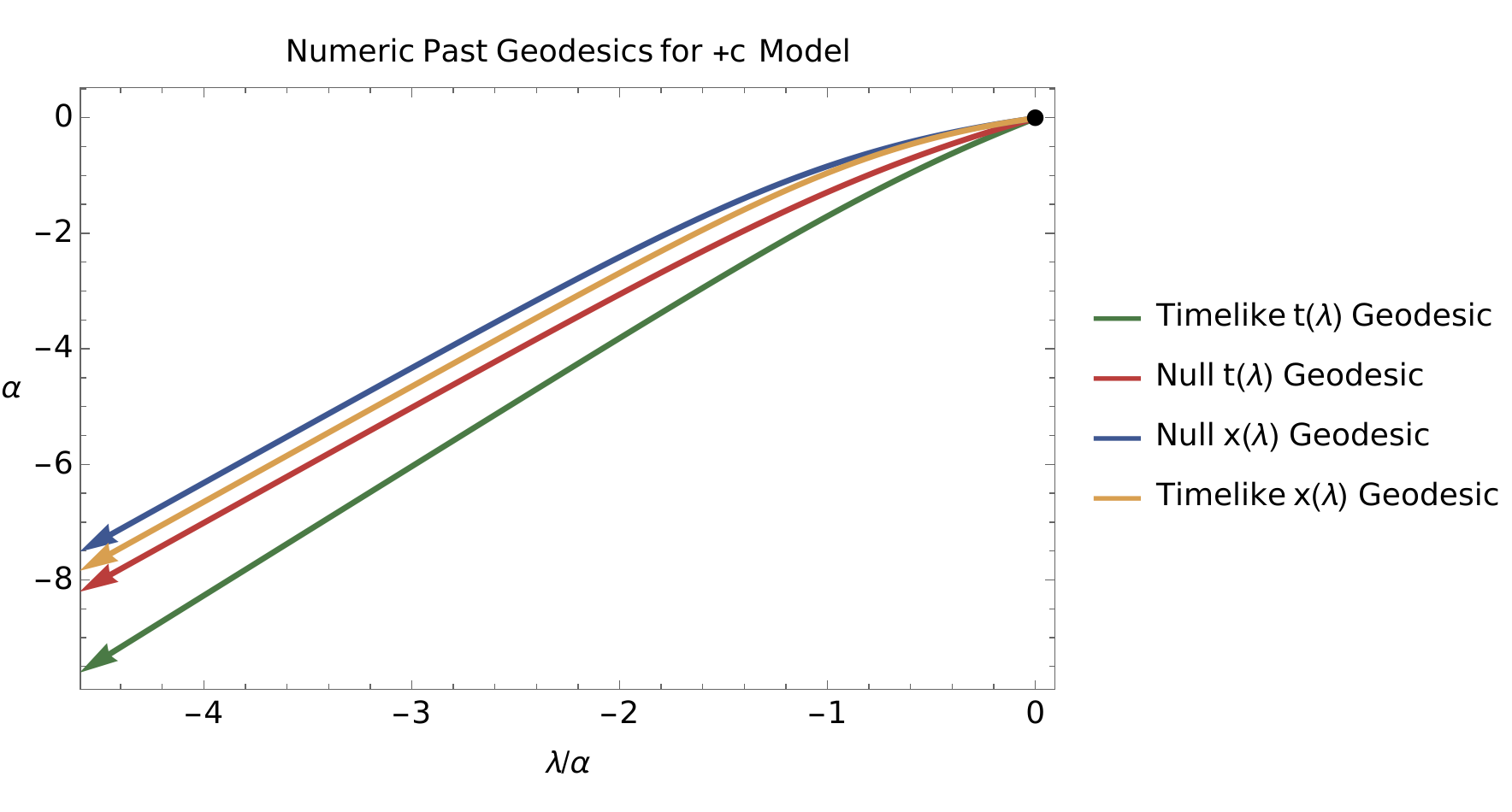}
		\caption{An additional plot of past geodesic rays utilizing the model and computations of Fig. \ref{fig:appendix example numeric geodesics +c}.}
		\label{fig:appendix example numeric past geodesics +c 1}
	\end{figure}
	\begin{figure}
		\centering
		\includegraphics[width=\linewidth]{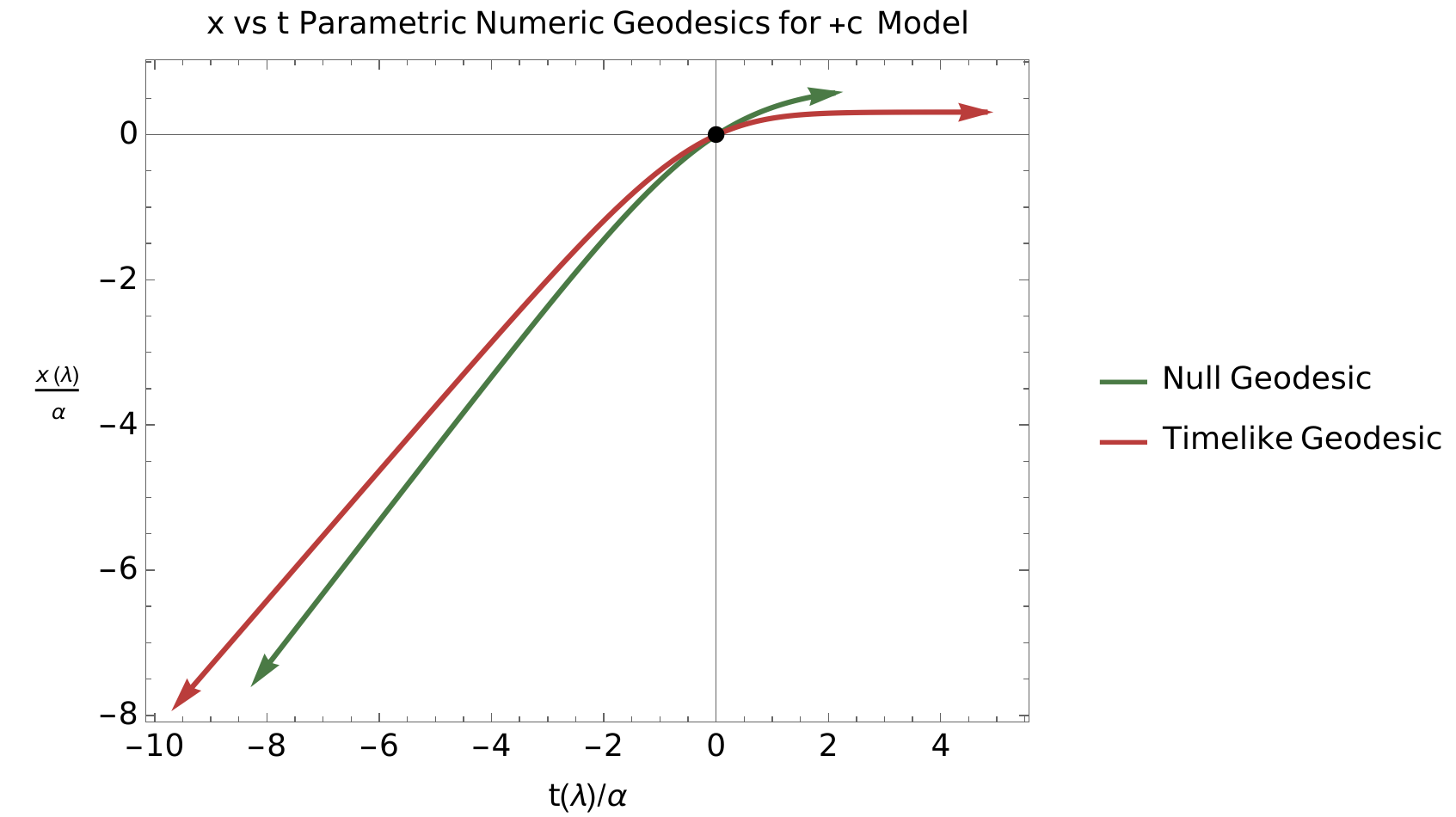}
		\caption{Parametric plot $x$ vs $t$ of null and timelike geodesics of $\mathbb{R}^1_1 \times_{\exp{\nicefrac{t}{\alpha}} + c} \mathbb{R}^3$ with $\alpha = c = 1$.  Null affine parameter is given by $\lambda$ and timelike proper time is given by $\tau$.  All length units have been renormalized to be dimensionless dividing by canonical length scale $\alpha$.  The appropriate affine parameter was evolved for $\nicefrac{\lambda}{\alpha}, \nicefrac{\tau}{\alpha} \in \left[ -\frac{9}{2} , \frac{9}{2} \right]$.  Initial null velocity is $\left. \left[ 1 , \frac{1}{2} , 0 , 0 \right]^T \right|_{\lambda = 0}$ and initial timelike velocity is $\left. \left[ \sqrt{2} , \frac{1}{2} , 0 , 0 \right]^T \right|_{\tau = 0}$.  The shown plot is a cylinder over constant coordinates $y,z$, hence two spacetime dimensions have been suppressed.  Initial conditions of $x\left(\lambda_0 = 0 \right) = t \left( \lambda_0 = 0 \right) = 0 $ are shown by black point.  Future and past geodesic rays shown by arrows points towards $x,t \rightarrow + \infty $ and $ - \infty \leftarrow x,t$, respectively.  Appropriate causal arclength was calculated to diverge as $\lambda, \tau \rightarrow \infty$; thus this model is geodesically complete.  However, $H^-_{avg} = 0$ for computation over first limit asymptotic past $t \rightarrow - \infty$, $H^+_{avg} = \frac{1}{\alpha} = 1$ for first limit asymptotic future $t \rightarrow + \infty$, and $H_{avg} = \frac{1}{2}$ for a symmetric interval limit computation - see \cite{Lesnefsky:2022fen} for discussion of $H_{avg}$ limit order concerns. }
		\label{fig:appendix example numeric parametric geodesics +c}
	\end{figure}
    \begin{figure}
		\centering
		\includegraphics[width=\linewidth]{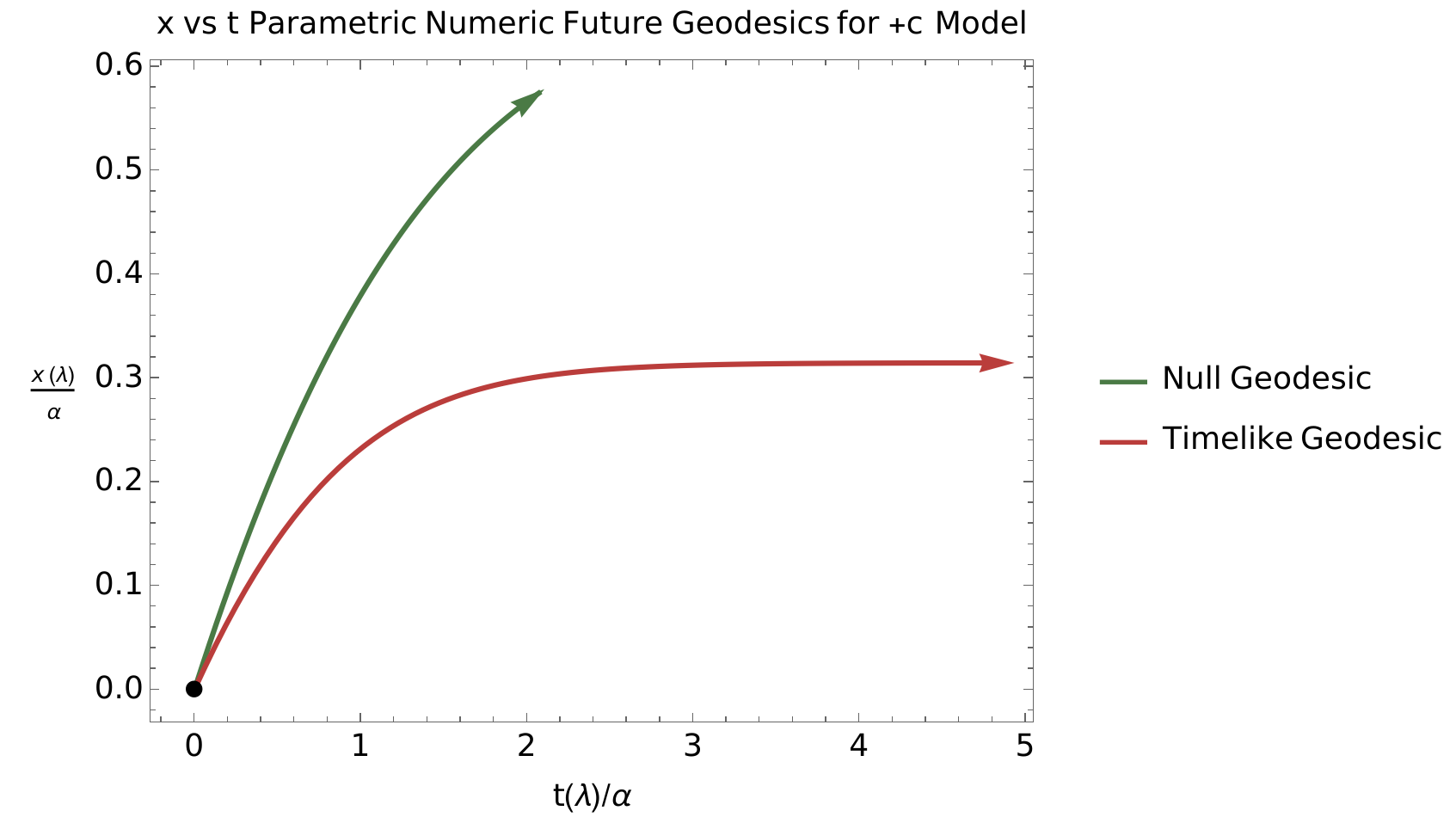}
		\caption{Additional parametric plots $x$ vs $t$ of future null and timelike geodesics of the model and calculations Fig. \ref{fig:appendix example numeric parametric geodesics +c} with $\nicefrac{\lambda}{\alpha}, \nicefrac{\tau}{\alpha} \in \left[ 0 , \frac{9}{2} \right]$. }
		\label{fig:appendix example numeric future parametric geodesics +c}
	\end{figure}
    \begin{figure}
		\centering
		\includegraphics[width=\linewidth]{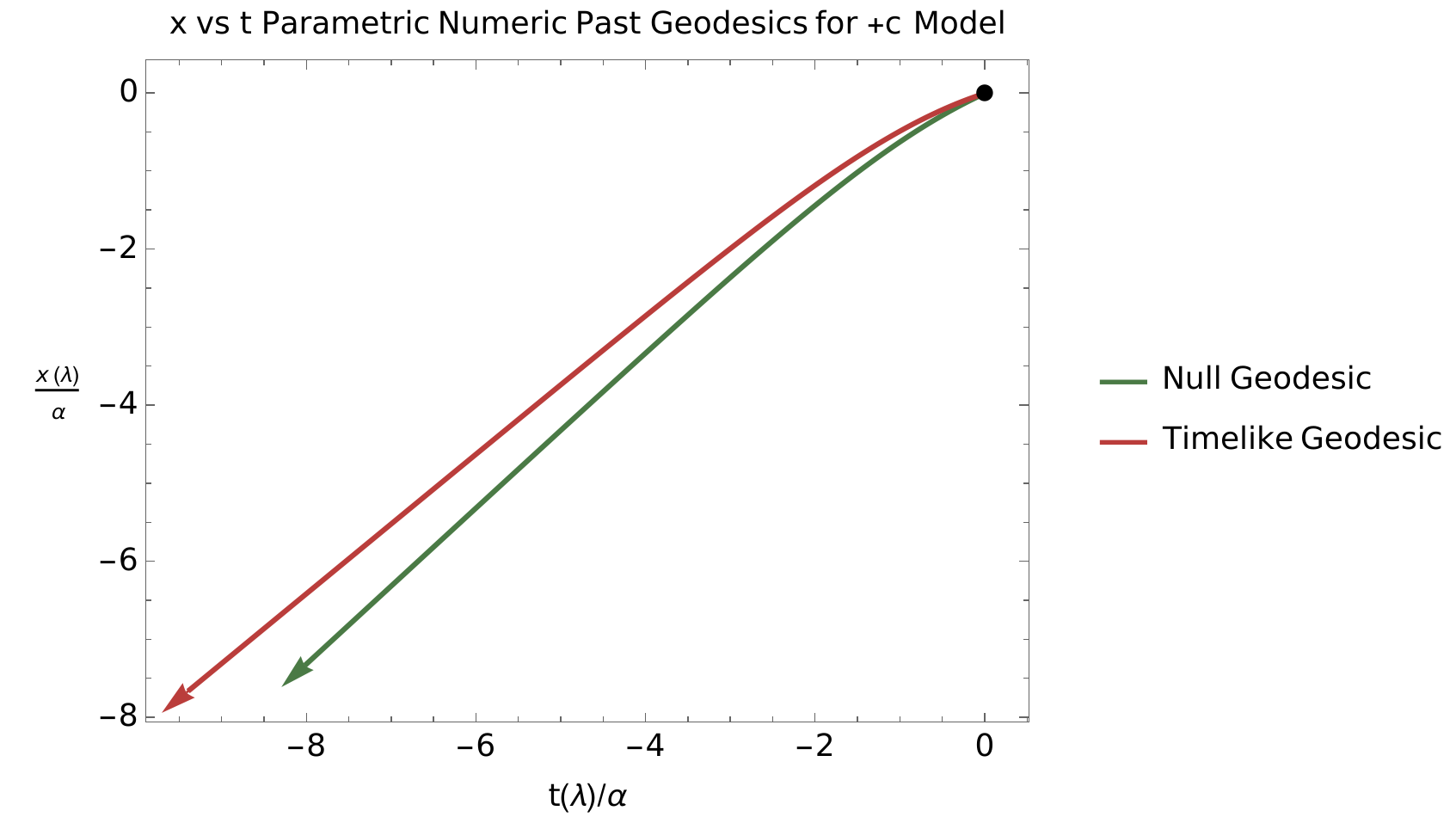}
		\caption{Additional parametric plots $x$ vs $t$ of future null and timelike geodesics of the model and calculations Fig. \ref{fig:appendix example numeric parametric geodesics +c} with $\nicefrac{\lambda}{\alpha}, \nicefrac{\tau}{\alpha} \in \left[ -\frac{9}{2} , 0 \right]$. }
		\label{fig:appendix example numeric past parametric geodesics +c}
	\end{figure}
    
A cursory examination shows that the aforementioned geodesic rays are defined for all $\mathbb{R}^+$ with diverging timelike length; thus they have been verified to be complete corroborating the results of Thm. \ref{ledthm}.

\subsubsection{Analytic Geodesics of the $+c$ Model} \label{sec: appendix analytic geodesic +c}
Utilizing the calculation of Eq. \ref{eq: appendix example geodesics geodesic as t} one derives geodesic equations of
	\begin{equation} \label{eq:appendix t frame geodesic eqn}
       0 =  x''(t)-\frac{1}{\alpha}\frac{e^{\nicefrac{t}{\alpha} } x'(t) \left(\left(c+e^{\nicefrac{t}{\alpha}}\right)^2 x'(t)^2-2\right)}{c+e^{\nicefrac{t}{\alpha}}}
	\end{equation}
WLOG initial conditions of $y \left( t \right) = z \left( t \right) = 0$ are assumed to simplify the computation.  Using anti-derivative methods one finds dimensionless
	\begin{align} \label{eq:appendix geodesic soltn closed form raw}	
		\frac{x(\nicefrac{t}{\alpha})}{\alpha} & = \frac{x_0}{\alpha} + \left( \arctanh\left(\sqrt{1 + \omega_0 \left( c + e^{\nicefrac{t}{\alpha}}\right)^2}\right)-\frac{\arctanh\left(\frac{c^2 \omega_0 + c \omega_0 e^{\nicefrac{t}{\alpha}} + 1}{\sqrt{1+c^2 \omega_0} \sqrt{c^2 \omega_0 + 2 c \omega_0 e^{\nicefrac{t}{\alpha}} + \omega_0 e^{2 \nicefrac{t}{\alpha}} + 1}}\right)}{\sqrt{1+c^2 \omega_0}} \right)
		 \nonumber \\
			 &  \qquad \qquad \times \left( \frac{ \left(c+e^{\nicefrac{t}{\alpha}}\right) \sqrt{c^2 \omega_0 + 2 c \omega_0 e^{\nicefrac{t}{\alpha}} + \omega_0 e^{2 \nicefrac{t}{\alpha}} + 1} }{ c \sqrt{\left(c+e^{\nicefrac{t}{\alpha}}\right)^2 \left(c^2 \omega_0 + 2 c \omega_0 e^{\nicefrac{t}{\alpha}} + \omega_0 e^{2 \nicefrac{t}{\alpha}} + 1\right)} } \right)
	\end{align}
where $x_0$ is the first integration constant solving for initial spatial position and $\omega_0$ is the second integration constant relating to initial velocity and causal character.  In particular, we refer the reader to \cite{LesnefskyDissertation_2024} Section 2.5 where we solve the warped product geodesic equation projection of Eq. 2.14 in the aforementioned opus.  The result is Eqs. 2.22, 2.23 in said opus, where, being forced to utilize various ODE tricks to solve the equation, one is forced to adopt a particular frame.  In this frame the constant $w_0$ in the aforementioned equations of \cite{LesnefskyDissertation_2024} encapsulate the causal character, with $w_0 \in \left\{ -1 , 0 , +1 \right\}$ for timelike, null, spacelike geodesics respectively.  Note that in this prescription $w_0$ is constant for all time.  In this case, we will show that integration constant $\omega_0$ can be related to causal character, although not quite as simple as $w_0$ discussed above.

From the heuristic of solving an ODE, one sees an integration constant of $\omega_0$ appearing in Eq. \ref{eq:appendix geodesic soltn closed form raw}: one must solve for this and hopefully divine a physical interpretation.  From this solution one calculates
	\begin{equation}
		\frac{d x^k}{d \lambda} \frac{\partial}{\partial x^k} = \pm \frac{1}{\sqrt{f \left( t \right)\left( 1 + \omega_0 f^2 \left( t \right) \right)}} \frac{\partial}{\partial x^k}
	\end{equation}
which looks suspiciously like Eq. 2.21 in \cite{LesnefskyDissertation_2024}!  Does $\omega_0$ relate to causal character?    We will show it does.  One computes
	\begin{equation} \label{eq:appendix anal geodesic velocity}
		g \left( \frac{d x}{d t} , \frac{d x}{d t} \right) = f^2 \left( t \right) \left( \frac{d x}{d t} \right)^2 = \frac{1}{1 + f^2 \left( t \right) \omega_0}
	\end{equation}
With the above equation, one can continue simplifying Eq. \ref{eq: appendix example geodesics omega 0 general GFRW calculation}.  One then reaps
    \begin{align} \label{eq:appendix omega 0 particular solution calculation}
\left.\frac{1}{f^2 \left( t \left(\lambda\right)\right)\left(1 + \omega_0 f^2 \left( t \left(\lambda\right)\right)\right)}\right|_{t_0} & = \left.\frac{f^2 \left( t \left(\lambda\right)\right)}{\xi^2_{x^k}} \left( 1 - \left(\frac{d \lambda}{d t}\right)^2 \right)\right|_{t_0} \nonumber \\
    	1 + \omega_0 f^2 \left( t_0\right) & = \left.\frac{\xi^2_{x^k}}{f^4 \left( t \left(\lambda\right)\right)}\frac{1}{1 - \left(\frac{d \lambda}{d t}\right)^2}\right|_{t_0} \nonumber \\
    	\omega_0 & = \left.\frac{g_\Sigma \left( \frac{\partial}{\partial x^k} , \frac{\partial}{\partial x^k} \right)}{f^2 \left( t \left(\lambda\right)\right)} \left( \frac{1}{1 - \left(\frac{d \lambda}{d t}\right)^2} - 1 \right)\right|_{t_0} \nonumber \\
    	\omega_0 & = \left.\frac{1}{f^2 \left( t \left(\lambda\right)\right)} \left( \frac{1}{1 - \left(\frac{d \lambda}{d t}\right)^2} - 1 \right)\right|_{t_0}
    \end{align}
Finally utilizing the frame of Eq. \ref{eq:appendix example geodesics ndsolve initial condition timelike} the timelike geodesics parameterized as $x\left( t\right)$ can be computed; the results are shown in Figs. \ref{fig:appendix example anal geodesics +c},\ref{fig:appendix example anal future geodesics +c},\ref{fig:appendix example anal past geodesics +c},\ref{fig:appendix example anal lightcone geodesics +c}.
	\begin{figure}
		\centering
		\includegraphics[width=\linewidth]{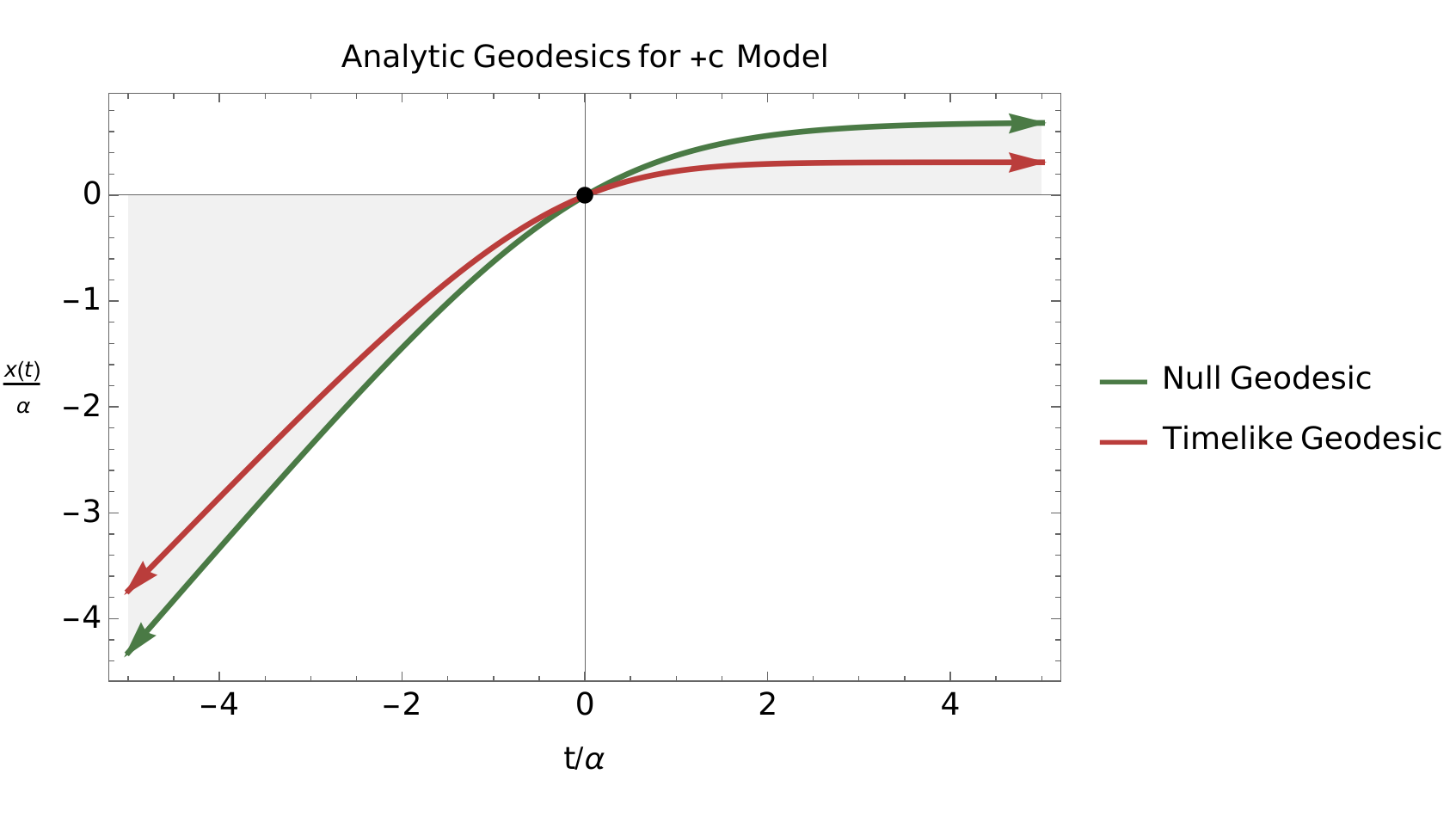}
		\caption{Plot $x\left( t\right)$ vs $t$ of null and timelike geodesics of $\mathbb{R}^1_1 \times_{\exp{\nicefrac{t}{\alpha}} + c} \mathbb{R}^3$ with $\alpha = c = 1$.  Independent variable is the time coordinate.  All length units have been renormalized to be dimensionless dividing by canonical length scale $\alpha$.  Initial null velocity is derived from frame $\left. \left[ 1 , \frac{1}{2} , 0 , 0 \right]^T \right|_{t = 0}$ and initial timelike velocity is derived from frame $\left. \left[ \sqrt{2} , \frac{1}{2} , 0 , 0 \right]^T \right|_{t = 0}$ a propos Eq. \ref{eq:appendix omega 0 particular solution calculation}.  The shown plot is a cylinder over constant coordinates $y,z$, hence two spacetime dimensions have been suppressed.  Partial light cones can be seen as the area under the null curve shaded light gray.  Appropriate causal arclength was calculated to diverge as $t \rightarrow +\infty$; thus this model is geodesically complete.  However, $H^-_{avg} = 0$ for computation over first limit asymptotic past $t \rightarrow - \infty$, $H^+_{avg} = \frac{1}{\alpha} = 1$ for first limit asymptotic future $t \rightarrow + \infty$, and $H_{avg} = \frac{1}{2}$ for a symmetric interval limit computation - see \cite{Lesnefsky:2022fen} for discussion of $H_{avg}$ limit order concerns. }
		\label{fig:appendix example anal geodesics +c}
	\end{figure} 
    \begin{figure}
		\centering
		\includegraphics[width=\linewidth]{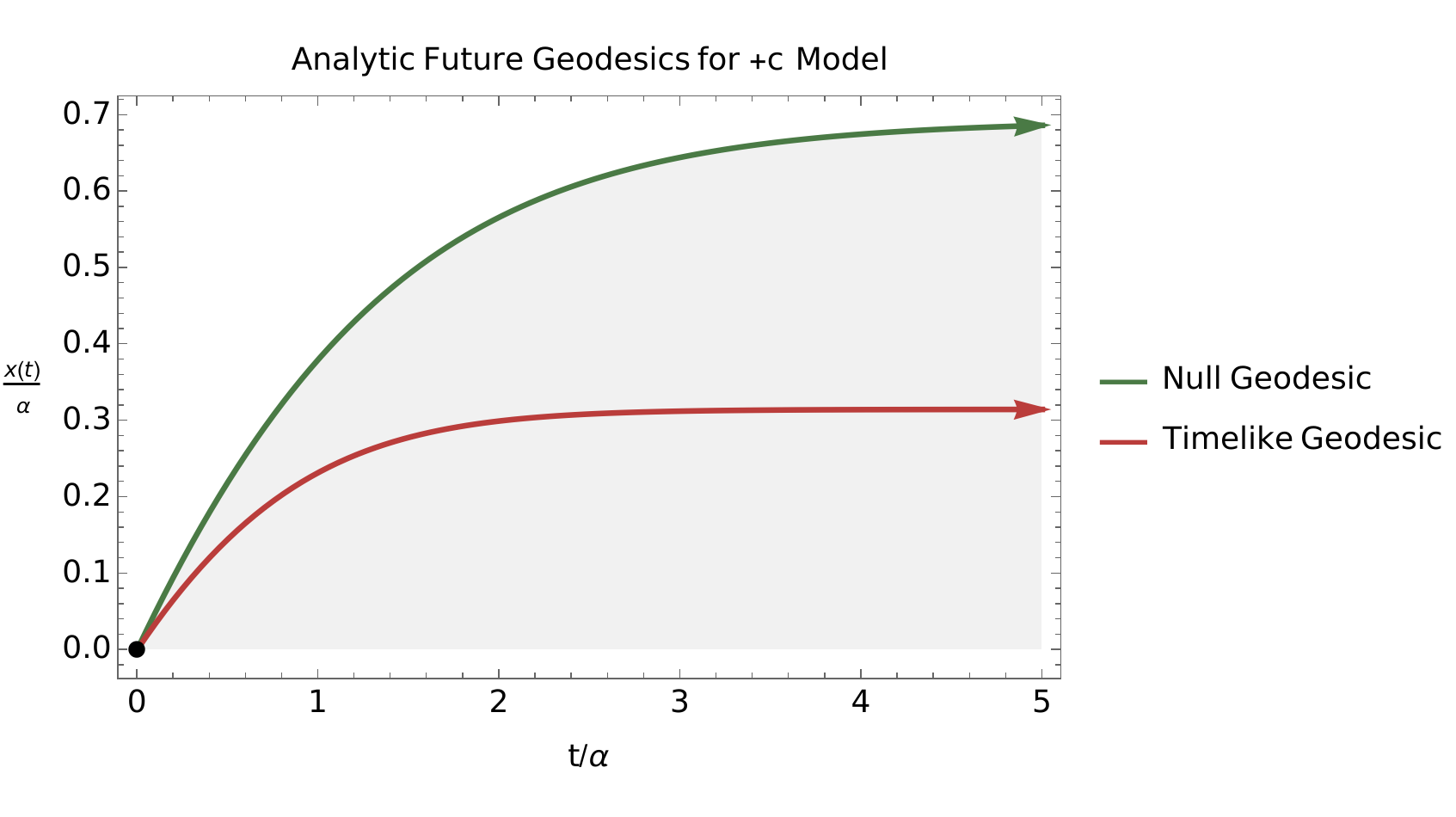}
		\caption{Additional future geodesic rays of the model and calculations of Fig. \ref{fig:appendix example anal geodesics +c}.}
		\label{fig:appendix example anal future geodesics +c}
	\end{figure} 
    \begin{figure}
		\centering
		\includegraphics[width=\linewidth]{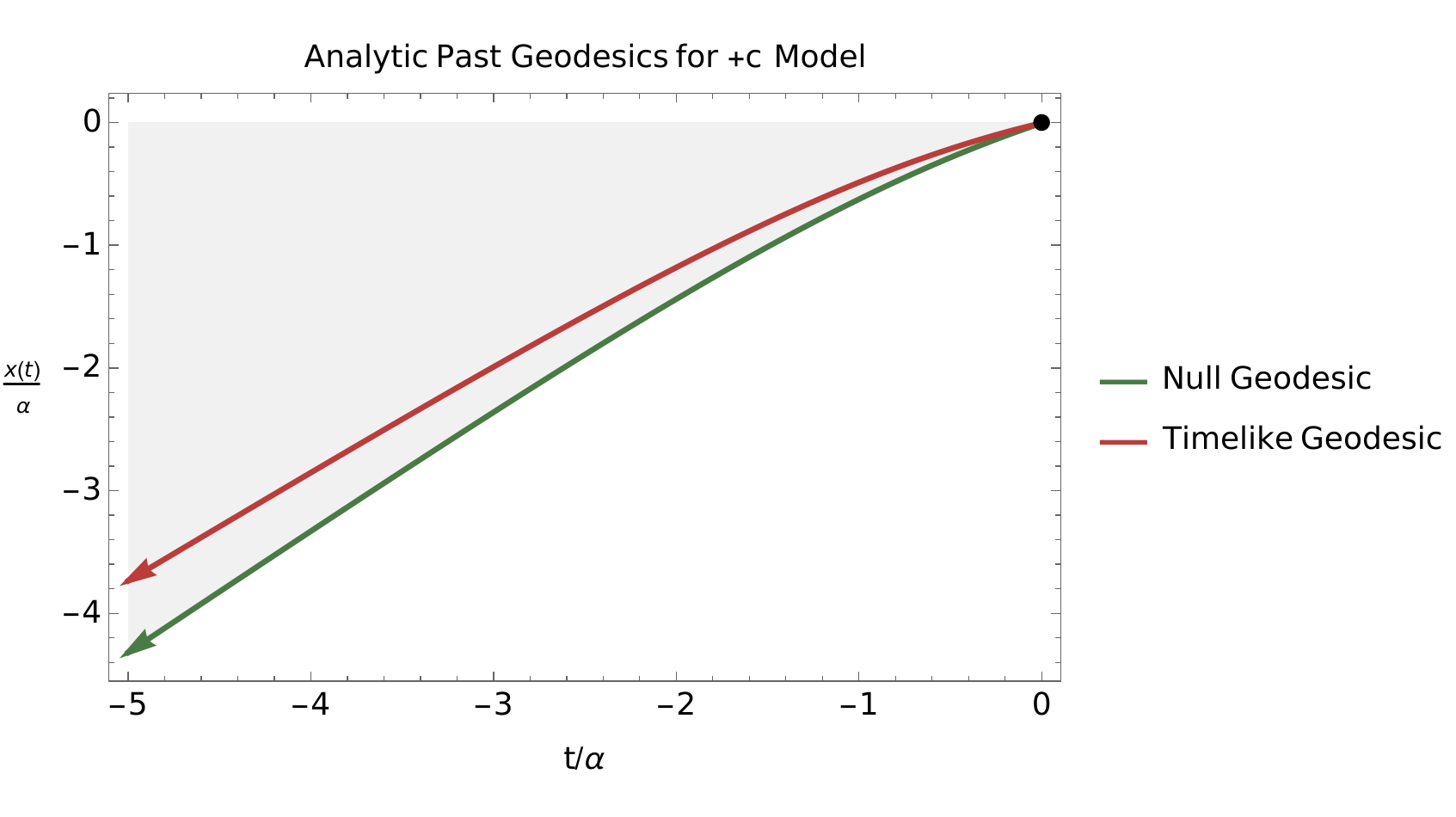}
		\caption{Additional past geodesic rays of the model and calculations of Fig. \ref{fig:appendix example anal geodesics +c}.}
		\label{fig:appendix example anal past geodesics +c}
	\end{figure} 
    \begin{figure}
		\centering
		\includegraphics[width=\linewidth]{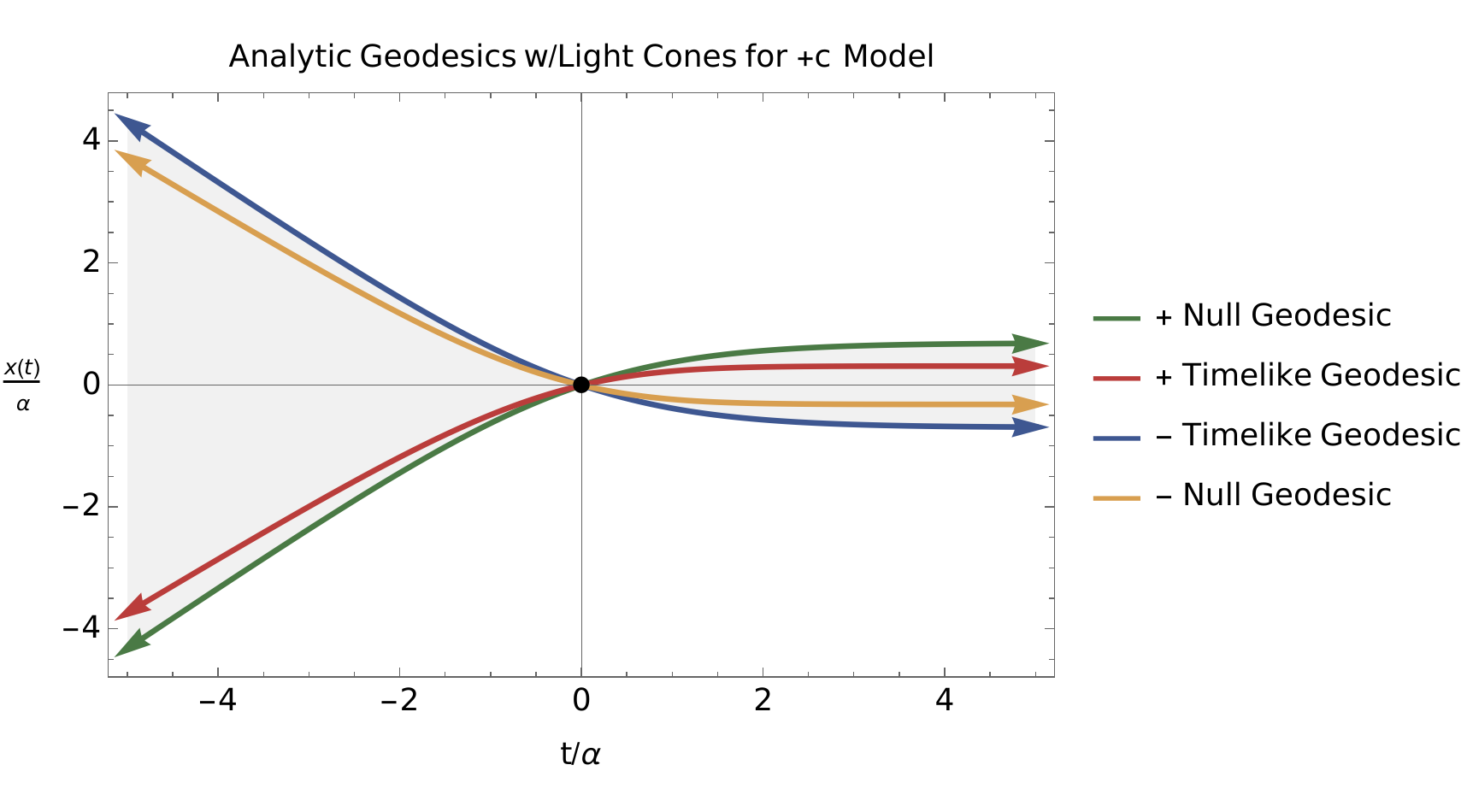}
		\caption{Additional geodesic plots of the model and calculations of Fig. \ref{fig:appendix example anal geodesics +c}.  Additional null and timelike geodesics were plotted as to show the light cone cylinder $\Lambda_{\left( 0 , 0,y,z \right)} \mathbb{R}^1_1 \times_{\exp{\nicefrac{t}{\alpha}} + c} \mathbb{R}^3$ over cylinder $t=0 , x=0 , y , z$.}
		\label{fig:appendix example anal lightcone geodesics +c}
	\end{figure}
    
Analytic null geodesics are most easily calculated from the length calculation of $g\left( \dot{\gamma}, \dot{\gamma} \right) = 0 = -dt^2 + f^2 \left( t \right) \overline{g}\left(  \dot{\gamma}, \dot{\gamma} \right)$ where $\dot{\gamma}$ is a null geodesic of $\mathbb{R}^1_1 \times_{\exp{\nicefrac{t}{\alpha}} + c} \mathbb{R}^3$ and $\overline{g}$ is the (flat) metric of $\mathbb{R}^3$.  From this calculation one reaps
\begin{equation} \label{eq:appendix anal null geodesic +c model}
    \dot{x}\left( t\right) = \frac{1}{f\left( t\right)} = \frac{1}{\exp{\nicefrac{t}{\alpha}} + c}
\end{equation}
with solution
\begin{equation} \label{eq:appendix anal null geodesic +c model soltn}
    \frac{x \left( \nicefrac{t}{\alpha}  \right)}{\alpha} = \frac{\chi_0}{\alpha} + \frac{1}{c}\frac{t}{\alpha} - \frac{1}{c}\left( \ln{c} - \ln{\alpha} + \ln{\left(\exp{\nicefrac{t}{\alpha}} + c\right)} \right)
\end{equation}
with integration constant $\chi_0 = x_0 - \frac{t_0}{c} + \frac{\alpha}{c}\left( \ln{c} - \ln{\alpha} + \ln{\left(\exp{\nicefrac{t_0}{\alpha}} + c\right)} \right)$ where $x_0$ is the initial $x$ coordinate.  Applied to the model at hand with $\alpha=c=1$, the results are plotted in Fig. \ref{fig:appendix example anal geodesics +c}.

\subsection{Geodesics of the Quadratic Model}
Geodesics of Eq. \ref{apoly} will now be plotted.  To emphasize the polynomial nature of the scale factor we adopt the standard mathematical nomenclature
\begin{equation} \label{eq:appendix quadratic scale factor}
    f \left( t \right) = a_2 t^2 + a_1 t + a_0 \,.
\end{equation}
The GFRW used for calculation is that of $\mathbb{R}^1_1 \times_{a_2 t^2 + a_1 t + a_0} \mathbb{R}^3$.

\subsubsection{Numeric Geodesics of the Quadratic Model}
In a similar vein to the calculations of Sec. \ref{sec:appendix Num Geodesics +c}, the geodesics of the quadratic model will be calculated numerically utilizing \texttt{ogre.m} in Mathematica.  The geodesic equation gives
\begin{eqnarray} \label{eq:appendix quadratic model numeric geodesic equation}
 0 & = & \left(2 a_2 t+a_1\right) \left(t \left(a_2 t+a_1\right)+a_0\right) \left(\dot{x}^2+\dot{y}^2+\dot{z}^2\right)+\ddot{t} \\
 0 & = & \frac{2 \dot{t} \dot{x} \left(2 a_2 t+a_1\right)}{t \left(a_2 t+a_1\right)+a_0}+\ddot{x} \\
 0 & = & \frac{2 \dot{t} \dot{y} \left(2 a_2 t+a_1\right)}{t \left(a_2 t+a_1\right)+a_0}+\ddot{y} \\
 0 & = & \frac{2 \dot{t} \dot{z} \left(2 a_2 t+a_1\right)}{t \left(a_2 t+a_1\right)+a_0}+\ddot{z}
\end{eqnarray}
Again using the initial conditions for the frames of Eqs. \ref{eq:appendix example geodesics ndsolve initial condition timelike},\ref{eq:appendix example geodesics ndsolve initial condition null} one reaps the solutions\footnote{In Sec. \ref{sec:appendix Num Geodesics +c} there was only a single choice of parameter to renormalize length: that of $\alpha$.  However, in a polynomial, there are multiple parameters with units, namely $\left\{ a_\ell \right\}_{\ell=1}^n$, with $a_\ell$ having units $\mathrm{length}^{-\ell}$.  The leading power in a polynomial contributes most to canonical scale: we chose to renormalize by $\sqrt[n]{a_n}$ which has units of $\mathrm{length}^{-1}$.} plotted in Figs. \ref{fig:appendix example numeric geodesics quadratic}, \ref{fig:appendix example numeric future geodesics quadratic}, \ref{fig:appendix example numeric past geodesics quadratic}, \ref{fig:appendix example numeric geodesics quadratic1}, \ref{fig:appendix example numeric future geodesics quadratic1}, \ref{fig:appendix example numeric past geodesics quadratic1}, \ref{fig:appendix example numeric parametric geodesics quadratic}, \ref{fig:appendix example numeric parametric future geodesics quadratic}, \ref{fig:appendix example numeric parametric past geodesics quadratic}.
	\begin{figure}
		\centering
		\includegraphics[width=\linewidth]{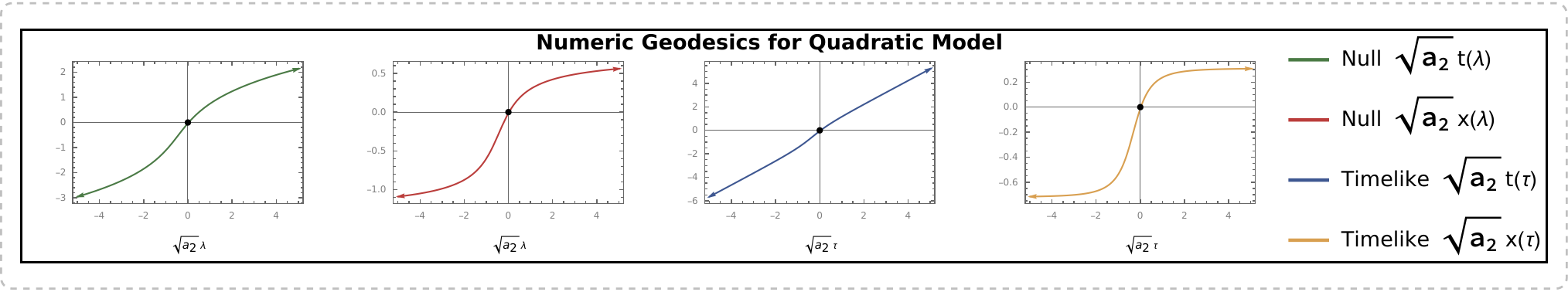}
		\caption{Plot of null and timelike geodesics of $\mathbb{R}^1_1 \times_{a_2 t^2 + a_1 t + a_0} \mathbb{R}^3$ with $a_2 = a_1 = 1$ and $a_0 = 2$.  Null affine parameter is given by $\lambda$ and timelike proper time is given by $\tau$.  All length units have been renormalized to be dimensionless multiplying by canonical length scale $\sqrt{a_2}$.  Initial null velocity is $\left. \left[ 1 , \frac{1}{2} , 0 , 0 \right]^T \right|_{\lambda = 0}$ and initial timelike velocity is $\left. \left[ \sqrt{2} , \frac{1}{2} , 0 , 0 \right]^T \right|_{\tau = 0}$.  The shown plot is a cylinder over constant coordinates $y,z$, hence two spacetime dimensions have been suppressed.  Appropriate causal arclength was calculated to diverge as $\lambda, \tau \rightarrow \infty$; thus this model is geodesically complete.  However, $H^-_{avg} = 0$ for computation over first limit asymptotic past $t \rightarrow - \infty$, $H^+_{avg} = 0$ for first limit asymptotic future $t \rightarrow + \infty$, and $H_{avg} > 0$ for a symmetric compact interval limit computations - see \cite{Lesnefsky:2022fen} for discussion of $H_{avg}$ limit order concerns. 
		\label{fig:appendix example numeric geodesics quadratic}.  For the interval $\left[ -10 , 10 \right]$, one computes $H_{avg}^{\left[ -10 , 10 \right]} = \frac{\ln 28 - \ln 23}{20} \simeq 0.009835$.}
	\end{figure}
    \begin{figure}
		\centering
		\includegraphics[width=\linewidth]{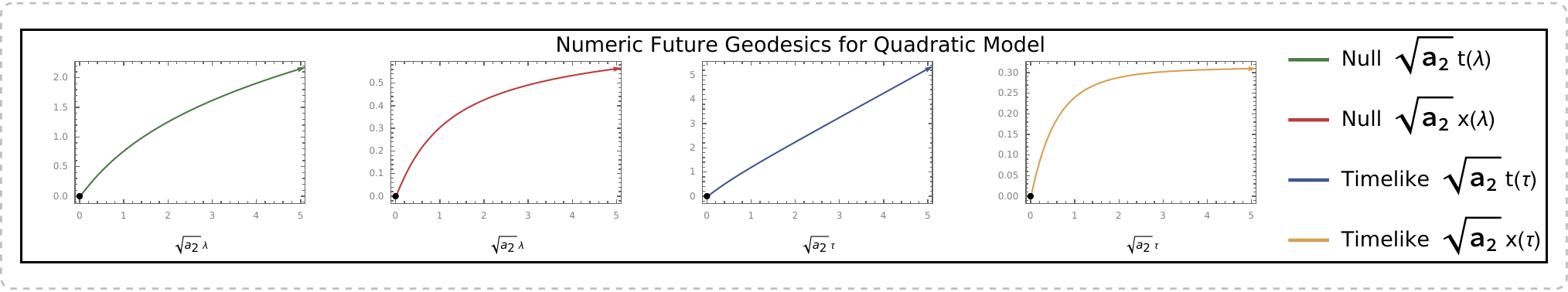}
		\caption{An additional plot highlighting future geodesic rays of the model and computations of Fig. \ref{fig:appendix example numeric geodesics quadratic}.}
		\label{fig:appendix example numeric future geodesics quadratic}
	\end{figure}
    \begin{figure}
		\centering
		\includegraphics[width=\linewidth]{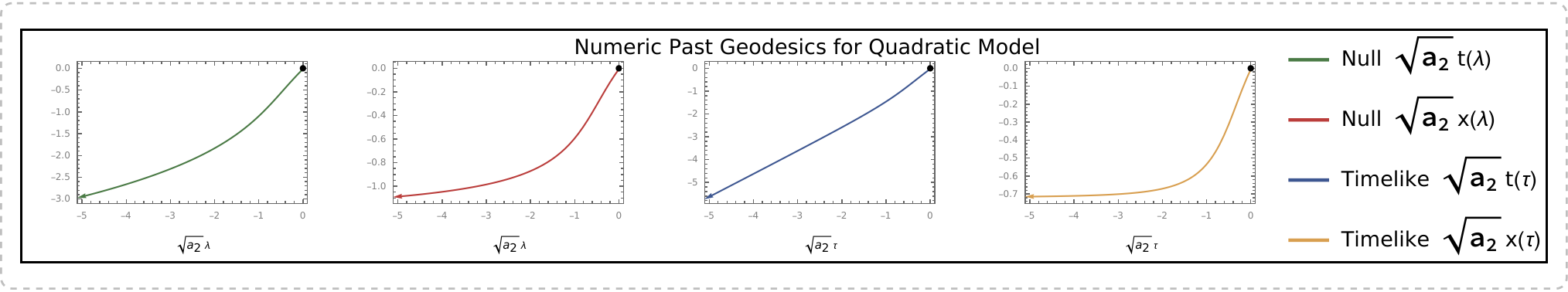}
		\caption{An additional plot highlighting past geodesic rays of the model and computations of Fig. \ref{fig:appendix example numeric geodesics quadratic}.}
		\label{fig:appendix example numeric past geodesics quadratic}
	\end{figure}
	\begin{figure}
		\centering
		\includegraphics[width=\linewidth]{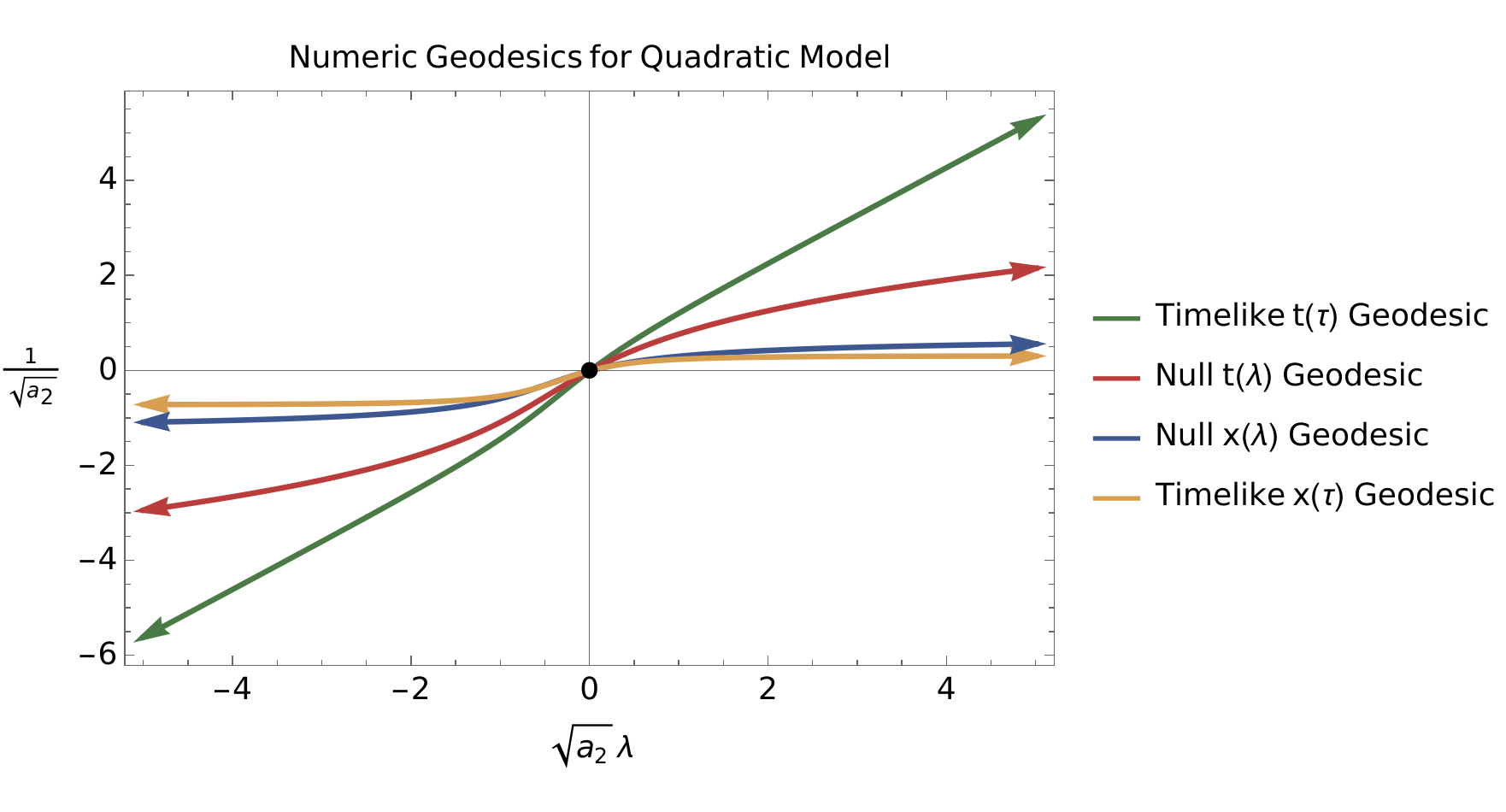}
		\caption{An additional plot of null and timelike geodesics of the model and computations of Fig. \ref{fig:appendix example numeric geodesics quadratic}.}
        \label{fig:appendix example numeric geodesics quadratic1}
	\end{figure}	
    \begin{figure}
		\centering
		\includegraphics[width=\linewidth]{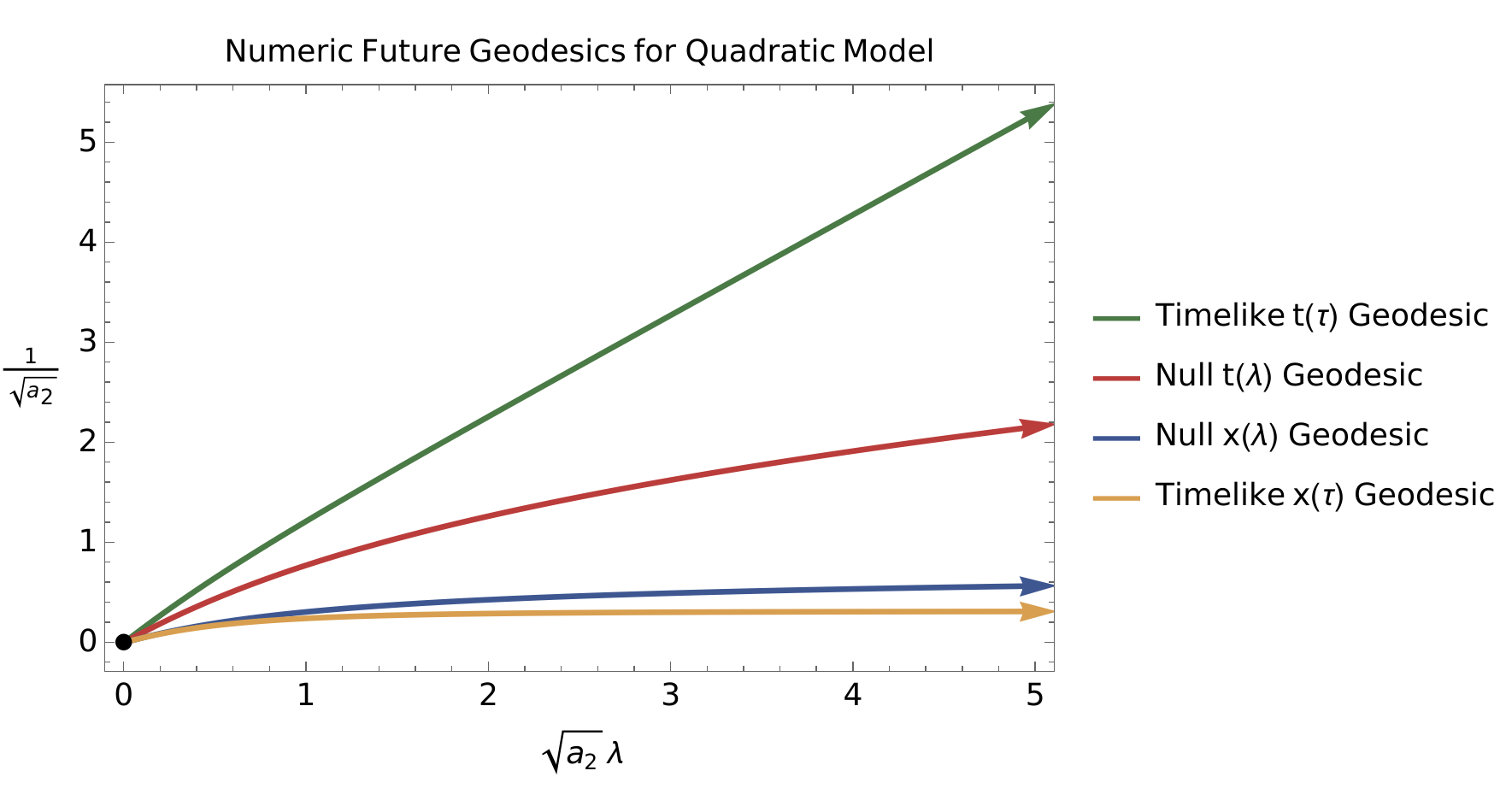}
		\caption{An additional plot of future null and timelike geodesic rays of the model and computations of Fig. \ref{fig:appendix example numeric geodesics quadratic}.}
        \label{fig:appendix example numeric future geodesics quadratic1}
	\end{figure}
    \begin{figure}
		\centering
		\includegraphics[width=\linewidth]{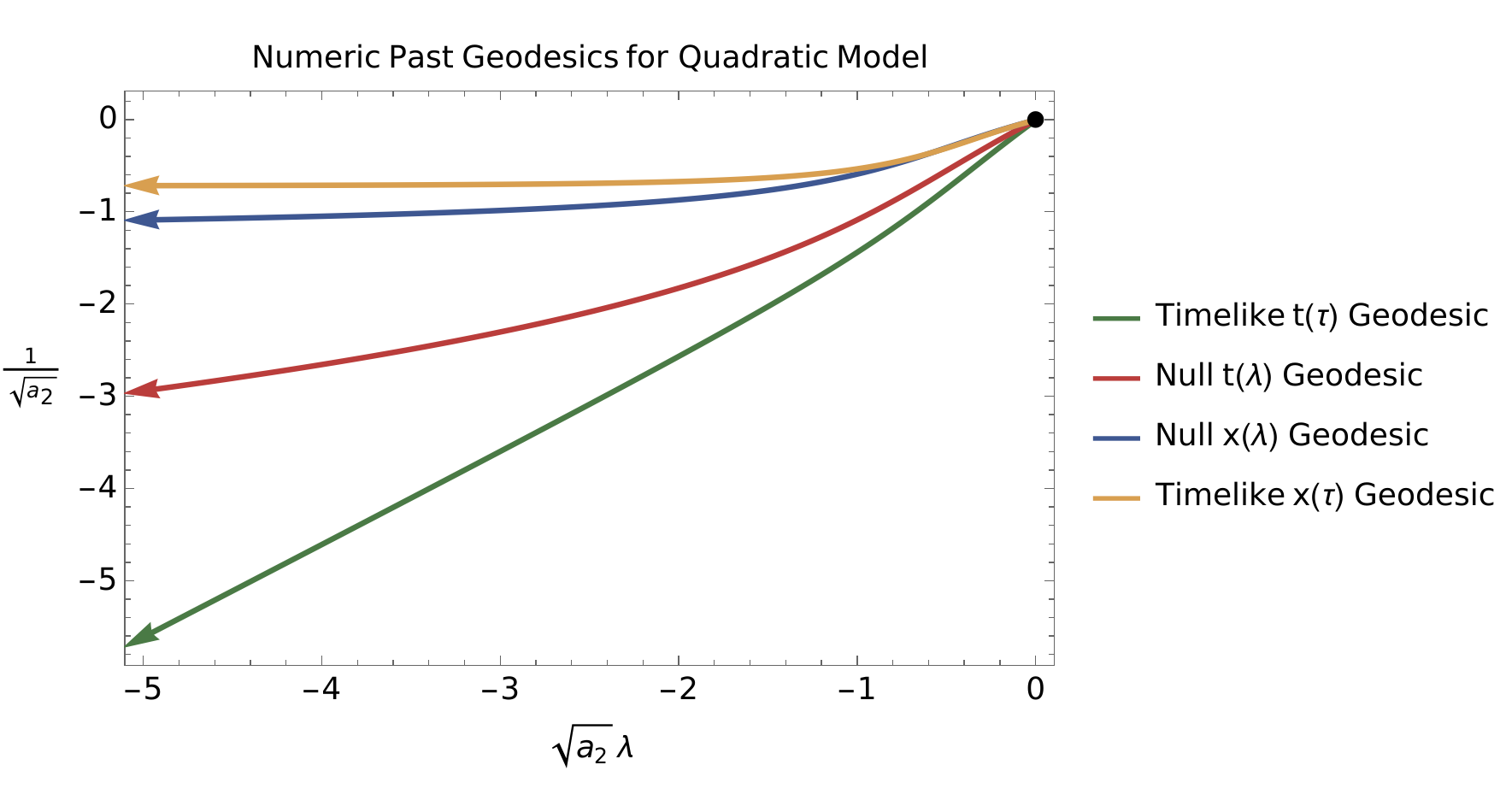}
		\caption{An additional plot of past null and timelike geodesic rays of the model and computations of Fig. \ref{fig:appendix example numeric geodesics quadratic}.}
        \label{fig:appendix example numeric past geodesics quadratic1}
	\end{figure}	
    \begin{figure}
		\centering
		\includegraphics[width=\linewidth]{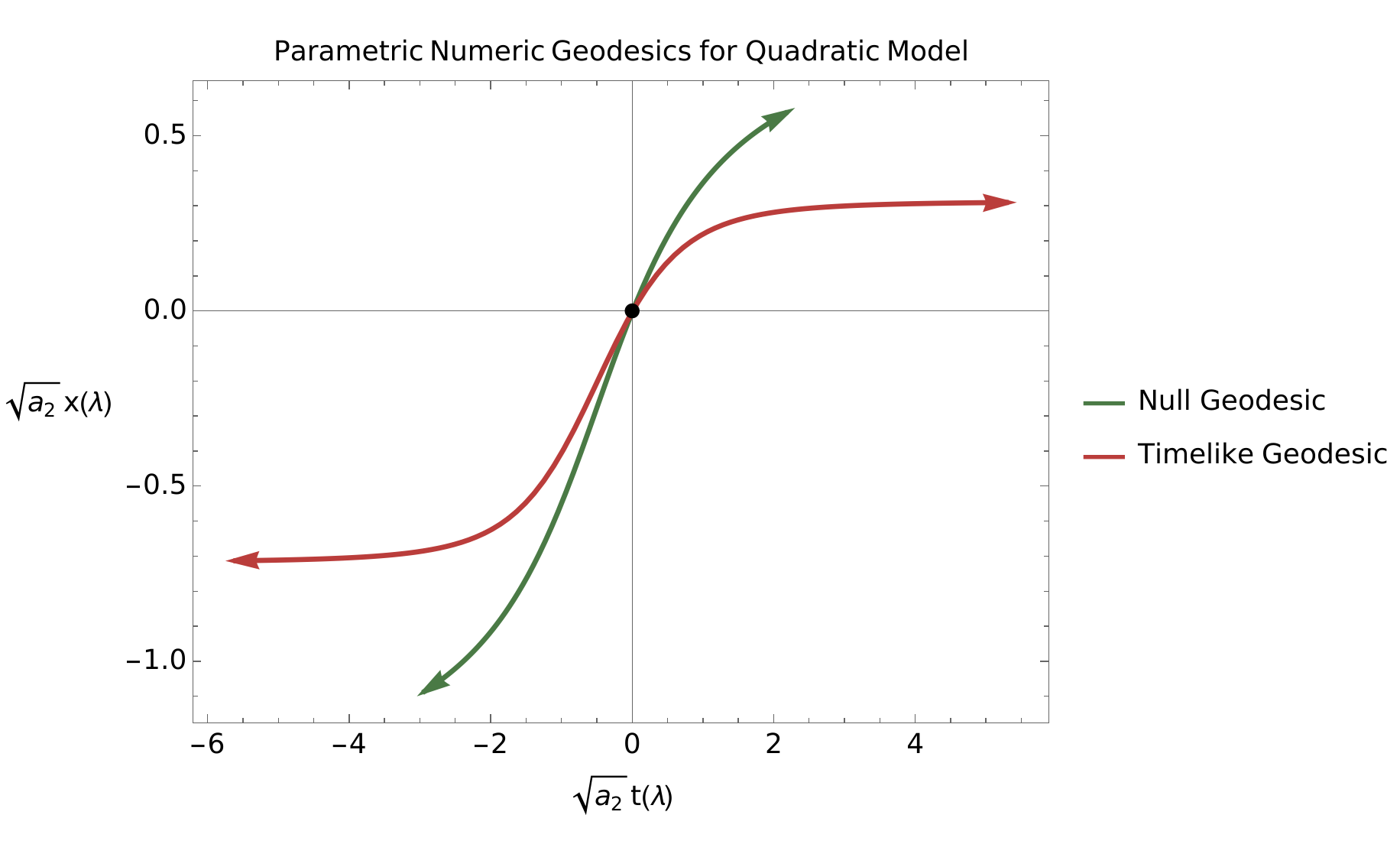}
		\caption{Parametric plot $x$ vs $t$ of null and timelike geodesics of $\mathbb{R}^1_1 \times_{a_2 t^2 + a_1 t + a_0} \mathbb{R}^3$ with $a_2 = a_1 = 1$ and $a_0 = 2$.  Null affine parameter is given by $\lambda$ and timelike proper time is given by $\tau$.  All length units have been renormalized to be dimensionless multiplying by canonical length scale $\sqrt{a_2}$.  The appropriate affine parameter was evolved for $\sqrt{a_2}\lambda, \sqrt{a_2}\tau \in \left[ -5 , 5 \right]$.  Initial null velocity is $\left. \left[ 1 , \frac{1}{2} , 0 , 0 \right]^T \right|_{\lambda = 0}$ and initial timelike velocity is $\left. \left[ \sqrt{2} , \frac{1}{2} , 0 , 0 \right]^T \right|_{\tau = 0}$.  The shown plot is a cylinder over constant coordinates $y,z$, hence two spacetime dimensions have been suppressed.  Light cones can be seen as the area under the null curve.  Appropriate causal arclength was calculated to diverge as $\lambda, \tau \rightarrow \infty$; thus this model is geodesically complete.  However, $H^-_{avg} = 0$ for computation over first limit asymptotic past $t \rightarrow - \infty$, $H^+_{avg} = 0$ for first limit asymptotic future $t \rightarrow + \infty$, and $H_{avg} > 0$ for a symmetric compact interval limit computations - see \cite{Lesnefsky:2022fen} for discussion of $H_{avg}$ limit order concerns. 
		Utilizing the interval $\left[ -10 , 10 \right]$, one computes $H_{avg}^{\left[ -10 , 10 \right]} = \frac{\ln 28 - \ln 23}{20} \simeq 0.009835$.}
		\label{fig:appendix example numeric parametric geodesics quadratic}
	\end{figure}
    \begin{figure}
		\centering
		\includegraphics[width=\linewidth]{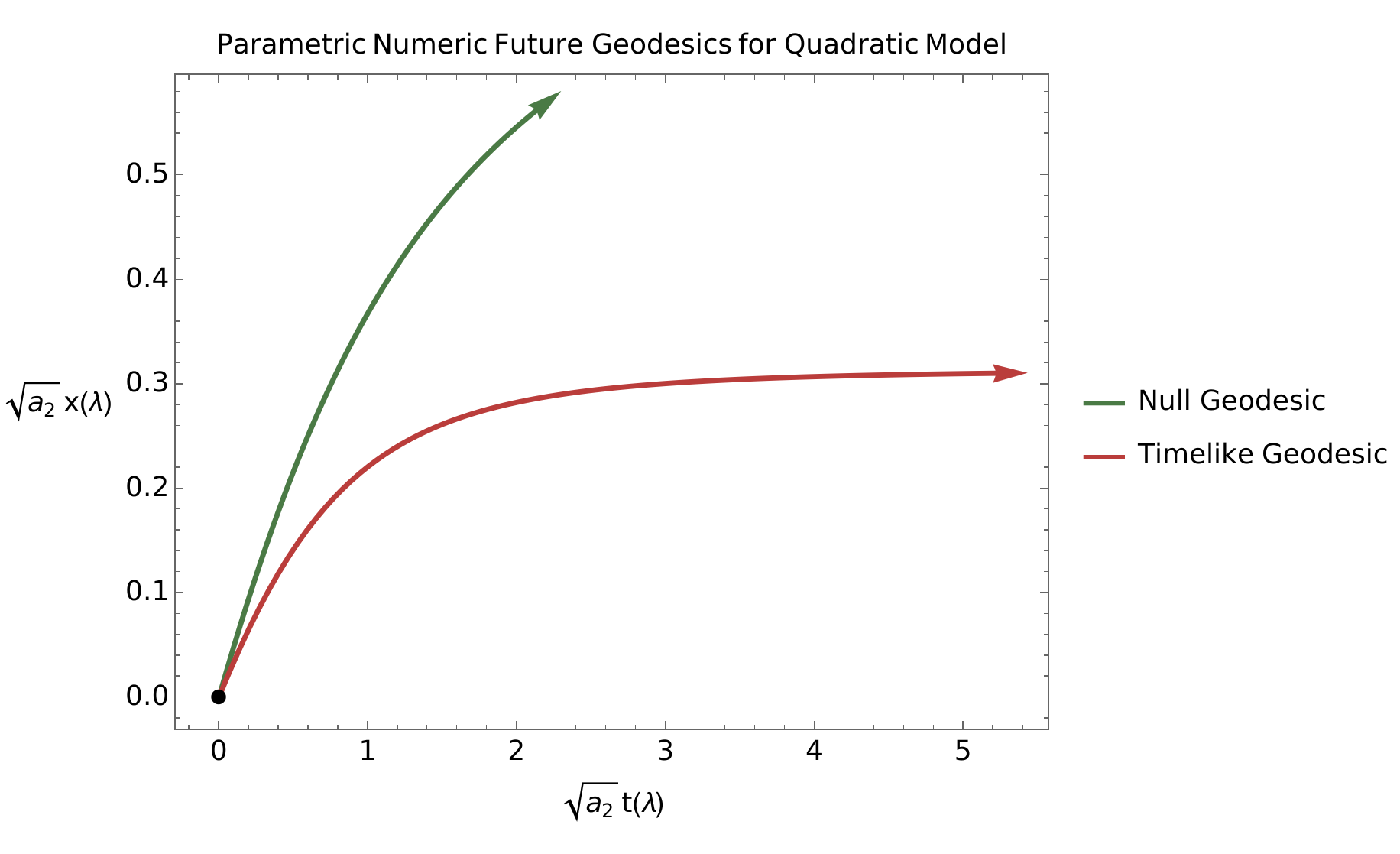}
		\caption{An additional plot of future null and timelike geodesic rays of the model and computations of Fig. \ref{fig:appendix example numeric parametric geodesics quadratic} with $\sqrt{a_2}\lambda, \sqrt{a_2}\tau \in \left[ 0 , 5 \right]$.}
        \label{fig:appendix example numeric parametric future geodesics quadratic}
	\end{figure}
    \begin{figure}
		\centering
		\includegraphics[width=\linewidth]{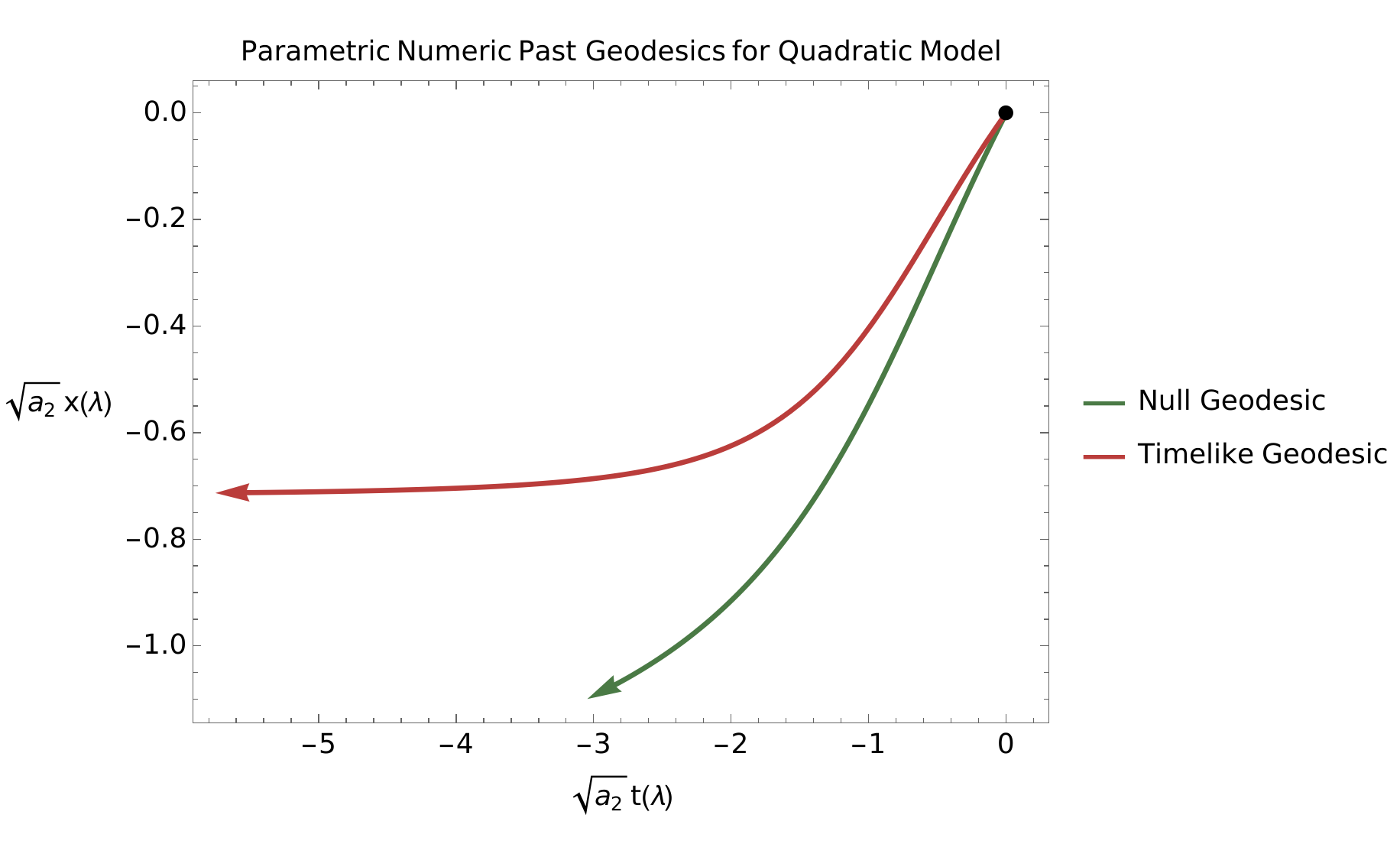}
		\caption{An additional plot of past null and timelike geodesic rays of the model and computations of Fig. \ref{fig:appendix example numeric parametric geodesics quadratic} with $\sqrt{a_2}\lambda, \sqrt{a_2}\tau \in \left[ -5 , 0 \right]$.}
        \label{fig:appendix example numeric parametric past geodesics quadratic}
	\end{figure}
    
We would like to remind the reader of some $H_{avg}$ considerations a propos this model discussed in Sec. \ref{sec: quadratic model}.  Investigated at length in \cite{Lesnefsky:2022fen}, the interval and order in which the limits are taken in the calculation of generalized Hubble parameter
\begin{equation}
    H^{\gamma} \left( \dot{\gamma} , U \right) = - \frac{g \left( \dot{\gamma} , \nabla_{\dot{\gamma}} U \right)}{g \left( U , \dot{\gamma} \right) + \kappa} \label{eq:appendix H BGV geometer}
\end{equation}
over geodesic $\gamma$ measuring a particle on observer frame field $U$ with $\kappa = \left\{ -1 , 0 \right\}$ for \{timelike, null\} $\gamma$, respectively.  This can be averaged over $\gamma$ by
\begin{equation}
H_{avg}^{\gamma}=\frac{1}{\lambda_{f}-\lambda_{i}}\int_{\lambda_{i}}^{\lambda_{f}}H^{\gamma}\left(\alpha\right)d\alpha\label{eq:appendix havg bgv}
\end{equation}
for a finite interval $\left[ \lambda_i , \lambda_f \right]$.  However, if one would like to calculate $H_{avg}$ over all $\mathbb{R}$ limits must be taken:
\begin{equation}
    H^{\pm,\gamma}_{avg}=\lim_{-a,b\rightarrow\infty}\frac{1}{b-a}\int_{a}^{ b}H^{\gamma}\left(\alpha\right)d\alpha\label{eq:appendix Havg asym full}.
\end{equation}
The results of Eq. \ref{eq:appendix havg bgv}, \ref{eq:appendix Havg asym full} can be highly dependent on the interval $\left[ \lambda_i , \lambda_f \right]$ chosen or the order the limits are taken.  A prime example of this is the quadratic model at hand.  A cursory examination Fig. \ref{cepoly} reveals greater measure of $f$ over $\mathbb{R}^+$ compared to $\mathbb{R}^-$ due to the anisotropy of the linear term, so one would expect $H_{avg} > 0$ if computed over a symmetric interval $\left[ -\lambda_0 , \lambda_0 \right]$ and thus geodesic incomplete by the results of \cite{Borde:2001nh}.  This scale factor, however, yields a complete GFRW per Thm. \ref{ledthm}: a counter-example!  However, if one utilizes Eq. \ref{eq:appendix Havg asym full} one reaps $H_{avg}^\pm = 0$ because the (small) linear anisotropy is quenched by $\frac{1}{\lambda_f - \lambda_i}$ as either $\lambda_i \rightarrow - \infty$ or $\lambda_f \rightarrow + \infty$.  Thus in this calculation $H_{avg}^\pm = 0$ yet the GFRW is complete by Thm. \ref{ledthm}.
To make matters worse one could consider the scale factor $f\left( t \right) = t^2 + t + \frac{1}{4}$.  This scale factor has a single root at $t = - \frac{1}{2}$ which means a GFRW with this scale factor is (trivially) geodesically incomplete.  However, by inspection, calculation of $H_{avg} > 0$ over symmetric interval $\left[ -\lambda_0 , \lambda_0 \right]$ and $H_{avg}^\pm = 0$ by Eq. \ref{eq:appendix Havg asym full}.  Thus the results of \cite{Borde:2001nh} does not elucidate any meaningful information about geodesic completeness of inflationary spacetimes.  However, Thm. \ref{ledthm}, which is a direct solution of the geodesic equation, does!

\subsubsection{Analytic Geodesics of the Quadratic Model}
Again utilizing Eq. \ref{eq: appendix example geodesics geodesic as t} for GFRW model $\mathbb{R}^1_1 \times_{a_2 t^2 + a_1 t + a_0} \mathbb{R}^3$ one reaps geodesic ODE
\begin{equation} \label{eq: appendix quadratic analytic x t geodesic ode}
    0 = x''(t) + \frac{\left(2 a_2 t+a_1\right) x'(t) \left(2-\left(t \left(a_2 t+a_1\right)+a_0\right)^2 x'(t)^2\right)}{t \left(a_2 t+a_1\right)+a_0}
\end{equation}
assuming WLOG $y\left( t \right) = z \left( t \right) = 0$. An analytic solution exists\footnote{Mathematica 13.3.0.0 code utilized to solve ODE is \texttt{DSolve}$\bigl[ '\left( \mathrm{insert \; RHS \; Eq. \; \ref{eq: appendix quadratic analytic x t geodesic ode}} \right)' == 0 $ , \texttt{x[t]} , \texttt{t} , \texttt{Assumptions}$\rightarrow t \in \mathbb{R}$ ,  \texttt{Assumptions}$\rightarrow a_1^2 - 4a_2 a_0 < 0$ , \texttt{Assumptions}$\rightarrow a_2 \in \mathbb{R}$ , \texttt{Assumptions}$\rightarrow a_1 \in \mathbb{R}$ , \texttt{Assumptions}$\rightarrow a_0 \in \mathbb{R} \bigr]$ }.  Due to its length - 7 pages - only a small portion will be shown; please contact the authors if you are interested in the full solution.
\begin{align} \label{eq:appendix quadratic anal solution schematic}
    x \left( t \right) & = \left(t-\frac{-a_1+\frac{\sqrt{a_1^2 \omega_0-4 a_2 \sqrt{-\omega_0}-4 a_0 a_2 \omega_0}}{\sqrt{\omega_0}}}{2 a_2}\right)^2 \times  \cdots \times\left( \begin{matrix} \mathrm{complex \; solution \; of \; elliptic \;functions \;} \\ \mathrm{ of \; variables \;} a_2 , a_1 , a_0 , \omega_0 , t \end{matrix}\ \right) \nonumber \\
    & \qquad \;\times \cdots \times \left( \frac{-a_1+\frac{\sqrt{a_1^2 \omega_0-4 a_2 \sqrt{-\omega_0}-4 a_0 a_2 \omega_0}}{\sqrt{\omega_0}}}{2 a_2}-\frac{-a_1+\sqrt{a_1^2+\frac{4 a_2 \left(\sqrt{-\omega_0}-a_0 \omega_0\right)}{\omega_0}}}{2 a_2} \right) + x_0
\end{align}
with $\omega_0$ being the integration constant corresponding to causal character.  Utilizing Eq. \ref{eq: appendix example geodesics omega 0 general GFRW calculation} one reaps $\omega_0 = \nicefrac{1}{4}$ corresponding to a timelike geodesic for initial conditions of $t_0=0$, $a_2 = a_1 = 1$, $a_0=2$.  Additionally, $x\left( t_0 =0 \right) = 0$ demands $x_0 \simeq 1.322$.  Plot of this curve is shown in Figs. \ref{fig:appendix example anal geodesics quadratic}, \ref{fig:appendix example anal future geodesics quadratic}, \ref{fig:appendix example anal past geodesics quadratic}, \ref{fig:appendix example anal lightcone geodesics quadratic}.
	\begin{figure}
		\centering
		\includegraphics[width=\linewidth]{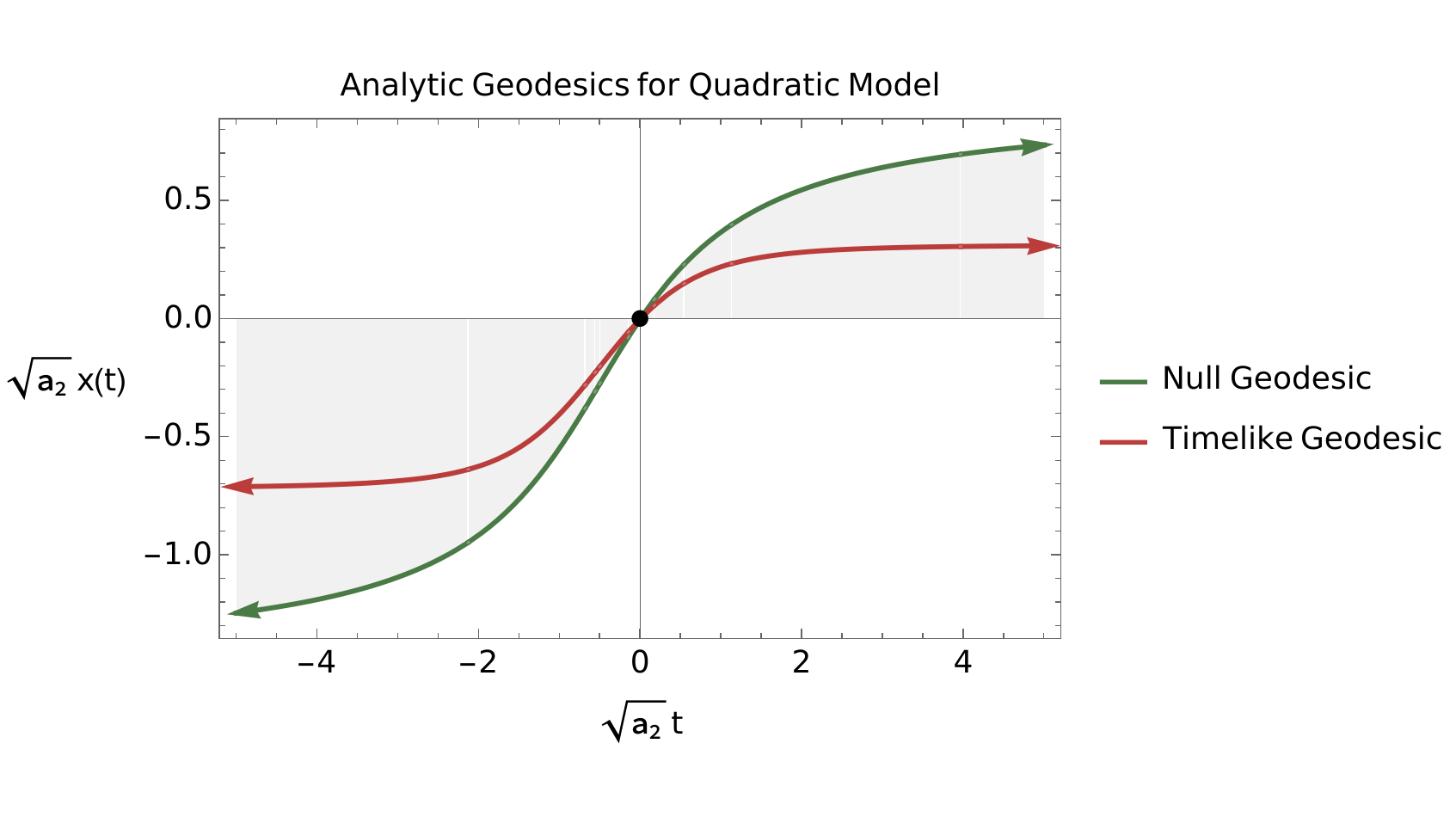}
		\caption{Plot $x\left( t\right)$ vs $t$ of null and timelike geodesics of $\mathbb{R}^1_1 \times_{a_2 t^2 + a_1 t + a_0 } \mathbb{R}^3$ with $a_2 = a_1 = 1$ and $a_0 = 2$.  Independent variable is the time coordinate.  All length units have been renormalized to be dimensionless rescaling by canonical length scale $\sqrt{a_2}$.  Initial null velocity is derived from frame $\left. \left[ 1 , \frac{1}{2} , 0 , 0 \right]^T \right|_{t = 0}$ and initial timelike velocity is derived from frame $\left. \left[ \sqrt{2} , \frac{1}{2} , 0 , 0 \right]^T \right|_{t = 0}$ a propos Eq. \ref{eq: appendix example geodesics omega 0 general GFRW calculation}.  The shown plot is a cylinder over constant coordinates $y,z$, hence two spacetime dimensions have been suppressed.  Partial light cones can be seen as the area under the null curve shaded light gray.  Small artifacts on plot are due to partial resolution of branch cuts in the complexified full analytic solution\footnote{Full resolution of branch cuts of the analytic solution of Eq. \ref{eq:appendix quadratic anal solution schematic} is beyond the scope of this paper: this is left as an exercise to the reader.  We would like to remind the reader, however, that Thm. \ref{ledthm} guarantees geodesic completeness of the solution, augmented by the complete numeric solutions of Figs. \ref{fig:appendix example numeric parametric geodesics quadratic}, \ref{fig:appendix example numeric parametric future geodesics quadratic}, \ref{fig:appendix example numeric parametric past geodesics quadratic}.  See \cite{Tupper2001} for more information.}.  Appropriate causal arclength was calculated to diverge as $t \rightarrow +\infty$; thus this model is geodesically complete. However, $H^-_{avg} = 0$ for computation over first limit asymptotic past $t \rightarrow - \infty$, $H^+_{avg} = 0$ for first limit asymptotic future $t \rightarrow + \infty$, and $H_{avg} > 0$ for a symmetric compact interval limit computations - see \cite{Lesnefsky:2022fen} for discussion of $H_{avg}$ limit order concerns. For the interval $\left[ -10 , 10 \right]$, one computes $H_{avg}^{\left[ -10 , 10 \right]} = \frac{\ln 28 - \ln 23}{20} \simeq 0.009835$.}
		\label{fig:appendix example anal geodesics quadratic}
	\end{figure}
    
    \begin{figure}
		\centering
		\includegraphics[width=\linewidth]{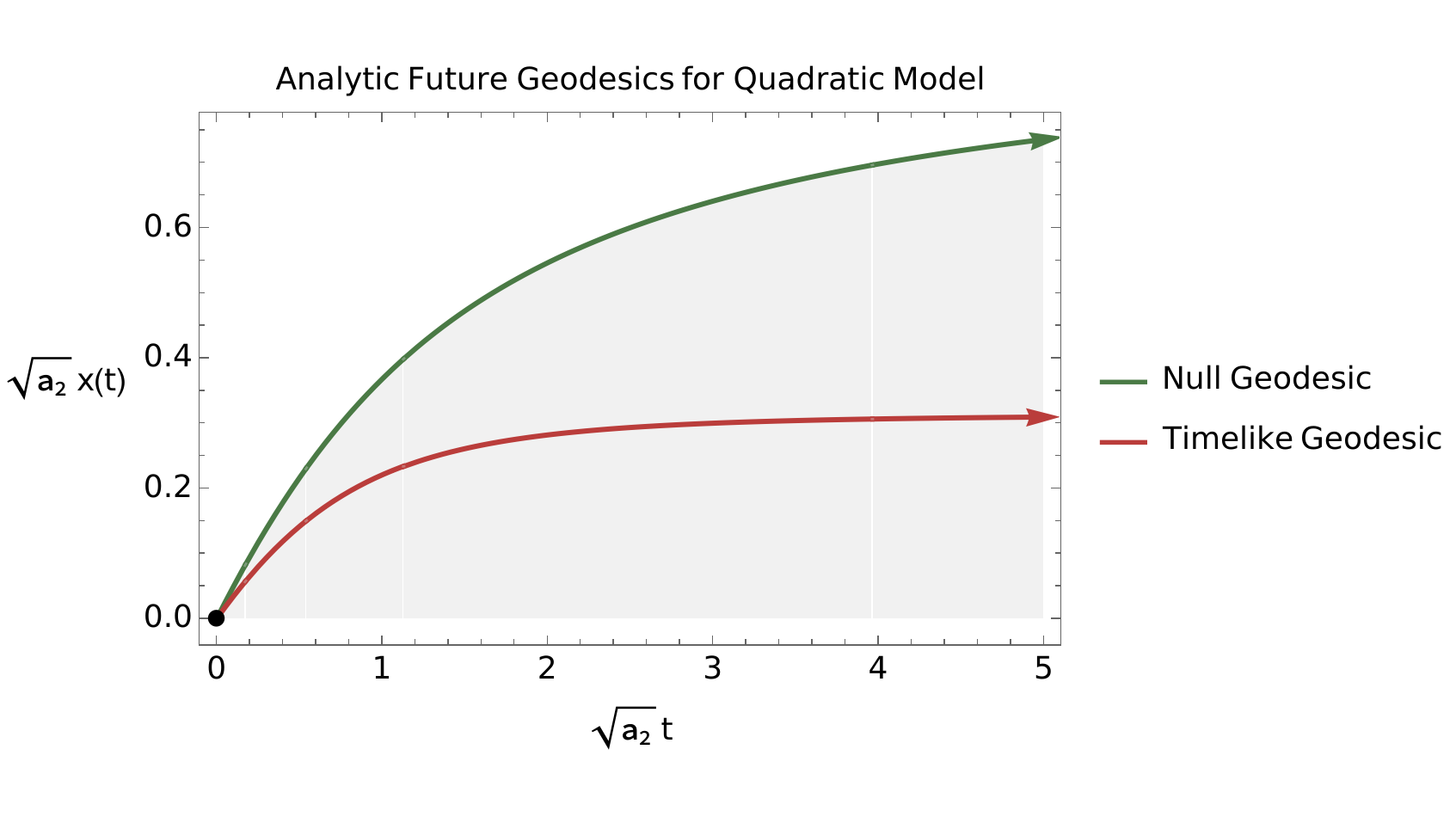}
		\caption{Additional future geodesic rays of the model and calculations of Fig. \ref{fig:appendix example anal geodesics quadratic}.}
		\label{fig:appendix example anal future geodesics quadratic}
	\end{figure}
    \begin{figure}
		\centering
		\includegraphics[width=\linewidth]{AnalyticFutureGeodesicQuadraticModel.png}
		\caption{Additional past geodesic rays of the model and calculations of Fig. \ref{fig:appendix example anal geodesics quadratic}.}
		\label{fig:appendix example anal past geodesics quadratic}
	\end{figure}
    
    \begin{figure}
		\centering
		\includegraphics[width=\linewidth]{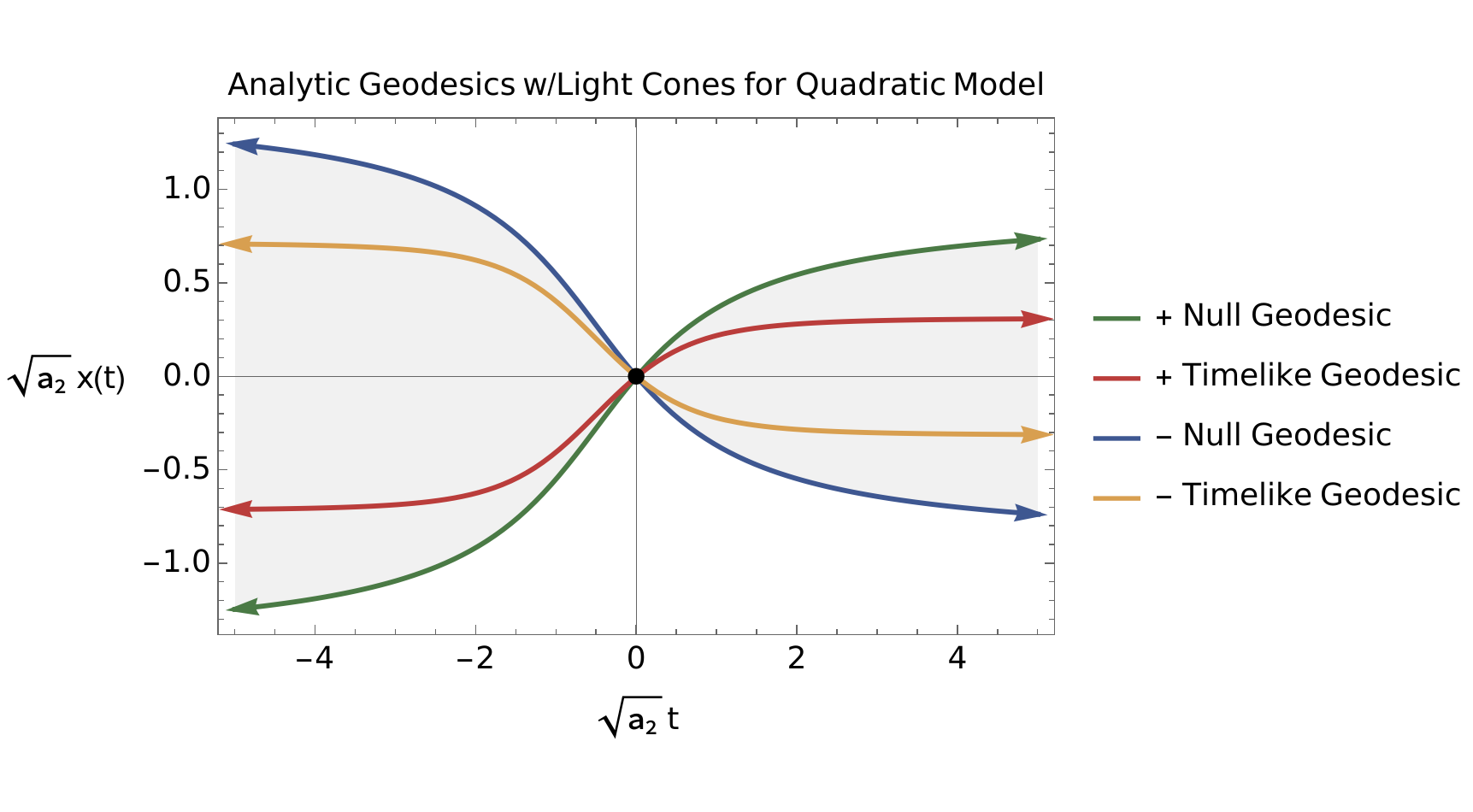}
		\caption{Additional geodesic plots of the model and calculations of Fig. \ref{fig:appendix example anal geodesics quadratic}.  Additional null and timelike geodesics were plotted as to show the light cone cylinder $\Lambda_{\left( 0 , 0,y,z \right)} \mathbb{R}^1_1 \times_{a_2 t^2 + a_1 t + a_0} \mathbb{R}^3$ over cylinder $t=0 , x=0 , y , z$.}
		\label{fig:appendix example anal lightcone geodesics quadratic}
	\end{figure}
    
Analytic null geodesics for the quadratic model can be calculated in a similar fashion to Sec. \ref{sec: appendix analytic geodesic +c}.  Geodesic null ODE can be calculated as
\begin{equation} \label{eq:appendix anal null geodesic quadratic model}
    \dot{x}\left( t\right) = \frac{1}{f\left( t\right)} = \frac{1}{a_2 t^2 + a_1 t +a_0}
\end{equation}
with dimensionless solution
\begin{equation} \label{eq:appendix anal null geodesic soltn quadratic model}
    \sqrt{a_2} x \left( \sqrt{a_2} t \right) = \frac{2 \arctan\left(\frac{2 \sqrt{a_2} t + \frac{a_1}{\sqrt{a_2}}}{\sqrt{4 a_0 - \frac{a_1^2}{a_2}}}\right)}{\sqrt{4 a_0 - \frac{a_1^2}{a_2}}} + \sqrt{a_2}\chi_0
\end{equation}
with integration constant $\chi_0$ corresponding to initial position $x_0$ vis a vis
\begin{equation}
    \chi_0 = x_0 -\frac{2 \arctan\left(\frac{2 \sqrt{a_2} t_0 + \frac{a_1}{\sqrt{a_2}}}{\sqrt{4 a_0 - \frac{a_1^2}{a_2}}}\right)}{\sqrt{a_2}\sqrt{4 a_0 - \frac{a_1^2}{a_2}}}
\end{equation}
An analytic null geodesic solution for the quadratic model with parameters $a_2 = a_1 = 1$ and $a_0 = 2$ is shown in Fig.  \ref{fig:appendix example anal geodesics quadratic}.

\newpage
\bibliography{physics}
\end{document}